\newtheorem{theorem}{Theorem}[section]
\newtheorem{corollary}{Corollary}[section]
\newtheorem{lemma}{Lemma}[section]
\newtheorem{Proposition}{Proposition}[section]
\theoremstyle{remark}
\newtheorem{definition}{Definition}[section]
\newtheorem{example}{Example}[section]
\newtheorem{remark}{Remark}[section]
\def\eqref#1{$(\ref{#1})$}
\def\H{\mathbb{H}}
\def\R{\mathbb{R}}
\def\C{\mathbb{C}}
\def\N{\mathbb{N}}
\def\Z{\mathbb{Z}}
\def\F{\Phi}
\def\Wr{W}
\def\wr{\omega}
\def\Mat{M}
\def\mat{m}
\def\cdet{\textup{cdet}}
\def\span{\textup{span}}
\def\QF{\hat\F}
\def\QK{\hat K}
\def\Qc{\hat c}
\def\qi{\textbf{i}}
\def\qj{\textbf{j}}
\def\qk{\textbf{k}}
\def\vi{v_p}
\def\vr{v_c}
\def\hatalpha{\hat\alpha}
\def\hatbeta{\hat\beta}
\def\hatlambda{\hat\lambda}
\def\tildealpha{\tilde\alpha}
\def\tildebeta{\tilde\beta}
\title{Quaternion-Valued Breather Soliton, Rational, and Periodic KdV Solutions}
\author{John Cobb,
Alex Kasman,
Albert Serna, and 
Monique Sparkman}
\address{Department of Mathematics, College of Charleston, Charleston
  SC 29401}
\begin{document}
\voffset-.5in%
\advance\textwidth0in\advance\textheight.5in%
\advance\leftmargin-1in\advance\rightmargin-1in%
\lineskip1.2\lineskip\baselineskip1.2\baselineskip%
\maketitle
\markright{Quaternion-Valued Soliton, Rational and Periodic KdV Solutions}

\begin{abstract}
Quaternion-valued solutions to the non-commutative KdV equation are
produced using determinants.  The solutions produced in this way
are (breather) soliton solutions, rational solutions, spatially
periodic solutions and hybrids of these three basic types.  A complete
characterization of the parameters that lead to non-singular 1-soliton
and periodic solutions is given.  Surprisingly, it is shown that such solutions are never singular when the solution is essentially non-commutative.  When a 1-soliton solution is
combined with another solution through an iterated Darboux
transformation, the result behaves asymptotically like a combination
of different solutions.  This ``non-linear superposition principle''
is used to find a formula for the phase shift in the general 2-soliton
interaction.  A concluding section compares these results with other
research on non-commutative soliton equations and lists some open
questions.
\end{abstract}

\section{Introduction}
\subsection{The KdV Equation}

The 
Korteweg-deVries  (KdV) Equation 
\begin{equation}
u_t=\frac{3}{2}uu_x+\frac{1}{4}u_{xxx}\label{eqn:stdKdV}
\end{equation} was originally derived in order to better understand the
solitary waves observed in 1834 by John Scott Russell on Union Canal in
Scotland \cite{KdV,KdV-hist,GOST}.  It is unusual among
nonlinear partial differential equations in that it is completely
integrable and so it is possible to write many of its solutions in
closed form.  Moreover, among those solutions are the multi-soliton
solutions that behave asymptotically like localized disturbances
traveling at constant speeds which exhibit a phase shift upon
interaction \cite{ZK,BKY}.  Even among other completely integrable differential
equations with soliton solutions, the KdV Equation holds a special
place because it was historically the first one recognized as having
these properties.

In the case that $u(x,t)$ takes values in some non-commutative
algebra, a natural generalization of the KdV Equation is the
symmetrized form:
\begin{equation}
u_t=\frac{3}{4}uu_x+\frac{3}{4}u_xu+\frac{1}{4}u_{xxx}.\label{eqn:KdV}
\end{equation}

 The purpose of this paper is to carefully study certain
 quaternion-valued solutions to \eqref{eqn:KdV}.  Although solutions
 to integrable equations such as KdV have been previously explored both in more general
 non-commutative settings \cite{EGR,NC1,NC2,NC3} and in the 
 quaternionic case \cite{QKdV}, the specific breather soliton,
 rational and periodic solutions investigated below, their
 construction in terms of quaternionic determinants, and their
 nonlinear superpositions have not previously been described.

\subsection{Quaternions}

The quaternions were first studied by William Rowan
Hamilton as a number system which generalized the complex numbers
\cite{Hamilton}.  Although their non-commutativity was a novelty in
1844, since the quaternions can be embedded into a matrix group, they
may not seem particularly interesting to a modern mathematical
physicist.  For many years they were seen as being ``old-fashioned'',
merely a historical stepping stone on the way to more general
non-commutative algebras.  However, recently they have
received an increasing amount of attention in relation to differential
equations and dynamical systems \cite{QDiff1,QDiff2,QDiff3,QDiff4}, for their uses in mathematical
physics and engineering \cite{QMP1,QMP2,QMP3,QMP4}, and even for their
unique algebraic structure \cite{Chen,QMatInv,Kyrchei}.  This
resurgence of interest in the quaternions shows that some
important properties of quaternionic solutions are not immediately
evident when they are viewed in the more general context of matrix
algebras and justifies the current investigation into the
quaternion-valued solutions of the KdV equation.

This section will briefly review some key properties of the quaternions and set up the terminology and notation to be used in the remainder of the paper.  For additional information, readers should consult References \cite{quaternion1,quaternion2}.

\subsubsection{Notation and Arithmetic}
The quaternions are the 4-dimensional real vector space
$$
\H=\{q_0+q_1\qi+q_2\qj+q_3\qk\ :\ q_i\in\R\}
$$
with multiplication satisfying the usual distributive and associative laws along with the identities
$$
\qi^2=\qj^2=\qk^2=\qi\qj\qk=-1
$$
which William Rowan Hamilton famously carved into Brougham Bridge in 1843.  However, the multiplication is not commutative because 
$$
\qi\qj=-\qj\qi,\ \qj\qk=-\qk\qj,\ \hbox{and}\ \qi\qk=-\qk\qi.
$$

If a letter is used to index a quaternion, then the subscripts $0$,
$1$, $2$, and $3$  on the same letter will denote the real numbers
which are its coefficients relative to the basis $\{1,\qi,\qj,\qk\}$
of $\H$. 
The length of a quaternion $q\in\H$ is
defined to be $|q|=\sqrt{q_0^2+q_1^2+q_2^2+q_3^2}$, its quaternionic
conjugate is $q^*=q_0-q_1\qi-q_2\qj-q_3\qk$, and if $q\not=0$ then it has a unique  multiplicative inverse
$
q^{-1}=\frac{1}{|q|}q^*
$.

It is often convenient to separate a quaternion
$q$ into its real part $q_0$ and vector part
$\vec q=q_1\qi+q_2\qj+q_3\qk$ where the latter is thought of as being an
element of $\R^3$.  Then, conjugating one element of $\H$
by another has an interpretation as a \textit{rotation} in
3-dimensional space in the following sense: If
$q$ and $g\not=0$ are quaternions then $q$ and
$
r=gqg^{-1}
$
are related by the facts that $q_0=r_0$ and that 
 $\vec r\in\R^3$ is a vector obtained by rotating
$\vec q$ through an angle depending only on the choice of $g$.
This sets up a well-known correspondence between unit quaternions and
rotations by which every rotation corresponds to either of two 
quaternions of length $1$.  The details of this correspondence will not be needed in
this paper.  However, a certain consequence of its existence will be:
\begin{Proposition}\label{prop:conj}
Two quaternions $q$ and $r$ satisfy $r=gqg^{-1}$ for
some quaternion $g$ if and only if $q_0=r_0$ and $|\vec q|=|\vec r|$.
\end{Proposition}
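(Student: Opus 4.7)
The plan is to prove the two directions separately, invoking only the two facts already recalled in the paper: conjugation $q \mapsto gqg^{-1}$ preserves the real part and acts as a rotation of the vector part, and every rotation of $\R^3$ arises from such a conjugation by some unit quaternion.

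For the forward direction, I would start by decomposing $q = q_0 + \vec q$. Since $q_0$ is a real scalar it commutes with $g$, so
\[
r = gqg^{-1} = q_0 + g\vec q g^{-1}.
\]
The term $g\vec q g^{-1}$ is a pure quaternion (this is precisely the content of the rotation picture already stated), so comparing real and vector parts gives $r_0 = q_0$ and $\vec r = g\vec q g^{-1}$. Taking lengths and using multiplicativity of the quaternion norm, $|\vec r| = |g|\,|\vec q|\,|g^{-1}| = |\vec q|$.

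For the converse, suppose $q_0 = r_0$ and $|\vec q| = |\vec r|$. Since $\vec q$ and $\vec r$ are vectors in $\R^3$ of the same Euclidean length, there is a rotation $R\in SO(3)$ with $R(\vec q) = \vec r$ (one may take, for instance, the rotation about the axis $\vec q \times \vec r$ through the angle between them, or the identity if $\vec q = \vec r$, or any rotation by $\pi$ about an axis perpendicular to $\vec q$ if $\vec r = -\vec q$). By the quaternion-rotation correspondence recalled above, there exists a unit quaternion $g$ with $g\vec q g^{-1} = \vec r$. Then
\[
gqg^{-1} = g(q_0 + \vec q)g^{-1} = q_0 + g\vec q g^{-1} = r_0 + \vec r = r,
\]
as required.

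I do not expect any serious obstacle: the real content, namely the surjectivity of the map from unit quaternions to $SO(3)$ and its preservation of vector-part lengths, is already asserted in the preceding paragraph and in standard references \cite{quaternion1,quaternion2}. The only small subtlety is handling the degenerate cases $\vec q = \pm \vec r$ when one wants to write down $g$ explicitly, but these are harmless since the mere existence of the rotation $R$ is all that the argument requires.
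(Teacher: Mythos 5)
Your proof is correct, and it follows exactly the route the paper itself indicates: the paper states Proposition~\ref{prop:conj} without a formal proof, presenting it as a consequence of the quaternion--rotation correspondence described in the preceding paragraph, which is precisely what you invoke (conjugation preserves the real part and rotates the vector part; surjectivity onto $SO(3)$ gives the converse). Your write-up simply makes explicit the details the authors chose to omit, including the harmless degenerate cases.
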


Exponential functions involving quaternions will be needed later in
this paper.  It is therefore useful to note that by writing $e^q$ as a
power series one can easily show that
\begin{equation}
e^q=e^{q_0}\left(\cos(|\vec q\,|)+\frac{\sin(|\vec q\,|)}{|\vec q\,|}\vec q\right)\label{eqn:exp}
\end{equation}
 for any  quaternion $q=q_0+\vec q$ with non-zero vector part.  Moreover, 
if $q$ and $r$ are quaternions that commute (i.e. $[q,r]=0$) 
then
$
e^{q}e^{r}=e^{q+r}
$.

\subsubsection{Quaternion-Valued Functions}

Throughout the remainder of this paper, $x$ and $t$ will
be real-valued variables and functions of these variables will take
values in $\H$.
Together, such quaternion-valued functions
will be taken to form a right module over the quaternions.  (Hence, any reference
to linear combinations of such functions will be considered to be a
sum of functions with quaternionic coefficients on the right.)

When
 a quaternion-valued function $f(x,t)$ is to be represented graphically, it will be
illustrated by graphing each component function $f_i(x,t)$ ($0\leq i
\leq 3$) separately on the same
set of axes for some fixed value of $t$. 
  So, for instance, the function
 $f(x,t)=(x+t)^{-2}+\sin(x)\qi + \cos(t)\qj+(x^2+1)^{-1}\qk$ is defined
 for all $(x,t)\in\{(x,t)\in R^2\ :\ x\not=-t\}$ and its graphical
 representation at a fixed value of $t$ would look like the
 superposition of four graphs, one having a pole at $x=-t$, a trig
 function, a horizontal line, and a function with one ``peak'' at $x=0$.

\subsubsection{Determinants of Quaternionic Matrices}\label{sec:Chen}

Interestingly, although there is no useful generalization of the
determinant to arbitrary non-commutative settings\footnote{The
  quasi-determinant \cite{EGR} is useful in non-commutative settings.
  However, it is not a \textit{generalization} of the determinant in
  that a quasi-determinant of a matrix which happens to have commuting
  entries is not equal to the determinant of that matrix.}, there
\textit{are} definitions for a determinant of a square matrix of
quaternions that generalizes the usual determinant and have
corresponding Cramer-like theorems \cite{Chen,Kyrchei}. 
By setting up notation and summarizing some prior results, this
section lays the foundation for the construction of quaternion-valued
KdV solutions using
these determinants in Theorem~\ref{thm:kdv}.

\begin{definition}\label{def:cycles}
Let $S_n$ denote the group
of permutations on $n$ elements. 
 A permutation $\sigma\in S_n$ is a
cycle 
$$
\sigma=(c_1c_2\cdots c_k)
$$
if $\sigma(j)=j$ for $j\not\in\{c_1,\ldots,c_k\}$ and
$\sigma(c_i)=c_{i+1}$ for $i<k$ and $\sigma(c_k)=c_1$.  
If a permutation $\sigma$ in the group $S_n$ of permutations on the
set $\{1,\ldots,n\}$ is a cycle, then it 
can be written in the normalized form $\sigma=(c_1c_2\cdots c_k)$ 
where $c_1>c_j$ for $j>1$.  Any permutation
$\sigma\in S_n$ has a unique factorization into normalized cycles
$$
\sigma=\sigma_1\cdots\sigma_r
$$
where for each $j$ one has $\sigma_j=(c^j_1c^j_2\cdots)$ and
$c^j_1>c^{j+1}_1$ (i.e. the sequence of first terms in the cycles
is decreasing) and where each element of $\{1,\ldots,n\}$ appears
exactly once (which requires including cycles of length $1$ for fixed
points of $\sigma$).
\end{definition}
\begin{definition}\label{def:MatrixStuff}
Let $\Mat=[\mat_{ij}]=[\vec \mat_i]$ be an $n\times n$
matrix with entries $\mat_{ij}$ from some non-commutative ring and column vectors $\vec \mat_i$. 
We denote by $\Mat_{\langle i\rangle}$ the matrix
obtained by exchanging the $i^{th}$ and $n^{th}$ columns of $\Mat$:
$$\Mat_{\langle i\rangle}=\left[\vec \mat_1\ \vec \mat_2\ \cdots
 \vec \mat_{i-1}\ \vec
 \mat_n\ \vec\mat_{i+1}\ \cdots\ \vec\mat_{n-1}\ \vec \mat_{i}\right]
$$
and let $\Mat_{\langle i,j\rangle}$ denote the matrix obtained by
replacing the $i^{th}$ column of $\Mat$ by its $n^{th}$ column and
replacing the $n^{th}$ column by the $n$-vector $\vec e_j$ whose only
non-zero entry is a $1$ in the $j^{th}$ position:
$$\Mat_{\langle i,j\rangle}=\left[\vec \mat_1\ \vec \mat_2\ \cdots
 \vec \mat_{i-1}\ \vec
 \mat_n\ \vec\mat_{i+1}\ \cdots\ \vec\mat_{n-1}\ \vec e_{j}\right]
$$
For a cycle $\sigma=(c_{1}\cdots
c_{k})\in S_n$  the symbol $M_{\sigma}$ denotes the ordered product
$$
M_{\sigma}=m_{c_1c_2}m_{c_2c_3}\cdots m_{c_{k-1}c_{k}} m_{c_kc_1}.
$$
\end{definition}

\begin{definition}\label{def:chen}
For an $n\times n$ matrix $\Mat=[\mat_{ij}]$ whose elements are from
some non-commutative ring, define the
  Chen Determinant $\cdet(\Mat)$ to be
$$
\cdet(\Mat)=\sum_{\sigma\in S_n}(-1)^{n-r}
M_{\sigma_1}M_{\sigma_2}\cdots M_{\sigma_r}
$$
where for each permutation
$\sigma=\sigma_1\cdots\sigma_r$ is the decomposition into normalized
cycles in Definition~\ref{def:cycles} and $M_{\sigma_j}$ is defined in Definition~\ref{def:MatrixStuff}.
\end{definition}

Note that if the elements of the matrix mutually commute, then
$\cdet(\Mat)=\det(\Mat)$ is the ordinary determinant of the matrix,
but if they do not then this definition specifies a unique ordering of the
factors.  If the elements $\mat_{ij}\in\H$ are quaternions, then it is
possible to solve the vector equation $\Mat v = w$ or to write the
inverse matrix $\Mat^{-1}$ in terms of these Chen determinants
\cite{Chen,Kyrchei}.  
This construction involves not only the matrix $M$ but also its
conjugate transpose
$\Mat^{\dagger}=[\mat_{ji}^*]$.

\begin{Proposition}\label{prop:inv}
An $n\times n$ matrix $\Mat$ of quaternions is invertible if and only if
the real number $\cdet(\Mat^{\dagger}\Mat)$ is non-zero.  If it is, then the $(i,j)$
entry of the matrix $\Mat^{-1}$ is
$$
\Mat_{ij}^{-1}=\frac{1}{\cdet(\Mat^{\dagger}\Mat)}\cdet\!\left(\Mat_{\langle
    i\rangle}^\dagger\Mat_{\langle i,j\rangle}\right).
$$
\end{Proposition}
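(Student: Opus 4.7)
The plan is to treat the two claims separately. First I would verify that the real number in the denominator really is real, then establish the invertibility equivalence, and finally verify the Cramer-type formula by a direct computation of $MN=I$ where $N$ is the matrix whose $(i,j)$ entry is declared on the right-hand side.

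For the first step, I would show that $\cdet(M^{\dagger}M)\in\R$. The matrix $H=M^{\dagger}M$ is Hermitian in the sense that $H^{\dagger}=H$, i.e. $h_{ij}=h_{ji}^*$. For any cycle $\sigma=(c_1c_2\cdots c_k)$ the product $H_{\sigma}=h_{c_1c_2}h_{c_2c_3}\cdots h_{c_kc_1}$ pairs with the product $H_{\sigma'}$ associated to the reversed cycle $\sigma'=(c_1c_kc_{k-1}\cdots c_2)$ as $H_{\sigma'}=H_{\sigma}^*$. Since Definition \ref{def:cycles} normalizes cycles so that the largest index comes first, each non-palindromic cycle and its reverse appear together, and cycles fixed by reversal contribute a real summand because $h_{ii}\in\R$. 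Grouping terms in the formula of Definition \ref{def:chen} this way shows $\cdet(H)$ equals a sum of real numbers.

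For the second step, note that $M$ is invertible if and only if $H=M^{\dagger}M$ is invertible, since $M^{-1}=H^{-1}M^{\dagger}$ when $H$ is invertible and $H^{-1}=M^{-1}(M^{\dagger})^{-1}$ when $M$ is invertible. So it suffices to show that a Hermitian matrix $H$ over $\H$ is invertible exactly when $\cdet(H)\neq 0$. I would pass to the standard complex adjoint representation $\chi\colon M_n(\H)\to M_{2n}(\C)$, where $H\mapsto \chi(H)$ is Hermitian in the complex sense, and reduce invertibility of $H$ to nonvanishing of $\det_\C(\chi(H))$. A careful bookkeeping of the Chen-determinant expansion for Hermitian matrices gives $\det_{\C}(\chi(H))=\cdet(H)^{2}$ (or at least that the two vanish simultaneously), which delivers the equivalence.

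For the third step, I would assemble $N$ with entries $N_{ij}=\cdet(M^{\dagger}M)^{-1}\cdet(M_{\langle i\rangle}^{\dagger}M_{\langle i,j\rangle})$ and check $MN=I$ column by column. The crucial identity is a non-commutative cofactor expansion: expanding $\cdet(M_{\langle i\rangle}^{\dagger}M_{\langle i,j\rangle})$ along the last column of $M_{\langle i,j\rangle}$, which is the basis vector $\vec e_j$, isolates contributions indexed by cycles touching position $n$. Multiplying on the left by the $(\ell,i)$ entries of $M$ and summing over $i$ produces, for $\ell=j$, exactly $\cdet(M^{\dagger}M)$, while for $\ell\neq j$ it produces an expression that would compute the Chen determinant of a Hermitian-like matrix with a repeated column, and which vanishes. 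This ``alternating'' property for matrices with two equal columns has to be established by pairing permutations that differ by a transposition, together with the observation that swapping the repeated columns converts one summand into the negative of another while preserving the specified ordering of factors.

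The main obstacle is the last step. Unlike the classical case, the Leibniz-style formula in Definition \ref{def:chen} dictates a rigid ordering of factors, and the column-replacement operations $M_{\langle i\rangle}$ and $M_{\langle i,j\rangle}$ in Definition \ref{def:MatrixStuff} are crafted precisely so that this ordering survives the cofactor expansion. Matching up normalized-cycle decompositions after a column swap, and verifying the cancellation for repeated columns, is a delicate combinatorial argument; the proofs in \cite{Chen,Kyrchei} carry out exactly this bookkeeping, and I would follow their strategy rather than reinvent it.
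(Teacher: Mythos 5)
The paper does not actually prove this proposition: as Remark~\ref{rem:notnew} says explicitly, Definition~\ref{def:chen} and Proposition~\ref{prop:inv} are quoted from \cite{Chen,Kyrchei} and merely rewritten in the notation of Definitions~\ref{def:cycles} and \ref{def:MatrixStuff}, so there is no in-paper argument to compare yours against. Your outline is a reasonable reconstruction of the strategy in those sources --- realness of $\cdet(\Mat^{\dagger}\Mat)$ by pairing cycles with their reversals, the reduction of invertibility to the complex adjoint representation, and the Cramer formula via a non-commutative cofactor expansion --- and you candidly defer the delicate combinatorial bookkeeping to the very references the paper itself cites, which is consistent with how the result is used here.

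One point would need more care if you carried out the first step in full. For a permutation decomposing into several normalized cycles, reversing each cycle replaces each factor $H_{\sigma_j}$ by its quaternionic conjugate $H_{\sigma_j}^{*}=H_{\sigma_j'}$, but the conjugate of the whole product satisfies $(H_{\sigma_1}\cdots H_{\sigma_r})^{*}=H_{\sigma_r'}\cdots H_{\sigma_1'}$, i.e.\ the order of the cycle factors is reversed as well; meanwhile the normalization of Definition~\ref{def:cycles} forces the term attached to the reversed permutation to be written as $H_{\sigma_1'}\cdots H_{\sigma_r'}$ in the original order. So for $r>1$ the paired terms are not conjugates of one another and their sum is not obviously real. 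This is precisely the sort of subtlety that Chen's proof must resolve, so your plan to follow \cite{Chen,Kyrchei} rather than reinvent the argument is sensible, but as written the realness step is incomplete rather than merely deferred.
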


\begin{remark}\label{rem:notnew}
Definition~\ref{def:chen} and Proposition~\ref{prop:inv} 
can be found in
References~\cite{Chen,Kyrchei}.  However, they have been
rewritten in the notation set up by Definitions~\ref{def:cycles} and
\ref{def:MatrixStuff} into a form that is more convenient for their
use in this paper.
\end{remark}

\section{Construction of Quaternion-Valued Solutions}

\subsection{KdV-Darboux Kernels}

\begin{definition}\label{def:KdVDarbouxKernel}
Let
$\F=\{\phi_1(x,t),\ldots,\phi_n(x,t)\}$ be a set of functions
$\phi_i:\R^2\to\H$ depending on the real variables $x$ and $t$ and
taking values in the set $\H$ of quaternions. We will call $\F$ a
\textit{KdV-Darboux Kernel} if it has the following properties:
\begin{itemize}
\item\textbf{Dispersion:} For each $1\leq i\leq n$, $\phi_i$ satisfies
  the linear equation
\begin{equation}
\frac{\partial^3\phi_i}{\partial x^3}=\frac{\partial\phi_i}{\partial t}.\label{eqn:dispersion}\end{equation}

\item\textbf{Closure:} For each $1\leq i\leq n$, the second derivative
  $(\phi_i)_{xx}$ is in $\span(\F)$, the right $\H$-module generated by $\F$:
\begin{equation}\frac{\partial^2\phi_i}{\partial x^2}\in\span(\F)=\left\{\sum_{j=1}^n\phi_j\alpha_j\ :\ \alpha_j\in\H\right\}.\label{eqn:closure}\end{equation}
\item\textbf{Independence:}
The $n\times n$ Wronskian matrix
$$
\Wr=[\wr_{ij}]\hbox{ with }
 \wr_{ij}=\frac{\partial^{i-1}\phi_j}{\partial x^{i-1}}$$ satisfies
$\cdet(\Wr^{\dagger}\Wr)\not\equiv0$ and hence is an invertible
matrix by Proposition~\ref{prop:inv}.
\end{itemize}
\end{definition}

  For the purposes of this paper, a differential operator of order $n$ is a polynomial $Q$ in the symbol $\partial$ of the form
  $$
  Q=\sum_{i=0}^n c_i(x,t)\partial^i
  $$ where $c_i(x,t)$ are meromorphic, infinitely differentiable
  functions from $\R^2$ to $\H$ with $c_n(x,t)\not=0$.  We say that
  this operator 
  is monic if $c_n\equiv1$.  These operators act
  on infinitely differentiable functions by the formula
  $$
 Q[f(x,t)]=\left(\sum_{i=0}^n c_i(x,t)\partial^i\right)[f]=\sum_{i=0}^n c_i(x,t)\frac{\partial^i f}{\partial x^i}.
  $$
  The product $Q_1\circ Q_2$ of two differential operators is defined to coincide with their composition as operators: $Q_1\circ Q_2[f(x,t)]=Q_1\left[ Q_2[f(x,t)]\right]$.

\subsection{KdV Solution Associated to a KdV-Darboux Kernel}

\begin{theorem}\label{thm:kdv} 
Let $\F=\{\phi_1,\ldots,\phi_n\}$ be a KdV-Darboux Kernel with
Wronskian matrix $\Wr$.  Then the quaternion-valued
function
\begin{equation}
u_{\F}(x,t)=\left[\frac{2}{\cdet(\Wr^{\dagger}\Wr)}\sum_{i=1}^n\frac{\partial^n\phi_i}{\partial
    x^n}\,\cdet\!\left(\Wr_{\langle i\rangle}^{\dagger}\Wr_{\langle i,n\rangle}\right)\right]_x
\label{eqn:uF}
\end{equation}
is a solution to the non-commutative KdV Equation \eqref{eqn:KdV}.
In the special case that $\F=\{\phi\}$ contains only one element, the
formula simplifies to
\begin{equation}
u_{\F}(x,t)=2\phi_{xx}\phi^{-1}-2\phi_x\phi^{-1}\phi_x\phi^{-1}.\label{eqn:uFone}
\end{equation}
\end{theorem}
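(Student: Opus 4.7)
The plan is to adapt the classical Darboux dressing construction to the non-commutative setting: from $\Phi$, I would build a monic order-$n$ differential operator $K$ whose sub-leading coefficient encodes $u_\Phi$, and then use the Closure and Dispersion hypotheses to produce an operator Lax pair whose compatibility condition is exactly \eqref{eqn:KdV}. Specifically, Independence together with Proposition~\ref{prop:inv} makes the Wronskian $W$ invertible, so the system $-\phi_i^{(n)}=\sum_{l=0}^{n-1}k_l\,\phi_i^{(l)}$ has a unique solution in quaternion-valued functions $k_l(x,t)$ and defines a monic operator $K=\partial^n+k_{n-1}\partial^{n-1}+\cdots+k_0$ with $K[\phi_i]=0$ for every $i$. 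The Cramer formula of Proposition~\ref{prop:inv} gives $k_{n-1}=-\sum_i\phi_i^{(n)}\cdot\frac{1}{\cdet(W^\dagger W)}\cdet(W^\dagger_{\langle i\rangle}W_{\langle i,n\rangle})$, so the bracketed expression in \eqref{eqn:uF} is precisely $-k_{n-1}$ and $u_\Phi=-2(k_{n-1})_x$.

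\textbf{Darboux factorization and Lax pair.} Closure provides \emph{constants} $\alpha_{ji}\in\H$ with $(\phi_i)_{xx}=\sum_j\phi_j\alpha_{ji}$, and constants pass unchanged through both differentiation and the left coefficients of $K$, so $K[\phi_j\alpha_{ji}]=K[\phi_j]\alpha_{ji}=0$; hence $K\circ\partial^2$ annihilates each $\phi_i$. Right-division by the monic $K$ (any remainder of order $<n$ that kills the independent collection $\Phi$ must vanish by invertibility of $W$) yields $K\partial^2=L\circ K$ for some order-$2$ operator $L$, and matching the coefficient of $\partial^n$ (using $\partial^2\circ k_{n-1}\partial^{n-1}=k_{n-1}\partial^{n+1}+2(k_{n-1})_x\partial^n+\cdots$) forces $L=\partial^2+u_\Phi$. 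Next, differentiating $K[\phi_i]=0$ in $t$ and applying Dispersion $\phi_{i,t}=\phi_{i,xxx}$ shows that $K\partial^3+K_t$ kills $\Phi$, so the same right-division argument yields $K\partial^3+K_t=BK$ for a monic order-$3$ operator $B$. Differentiating $K\partial^2=LK$ in $t$ and substituting $K_t=BK-K\partial^3$ together with $LK\partial=K\partial^3$ (and thus $LK\partial^3=K\partial^5$) produces $(L_t-[B,L])K=0$; because $K$ is monic, this forces the Lax equation $L_t=[B,L]$.

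\textbf{Extraction of the KdV equation.} Writing $L=\partial^2+u_\Phi$ and a generic monic $B=\partial^3+b_2\partial^2+b_1\partial+b_0$ and expanding $[B,L]=[B,\partial^2]+[B,u_\Phi]$ order by order, the requirement that the $\partial^3,\partial^2,\partial^1$ coefficients vanish (so that $[B,L]=L_t$ is zeroth-order) forces, up to harmless $x$-constants, $b_2=0$, $b_1=\frac{3}{2}u_\Phi$, and $b_0=\frac{3}{4}(u_\Phi)_x$. The remaining $\partial^0$ coefficient of $[B,L]$ then evaluates, using $[u_\Phi,u_\Phi]=0$ and $[b_0,u_\Phi]=\frac{3}{4}[(u_\Phi)_x,u_\Phi]$, to $\frac{1}{4}(u_\Phi)_{xxx}+\frac{3}{4}\bigl(u_\Phi(u_\Phi)_x+(u_\Phi)_x u_\Phi\bigr)$, and equating it with $L_t=(u_\Phi)_t$ recovers \eqref{eqn:KdV}. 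For $n=1$, one has $W=[\phi]$, $\cdet(W^\dagger W)=|\phi|^2$, $W_{\langle 1\rangle}=W$, and $W_{\langle 1,1\rangle}=[1]$, so \eqref{eqn:uF} collapses to $u_\Phi=2(\phi_x\phi^{-1})_x$, which expands via the product rule (and $(\phi^{-1})_x=-\phi^{-1}\phi_x\phi^{-1}$) to \eqref{eqn:uFone}.

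\textbf{Main obstacle.} The non-commutative bookkeeping is the chief technical burden. The closure coefficients $\alpha_{ji}$ being \emph{constant} quaternions (not $\H$-valued functions) is essential for the pull-through identity $K[\phi_j\alpha_{ji}]=K[\phi_j]\alpha_{ji}$ that underlies the factorization $K\partial^2=LK$; right-divisibility by $K$ relies on its monic leading coefficient (so that an operator killing $\Phi$ reduces by polynomial division to an order-$(n-1)$ remainder that must vanish by invertibility of $W$); and the commutators $[b_i,u_\Phi]$ that appear in $[B,L]$ are precisely what produce the symmetrized combination $u_\Phi(u_\Phi)_x+(u_\Phi)_x u_\Phi$ in \eqref{eqn:KdV} rather than a doubled $u_\Phi(u_\Phi)_x$.
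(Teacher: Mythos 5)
Your proposal is correct and follows essentially the same route as the paper: build the monic order-$n$ operator $K$ annihilating $\span(\F)$ via Cramer's rule with Chen determinants, use Closure to obtain the intertwining $K\circ\partial^2=L\circ K$ (identifying $u_\F=-2(k_{n-1})_x$), use Dispersion to obtain $K_t+K\circ\partial^3=B\circ K$, and expand the resulting Lax equation $L_t=[B,L]$ into \eqref{eqn:KdV}. The only difference is that you supply direct division-with-remainder arguments for the factorization steps that the paper imports from Theorems 3.6 and 5.1 of its reference \cite{KasOp} (and there is a harmless wording slip: the bracketed expression in \eqref{eqn:uF} is $-2k_{n-1}$, not $-k_{n-1}$, consistent with your stated conclusion $u_\F=-2(k_{n-1})_x$).
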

\begin{proof} 
As a consequence of the independence property
of Definition~\ref{def:KdVDarbouxKernel}, it follows from Theorem 3.6 in
Reference~\cite{KasOp} that
\begin{equation}
 K=\partial^n-\sum_{i=1}^n\sum_{j=1}^n\frac{\partial^n\phi_i}{\partial
   x^n}\Wr
^{-1}_{ij}\partial^{j-1}.\label{eqn:preKF}
\end{equation}
is the unique monic operator of order $n$ such that $\ker(K)=\span(\F)$.
Using Proposition~\ref{prop:inv}, equation \eqref{eqn:preKF} can be
rewritten in terms of Chen determinants as
\begin{equation}
 K=\partial^n-\frac{1}{\cdet(\Wr^{\dagger}\Wr)}\sum_{i=1}^n\sum_{j=1}^n\frac{\partial^n\phi_i}{\partial x^n}\cdet\!\left(\Wr_{\langle
    i\rangle}^\dagger\Wr_{\langle i,j\rangle}\right)
.\label{eqn:KF}
\end{equation}

The closure property of the KdV-Darboux kernel implies that each
element of $\F$ is in the kernel of the operator $K\circ \partial^2$.
Then, Theorem 5.1 in Reference \cite{KasOp} implies the existence of a
differential operator $L$ satisfying the intertwining relationship
\begin{equation}
K\circ \partial^2=L\circ K.\label{eqn:intertwining}
\end{equation}
Setting up notation for the coefficients of the operators $K$ and $L$,
let us write
$$
K=\partial^n+\sum_{i=0}^{n-1}c_i(x,t)\partial^i\qquad
\hbox{and}
\qquad
L=\partial^2+v(x,t)\partial+u_{\F}(x,t).
$$
Equating coefficients on each side of \eqref{eqn:intertwining} one
finds that $v(x,t)=0$ and $u_{\F}(x,t)=(-2c_{n-1})_x$.  (N.B.\
That the potential in the Schr\"odinger operator $L$ is $-2$ times the
$x$-derivative of the coefficient of $\partial^{n-1}$ in $K$ is a
useful observation which will be
referred to in several of the other proofs in this paper.)
The formula for $u_{\F}(x,t)$ in the claim can then be recovered by
isolating the coefficient $c_{n-1}$ from \eqref{eqn:KF}.

In the case where $\F$ contains only one element, it is clear that
$K=\partial-\phi_x\phi^{-1}$ since this is a monic differential
operator of order $1$ having $\phi$ in its kernel.  But, by the
argument above, this means that $u_{\F}=(2\phi_x\phi^{-1})_x$, which
expands to
the claimed formula.

All that remains is to demonstrate that $u_{\F}$ satisfies the KdV
equation, a fact that follows from the dispersion property of the
KdV-Darboux kernel using a standard technique in soliton theory which
is only briefly outlined below.

Differentiating 
$K(\phi_i)=0$ with respect to $t$, using the
dispersion relation to rewrite $t$ derivatives as $x$ derivatives and
again applying Theorem 5.1 from Reference \cite{KasOp}, one concludes that
$\dot K+K\circ \partial^3=M\circ K$ for some differential operator
$M$. Equating coefficients again one determines that
$M=\partial^3+\frac{3}{2}u\partial+\frac{3}{4}u_x$.  Furthermore,
differentiating \eqref{eqn:intertwining} with respect to  $t$ results
in the Lax equation $\dot L=M\circ L -L\circ M$.  Finally, expanding
out these products of differential operators that Lax equation is seen
to be equivalent to the non-commutative KdV equation \eqref{eqn:KdV}.
\end{proof}

\begin{remark}\label{rem:EGR}
This method of producing solutions to
\eqref{eqn:KdV} and the arguments in the proof are not very different
from those used in the seminal paper by Etingof, Gelfand and Retakh
\cite{EGR} where non-commutative solutions were produced using
\textit{quasi-determinants}.  However, the formula in
Theorem~\ref{thm:kdv} works for all KdV-Darboux kernels, even those
for which the Wronskian matrix contains many zero entries which 
impose obstacles to computing the quasi-determinant.  In addition, we wanted to take
advantage of the extra algebraic structure of the quaternions that
allows for solution of linear systems using the Chen determinant.
\end{remark}

% \begin{example} Let
%  $$
%  \F=\left\{e^{x+t}+e^{-x-t}\qj,e^{2x+8t}+e^{-2x-8t}\qk\right\}$$
%  be a KdV-Darboux kernel and let $\Wr$ be its $2\times 2$ Wronskian
%  matrix.  Then
%  $$
%  \cdet(\Wr^{\dagger}\Wr)=e^{-6 (3 t+x)}+e^{6 (3 t+x)}+9 e^{4 (t+x)-6
%    (3 t+x)}+9 e^{8 (4 t+x)-6 (3 t+x)}
%  ,
%  $$
%  $$
%  \cdet(\Wr_{\langle 1\rangle}^{\dagger}\Wr_{\langle 1,2\rangle})=
%  3 e^{-3 (5 t+x)}-e^{8 (4 t+x)-3 (5 t+x)}
%  +\left(4 e^{-t-x}
%  \right)\qi
%  +\left(3 e^{8 (4 t+x)-17 t-5 x}-e^{-17 t-5 x}
%  \right)\qj
%  +\left(-4 e^{t+x}
%  \right)\qk
%  $$
%  and
%  $$
%  \cdet(\Wr_{\langle 2\rangle}^{\dagger}\Wr_{\langle 2,2\rangle})=
%  +e^{4 (t+x)+6 t}+3 e^{6 t}
%  \left(-2 e^{-2 (4 t+x)}
%  \right)\qi
%  +\left(-2 e^{8 t+2 x}
%  \right)\qj
%  +\left(e^{-10 t-4 x}+3 e^{-6 t}
%  \right)\qk.
%  $$
%  Then
%  $$u_{\F}(x,t)=\frac{1}{\left(9 e^{4 (t+x)}+e^{12 (3 t+x)}+9 e^{8 (4 t+x)}+1\right)^2
%  }\left(144 e^{4 (t+x)}+576 e^{4 (t+x)+12 (3 t+x)}+1440 e^{4 (t+x)+8 (4
%    t+x)}+144 e^{4 (t+x)+16
%     (4 t+x)}+576 e^{4 (t+x)+4 (7 t+x)}\right.
%  $$
%  $$\left.
%  +\left(-24 e^{2 (t+x)}+216 e^{6 (t+x)}+24 e^{2 (t+x)+68 t+20 x}-528
%  e^{2 (t+x)+12 (3 t+x)}+528
%     e^{2 (t+x)+8 (4 t+x)}-216 e^{2 (t+x)+16 (4 t+x)}\right)\qj
%  +\left(96 e^{4 (4 t+x)}-864 e^{12 (4 t+x)}-96 e^{4 (4 t+x)+40 t+16
%    x}+192 e^{4 (t+x)+4 (4
%     t+x)}+864 e^{8 (t+x)+4 (4 t+x)}-192 e^{12 (3 t+x)+4 (4 t+x)}
%  \right)\qk\right)
%  $$
%  is a solution of the KdV equation.
% \end{example}

\begin{example}\label{examp:ratl} Let $\F$ be the KdV-Darboux kernel
$$
\F=\left\{x^3+x^2\qi+6t+\qk,6x+2\qi\right\}$$
 and let $\Wr$ be its $2\times 2$ Wronskian
matrix.  There are two terms in the sum in formula \eqref{eqn:uF}, but
the second term will be zero since it has the second derivative of
$6x+2\qi$ as a factor.
So, we multiply
$$
\cdet(\Wr_{\langle 1\rangle}^{\dagger}\Wr_{\langle 1,2\rangle})=
72 x^4-216  x t
+\left(-72 t-48 x^3-8 x
\right)\qi
+12
\qj
+36x
\qk
$$
on the left by $(x^3+x^2\qi+6t+\qk)_{xx}=6x+2\qi$, and multiply that by
the real-valued function
$$
\frac{2}{\cdet(\Wr^{\dagger}\Wr)}=\frac{1}{648 t^2-432  x^3 t+144 t
  x+72 x^6+24 x^4+8 x^2+18}
$$
to get
$$
\frac{-2592  x^2 t+288 t+864 x^5+192 x^3+32 x
+\left(-1728  xt-288  x^4-96 x^2\right)\qi+\left(432 x^2+48\right)\qk
}{324 t^2-216  x^3t+72 xt+36 x^6+12 x^4+4 x^2+9}
$$
The solution $u_{\F}(x,t)$ of 
 \eqref{eqn:KdV} is the $x$-derivative of the expression above.
\end{example}

\subsection{Lemmas Relating Different KdV-Darboux Kernels}

The map associating a KdV-Darboux kernel $\F$ to the corresponding
solution $u_{\F}$ actually depends only on $\span(\F)$:
\begin{lemma}\label{lem:samespan}If $\F$ and $\hat\F$ are KdV-Darboux
  kernels which span the same right $\H$ module, then they produce the
  same KdV solution.
\medskip

\noindent In particular, if $\F=\{\phi_1,\ldots,\phi_n\}$ and
$\hat\F=\{\phi_1q_1,\ldots,\phi_nq_n\}$, for some $q_i\in\H$ such that $q_i\not=0$, then
$
u_{\F}(x,t)=u_{\hat\F}(x,t)$.
\end{lemma}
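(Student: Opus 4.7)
The plan is to exploit the uniqueness statement already used inside the proof of Theorem~\ref{thm:kdv}: the solution $u_{\F}$ is not really a function of $\F$ itself but of the monic operator $K$ produced by the quoted Theorem 3.6 of \cite{KasOp}, and that operator is determined purely by $\ker(K)=\span(\F)$. So the whole argument should reduce to invoking that uniqueness twice.

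First I would verify the hypothesis that $\hat\F$ is itself a KdV-Darboux kernel under the assumption $\span(\hat\F)=\span(\F)$. Dispersion \eqref{eqn:dispersion} is a right-$\H$-linear condition, so it is preserved by taking right-$\H$-linear combinations of the $\phi_i$, and in particular by replacing $\F$ by any other spanning set of the same right module. Closure \eqref{eqn:closure} says $(\phi_i)_{xx}\in\span(\F)=\span(\hat\F)$, and the same holds for the $\hat\phi_i$ since each $\hat\phi_i$ is a right-$\H$-linear combination of the $\phi_j$ and differentiation is additive. Independence will be checked after the main step, since it is essentially automatic in the special case of the addendum: if $q_i\neq 0$ then $\hat\phi_i=\phi_iq_i$ has Wronskian matrix $\hat\Wr=\Wr\,D$ with $D=\mathrm{diag}(q_1,\dots,q_n)$ invertible in the matrix ring, so $\cdet(\hat\Wr^\dagger\hat\Wr)\not\equiv 0$ follows from Proposition~\ref{prop:inv}. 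For the general case, independence of $\hat\F$ follows because it is a spanning set of a free right $\H$-module of rank $n$.

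Next comes the main step. By Theorem 3.6 of \cite{KasOp} (invoked inside the proof of Theorem~\ref{thm:kdv}) there is a \emph{unique} monic differential operator $K$ of order $n$ with $\ker(K)=\span(\F)$, and similarly a unique monic operator $\hat K$ of order $n$ with $\ker(\hat K)=\span(\hat\F)$. Since $\span(\F)=\span(\hat\F)$ by hypothesis, uniqueness forces $K=\hat K$. Writing
$$
K=\partial^n+\sum_{i=0}^{n-1}c_i(x,t)\partial^i,\qquad \hat K=\partial^n+\sum_{i=0}^{n-1}\hat c_i(x,t)\partial^i,
$$
equality of the operators implies $c_{n-1}=\hat c_{n-1}$. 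But the parenthetical remark inside the proof of Theorem~\ref{thm:kdv} identifies the associated KdV potential as $-2$ times the $x$-derivative of this coefficient, i.e.\ $u_{\F}=(-2c_{n-1})_x=(-2\hat c_{n-1})_x=u_{\hat\F}$.

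Finally, the ``in particular'' assertion is an immediate corollary: if $q_i\in\H\setminus\{0\}$ then $\phi_iq_i=\phi_i\cdot q_i\in\span(\F)$ and conversely $\phi_i=(\phi_iq_i)\cdot q_i^{-1}\in\span(\hat\F)$, so the two sets span the same right $\H$-module and the main statement applies. The only potential obstacle I anticipate is making sure one is entitled to apply Theorem 3.6 of \cite{KasOp} in the non-commutative quaternionic setting with the ``right module'' convention fixed at the start of the paper, but since that theorem is exactly the tool used to produce $K$ in the first place, no additional work should be required.
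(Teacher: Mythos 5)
Your proof is correct and follows essentially the same route as the paper's: both arguments reduce the claim to the fact that $u_{\F}=(-2c_{n-1})_x$ depends only on the unique monic order-$n$ operator whose kernel is $\span(\F)$, which is unchanged when $\F$ is replaced by another spanning set of the same right $\H$-module. Your extra verifications (that $\hat\F$ in the ``in particular'' clause is again a KdV-Darboux kernel, via $\hat\Wr=\Wr D$) are harmless additions that the paper leaves implicit.
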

\begin{proof}
Suppose $\F$ and $\hat\F$ are KdV-Darboux kernels such that
 $\span(\F)=\span(\hat\F)$.  By the Independence property
(cf.\ Definition~\ref{def:KdVDarbouxKernel}), we know that this is a
\textit{free} module and since $\H$ is a division ring, they have the
same dimension, which we will call $n$.  As can be seen in the proof
of Theorem~\ref{thm:kdv},  $u_{\F}=(-2c_{n-1})_x$ where $c_{n-1}$ is
the coefficient of $\partial^{n-1}$ in  the unique monic differential
operator of order $n$ having the elements of $\F$ in its kernel.
However, by assumption, the elements of $\hat\F$ are linear
combinations of those elements of $\F$ with constant coefficients on the right and hence
are also in the kernel of this same operator.  Consequently,
the same operator is associated to $\hat\F$ and the solution
$u_{\hat\F}$ produced from it is also the same.
\end{proof}

However, spanning the same right $\H$ module is not the only way two
KdV-Darboux kernels can correspond to the same solution.  The
following lemma shows that they do not even have to have the same
number of elements:
\begin{lemma}\label{lem:dropone}
If $\F=\{\phi_1,\ldots,\phi_n\}$ is a KdV-Darboux kernel and
$\phi_n=\alpha e^{\lambda x+\lambda^3t}$ for some $\alpha,\lambda\in\H$ 
then
$u_{\F}(x,t)=u_{\QF}(x,t)$ where
$$
\QF=\{Q(\phi_1),\ldots,Q(\phi_{n-1})\}\qquad \hbox{and}\qquad
Q(f)=f_x-\alpha \lambda\alpha^{-1}f.
$$
\end{lemma}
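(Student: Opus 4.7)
The plan is to recognize that $Q$ is itself a first-order differential operator whose kernel contains $\phi_n$, and to relate the monic operator $K$ from the proof of Theorem~\ref{thm:kdv} to the corresponding operator $\tilde K$ for $\QF$ through the factorization $K=\tilde K\circ Q$.

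Setting $\beta=\alpha\lambda\alpha^{-1}$, the operator $Q=\partial-\beta$ has constant coefficients in $x$ and $t$, so it commutes with $\partial_t-\partial^3$ and with $\partial^2$, and it is right-$\H$-linear. A direct computation gives $Q(\phi_n)=\alpha\lambda e^{\lambda x+\lambda^3t}-\alpha\lambda\alpha^{-1}\cdot\alpha e^{\lambda x+\lambda^3t}=0$. From these properties I would first verify that $\QF$ is itself a KdV-Darboux kernel: dispersion follows because $Q$ commutes with $\partial_t-\partial^3$, and closure follows because $(Q\phi_i)_{xx}=Q(\phi_{i,xx})\in Q(\span(\F))=\span(\QF)$, the last equality using $Q(\phi_n)=0$.

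The heart of the proof is the factorization $K=\tilde K\circ Q$. Since $\tilde K\circ Q$ is monic of order $n$ and annihilates every element of $\F$ (for $i<n$, $Q(\phi_i)\in\span(\QF)=\ker(\tilde K)$; and $Q(\phi_n)=0$), the uniqueness of such an operator, established in the proof of Theorem~\ref{thm:kdv}, forces $K=\tilde K\circ Q$. Writing $K=\partial^n+c_{n-1}\partial^{n-1}+\cdots$ and $\tilde K=\partial^{n-1}+\tilde c_{n-2}\partial^{n-2}+\cdots$ and using that $\partial^{n-1}\beta=\beta\partial^{n-1}$ as operators since $\beta$ is constant, expansion of $\tilde K\circ(\partial-\beta)$ yields $c_{n-1}=\tilde c_{n-2}-\beta$. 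Because $\beta_x=0$, the observation in the proof of Theorem~\ref{thm:kdv} that the KdV potential equals $-2$ times the $x$-derivative of this subleading coefficient produces
$$u_{\F}=-2(c_{n-1})_x=-2(\tilde c_{n-2})_x=u_{\QF}.$$

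The main subtlety I anticipate is the independence axiom for $\QF$, needed to ensure $u_{\QF}$ is even defined. A dependence $\sum_{i=1}^{n-1}Q(\phi_i)q_i=0$ forces $f:=\sum_{i=1}^{n-1}\phi_iq_i$ to satisfy $f_x=\beta f$; combining this with the dispersion relation $f_t=f_{xxx}$ gives $f_t=\beta^3 f$, so $f=e^{\beta x+\beta^3 t}r=\phi_n\alpha^{-1}r$ for a constant $r\in\H$, at which point the independence of $\F$ forces each $q_i=0$. Handling this cleanly, together with any minor care needed for the identity $\partial\beta=\beta\partial$ when $\beta$ is a non-real quaternionic constant, is the only delicate part of the argument.
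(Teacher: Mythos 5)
Your proof is correct and follows essentially the same route as the paper's: both factor the order-$n$ operator for $\F$ as $\tilde K\circ Q$ with $Q=\partial-\alpha\lambda\alpha^{-1}$, invoke uniqueness of the monic annihilator, and read off $c_{n-1}=\tilde c_{n-2}-\alpha\lambda\alpha^{-1}$ so that the constant term vanishes under $\partial_x$. Your explicit verification that $\QF$ satisfies the dispersion, closure, and (especially) independence axioms is a welcome addition the paper omits, but it does not change the substance of the argument.
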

\begin{proof}
Let $\QK=\partial^{n-1}+\sum_{i=0}^{n-2}\Qc_i(x,t)\partial^i$ be the
unique monic differential operator of order $n-1$
 having
the elements of $\QF$ in its kernel.  Then we know from the proof of
Theorem~\ref{thm:kdv} that $u_{\QF}=(-2c_{n-2})_x$.

Let $Q=\partial-\alpha \lambda \alpha^{-1}$ be the monic differential
operator of order $1$ with $\phi_n$ in its kernel.  Define
$K=\QK\circ Q$ and note that
$K=\partial^n+\sum_{i=0}^{n-1}c_n(x,t)\partial^i$ is a monic
differential operator of order $n$.  Now consider
$$
K(\phi_i)=\QK\circ Q(\phi_i)=\QK(Q(\phi_i)).
$$
For $i<n$ it is zero since $\QK$ was constructed so that $Q(\phi_i)$
is in its kernel and for $i=n$ this is zero because $Q(\phi_n)=0$.
Then $K$ must be the unique monic differential operator of order $n$
having the elements of $\F$ in its kernel and $u_{\F}=(-2c_{n-1})_x$.

Expanding the product $\QK\circ Q$ we find that the coefficient of
$\partial^{n-1}$ is $c_{n-1}=\Qc_{n-2}-\alpha^{-1}\lambda\alpha$.
Since the second term is constant and has derivative equal to zero, we
conclude that $u_{\F}=(-2c_{n-1})_x=(-2\Qc_{n-2})_x=u_{\QF}$.
\end{proof}
For example, for any quaternions $\alpha$ and $\lambda$ (with
$\alpha\not=0$) the two-element KdV-Darboux kernel
$\F=\{x,\alpha e^{\lambda x+\lambda^3t}\}$ and the single-element
KdV-Darboux kernel $\hat\F=\{1-\alpha^{-1}\lambda\alpha x\}$ produce
the same solution.

Finally, we note that multiplying every element of the KdV-Darboux
kernel on the left by the same non-zero quaternion has the effect of
\textit{rotating} the corresponding solution:
\begin{lemma}\label{lem:rot}
Let $\F=\{\phi_1,\ldots,\phi_n\}$ be a KdV-Darboux kernel, 
$q\in\H$ a non-zero quaternion, and
$\hat\F=\{q\phi_1,\ldots,q\phi_n\}$. Then the solutions $u_{\F}(x,t)$ and
$u_{\hat\F}(x,t)$ are related by the formula
$$
u_{\hat\F}=qu_{\F}q^{-1}.
$$
\end{lemma}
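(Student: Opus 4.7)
The plan is to mimic the strategy of Lemmas~\ref{lem:samespan} and \ref{lem:dropone} by working with the unique monic order-$n$ differential operator whose kernel is spanned by the given KdV-Darboux kernel, and exploiting the formula $u_{\F}=(-2c_{n-1})_x$ where $c_{n-1}$ is the coefficient of $\partial^{n-1}$ (noted in the proof of Theorem~\ref{thm:kdv}). The first step is a quick verification that $\hat\F$ is itself a KdV-Darboux kernel: dispersion holds because $q$ is constant, closure follows from $(q\phi_i)_{xx}=q(\phi_i)_{xx}=\sum_j (q\phi_j)\alpha_{ji}$ when $(\phi_i)_{xx}=\sum_j \phi_j\alpha_{ji}$, and the independence condition is inherited from $\F$ since the Wronskian of $\hat\F$ is $qW$, so $\hat W^\dagger \hat W=|q|^2 W^\dagger W$ and $|q|^{2}$ is a nonzero real scalar.

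Next, let $K=\partial^n+\sum_{i=0}^{n-1}c_i(x,t)\partial^i$ be the unique monic differential operator of order $n$ with $\ker(K)=\span(\F)$, and define
$$
\hat K=\partial^n+\sum_{i=0}^{n-1}\bigl(qc_i(x,t)q^{-1}\bigr)\partial^i,
$$
i.e. conjugate each coefficient by $q$. Since $q$ is a constant quaternion, it commutes with $\partial$ as an operator on smooth functions, so for any $\phi_j\in\F$,
$$
\hat K(q\phi_j)=\sum_{i=0}^{n}qc_iq^{-1}(q\phi_j)^{(i)}=\sum_{i=0}^{n}qc_i\phi_j^{(i)}=qK(\phi_j)=0,
$$
with the convention $c_n\equiv 1$. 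Hence $\hat K$ is a monic order-$n$ operator annihilating every element of $\hat\F$, and by the uniqueness statement used in the proof of Theorem~\ref{thm:kdv}, $\hat K$ is \emph{the} monic operator associated to $\hat\F$.

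Reading off its coefficient of $\partial^{n-1}$ gives $\hat c_{n-1}=qc_{n-1}q^{-1}$, and since $q$ is independent of $x$,
$$
u_{\hat\F}=(-2\hat c_{n-1})_x=\bigl(-2qc_{n-1}q^{-1}\bigr)_x=q(-2c_{n-1})_xq^{-1}=qu_{\F}q^{-1},
$$
which is the desired formula. There is no real obstacle here; the only thing to be careful about is the convention that the coefficients of a differential operator multiply functions on the left, which makes the conjugation $K\mapsto qKq^{-1}$ (on coefficients) the natural operation pairing with left multiplication of the kernel elements by $q$.
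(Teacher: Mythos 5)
Your proof is correct and follows essentially the same route as the paper's: both conjugate the unique monic annihilating operator $K$ by $q$ (your coefficient-wise conjugation is exactly $qKq^{-1}$ since the constant $q$ commutes with $\partial$), identify it as the operator for $\hat\F$ by uniqueness, and read off $u_{\hat\F}=(-2qc_{n-1}q^{-1})_x=qu_{\F}q^{-1}$. Your preliminary check that $\hat\F$ is itself a KdV-Darboux kernel is a small addition the paper leaves implicit, but it does not change the argument.
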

\begin{proof}
Let $K$ be the monic differential operator of order $n$ having the
elements of $\F$ in its kernel.  Note that $\hat K=qKq^{-1}$ is a monic
differential operator of order $n$ and that $\hat
K(q\phi_i)=qKq^{-1}(q\phi_i)=qK(\phi_i)=0$.  Consequently, $\hat K$ is
the unique monic differential operator of order $n$ having the
elements of $\hat \F$ in its kernel.  Letting $c_{n-1}(x,t)$ and $\hat
c_{n-1}(x,t)$ denote the coefficient of $\partial^{n-1}$ in $K$ and
$\hat K$ respectively we have by the definition of $\hat K$ that $\hat
c_{n-1}=qc_{n-1}q^{-1}$.  Then $$u_{\hat \F}=(-2\hat c_{n-1})_x=(-2q c_{n-1}q^{-1})_x=q(-2c_{n-1})_xq^{-1}=qu_{\F}q^{-1}.$$
\end{proof}

\section{Basic Solution Types}

There are three kinds of non-trivial quaternion-valued solutions to
\eqref{eqn:KdV} that can be produced using a KdV-Darboux kernel with
one element: localized breather solitons, translating periodic
solutions, and rational solutions.  

\subsection{$1$-Soliton and Translating Periodic Solutions}\label{sec:uabl}

Section~\ref{sec:uabl} will consider the solutions
associated to  KdV-Darboux kernels
of the form $\{\phi_{\alpha,\beta,\lambda}\}$ where
\begin{equation}
\phi_{\alpha,\beta,\lambda}(x,t)=\alpha e^{\lambda x+\lambda^3t}+\beta
e^{-\lambda x-\lambda^3t}\label{eqn:phiabl}
\end{equation}
for some choice of $\alpha$, $\beta$ and $\lambda$ in $\H$.  For
convenience, we will write $u_{\alpha,\beta,\lambda}(x,t)$ for the
corresponding KdV solution
$$
u_{\alpha,\beta,\lambda}(x,t)=u_{\{\phi_{\alpha,\beta,\lambda}\}}.
$$

In fact, it is not necessary to consider \textit{all} possible
combinations of quaternions $\alpha$, $\beta$ and $\lambda$.
First, we will assume that 
$\alpha\beta\lambda \not=0$.  This both guarantees that $\{\phi_{\alpha,\beta,\lambda}\}$ is a
KdV-Darboux kernel (which fails to be the case when $\alpha=\beta=0$)
and eliminates the cases in which
$u_{\alpha,\beta,\lambda}(x,t)\equiv0$ is the trivial solution.

Furthermore, one can greatly restrict the selection of the parameter
$\lambda$ without losing any corresponding KdV solutions.
\begin{lemma}\label{lem:complexlambda}
Let $\alpha$, $\beta$ and $\lambda$ be quaternions such that
$\alpha\beta \lambda\not=0$.  Then there are quaternions $\hat\alpha$ and
$\hat\beta$ and a complex number $\hat \lambda=\hat \lambda_0+\hat \lambda_1\qi$ with
$\hat\lambda_0\geq 0$ and $\hat\lambda_1\geq0$ such that
$$
u_{\alpha,\beta,\lambda}(x,t)=u_{\hat \alpha,\hat\beta,\hat \lambda}(x,t).
$$
\end{lemma}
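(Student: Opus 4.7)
The approach is to exploit three invariances of the map $(\alpha,\beta,\lambda)\mapsto u_{\alpha,\beta,\lambda}$ that can be read directly from \eqref{eqn:phiabl} and the preceding lemmas: right-multiplication of the kernel by a non-zero quaternion preserves $u_{\F}$ (by Lemma \ref{lem:samespan}), the symbol $\lambda$ can be replaced by $-\lambda$ provided $\alpha$ and $\beta$ are swapped, and quaternionic conjugation commutes with the exponential, in the sense that $g^{-1}e^{\mu}g=e^{g^{-1}\mu g}$ for any non-zero $g\in\H$ (a straightforward consequence of the power-series definition together with $g^{-1}\mu^k g=(g^{-1}\mu g)^k$).

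The first step is to verify the factorization
\[
\phi_{\alpha g,\, \beta g,\, g^{-1}\lambda g}(x,t)=\phi_{\alpha,\beta,\lambda}(x,t)\,g
\]
by inserting $g\cdot g^{-1}$ between each coefficient and its exponential and using the identity above together with $(g^{-1}\lambda g)^3=g^{-1}\lambda^3 g$. The ``In particular'' statement of Lemma \ref{lem:samespan} then yields $u_{\alpha g,\,\beta g,\,g^{-1}\lambda g}=u_{\alpha,\beta,\lambda}$, so $\lambda$ may be freely replaced by any of its quaternionic conjugates at the cost of right-multiplying $\alpha$ and $\beta$ by the conjugating element. The second step simply records the symmetry $\phi_{\alpha,\beta,\lambda}=\phi_{\beta,\alpha,-\lambda}$, which gives $u_{\alpha,\beta,\lambda}=u_{\beta,\alpha,-\lambda}$.

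The third step applies Proposition \ref{prop:conj}: $\lambda$ is conjugate to $\lambda_0+|\vec\lambda|\qi$, and $-\lambda$ is conjugate to $-\lambda_0+|\vec\lambda|\qi$. If $\lambda_0\geq 0$, I choose $g$ with $g^{-1}\lambda g=\lambda_0+|\vec\lambda|\qi$ and take $\hat\alpha=\alpha g$, $\hat\beta=\beta g$, $\hat\lambda=\lambda_0+|\vec\lambda|\qi$. If $\lambda_0<0$, I first apply the swap to reach $u_{\beta,\alpha,-\lambda}$, and then choose $g$ with $g^{-1}(-\lambda)g=-\lambda_0+|\vec\lambda|\qi$, taking $\hat\alpha=\beta g$, $\hat\beta=\alpha g$, $\hat\lambda=-\lambda_0+|\vec\lambda|\qi$. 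In both cases $\hat\lambda=|\lambda_0|+|\vec\lambda|\qi$, which is complex with $\hat\lambda_0\geq 0$ and $\hat\lambda_1\geq 0$ as required.

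No step poses a serious obstacle. The only delicate point is the first step, and it rests merely on the fact that $\mu\mapsto g^{-1}\mu g$ is a ring automorphism of $\H$, so commutes with the exponential power series. The case-split on the sign of $\lambda_0$ is handled cleanly by the swap symmetry, which is the reason the hypothesis requires only $\hat\lambda_0\geq 0$ rather than $\hat\lambda_0=\lambda_0$.
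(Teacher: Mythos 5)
Your proof is correct and follows essentially the same route as the paper's: reduce to $\lambda_0\geq 0$ via the swap $\phi_{\alpha,\beta,\lambda}=\phi_{\beta,\alpha,-\lambda}$, conjugate $\lambda$ to $\lambda_0+|\vec\lambda|\qi$ using Proposition~\ref{prop:conj} and the fact that conjugation commutes with the quaternionic exponential, and absorb the conjugating element into $\alpha$ and $\beta$ by right-multiplication via Lemma~\ref{lem:samespan}. The only cosmetic difference is that the paper writes the conjugation as $\hat\lambda=g\lambda g^{-1}$ with $\hat\alpha=\alpha g^{-1}$ while you write $g^{-1}\lambda g$ with $\hat\alpha=\alpha g$, which is the same statement after relabeling $g\mapsto g^{-1}$.
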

\begin{proof}
Since $\phi_{\alpha,\beta,\lambda}=\phi_{\beta,\alpha,-\lambda}$ we
may assume, without loss of generality, that $\lambda_0\geq0$.

Let $\lambda=\lambda_0+\vec \lambda$ be the decomposition of $\lambda$ into its real and
vector parts.  It will be shown that the same solution can be
constructed using the complex number $\hat \lambda=\lambda_0+|\vec \lambda|\qi$
which has a non-negative real and imaginary components.

By Proposition~\ref{prop:conj}, because $\lambda$ and $\hat \lambda$ have the same real part and
vector parts of the same length, there is a non-zero quaternion $g$
which satisfies
$$
\hat \lambda=g\lambda g^{-1}.
$$
Define $\hat\alpha=\alpha g^{-1}$ and $\hat \beta=\beta g^{-1}$.
Now, note that 
\begin{eqnarray*}
\phi_{\hat\alpha,\hat\beta,\hat \lambda}
&=&\hat \alpha e^{\hat \lambda x+\hat \lambda^3t}+\hat \beta e^{-\hat
  \lambda x-\hat \lambda^3t}\\
&=& \alpha g^{-1} e^{g \lambda g^{-1} x+g\lambda^3g^{-1}t}+ \beta g^{-1} e^{- g \lambda
  g^{-1}x-g \lambda^3g^{-1}t}\\
&=& \alpha g^{-1}(g e^{ \lambda x+\lambda^3t}g^{-1})+ \beta g^{-1}(g e^{- \lambda x-
  \lambda^3t}g^{-1})\\
&=& \left(\alpha  e^{ \lambda x+\lambda^3t}+ \beta  e^{- \lambda x-
  \lambda^3t}\right)g^{-1}
=\phi_{\alpha,\beta,\lambda}g^{-1}.
\end{eqnarray*}
It then follows from Lemma~\ref{lem:samespan} that they generate
the same solutions.
\end{proof}

Consequently, no non-trivial solutions will be lost by the fact that
we will henceforth limit our attention only
to the case in which $\alpha\beta\lambda \not=0$ and
$\lambda=\lambda_0+\lambda_1\qi$ is a
complex number with $\lambda_0,\lambda_1\geq 0$. 

The real numbers 
$$\vr=\lambda_0^2-3\lambda_1^2\qquad \hbox{and}\qquad
\vi=3\lambda_0^2-\lambda_1^2$$
 will then
be useful in understanding the solution any of the solutions
$u_{\alpha,\beta,\lambda}$ since by Equation~\ref{eqn:exp}
\begin{equation}
e^{\lambda x+\lambda^3t}=e^{\lambda_0(x+\vr t)}\left(\cos(\lambda_1(x+\vi
t))+\sin(\lambda_1(x+\vi t))\qi\right).\label{eqn:velocities}
\end{equation}
As one might guess from \eqref{eqn:velocities}, $\vr$ and $\vi$ will
play the role of two separate \textit{velocities}.  Considering the
graph of $u_{\alpha,\beta,\lambda}$ as a function of $x$ with $t$
playing the role of a time parameter, the \textit{periodic} features coming
from the trigonometric functions will translate to the left with
velocity $\vi$ while the localized soliton has a \textit{center} which
translates with velocity $\vr$.

The next two sections separately handle the cases $\lambda_0=0$ and
$\lambda_0\not=0$ which are qualitatively very different.

\subsubsection{Translating Periodic Solutions}

Consider the case in which $\lambda_0=0$ (so that $\lambda=\lambda_1\qi$ is a purely
imaginary complex number).  Then the corresponding solution to the KdV
equation is a spatially periodic solution that translates at a
constant speed in time.

\begin{theorem}
If $\lambda=\lambda_1\qi$ then the associated KdV solution 
has a graph that is invariant under a horizontal translation in $x$ by
$2\pi/\lambda_1$ units and viewing $t$ as a time parameter this
periodic waveform translates to the right at constant speed $\lambda_1^2$.
\end{theorem}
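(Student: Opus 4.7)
The plan is to exploit the simplified one-element formula \eqref{eqn:uFone} from Theorem~\ref{thm:kdv} after first checking that $\phi_{\alpha,\beta,\lambda}$ already possesses the claimed symmetry. Setting $\lambda_0=0$ in the definitions immediately following \eqref{eqn:phiabl} gives $\vr=-3\lambda_1^2$ and $\vi=-\lambda_1^2$, so equation \eqref{eqn:velocities} collapses to
$$
e^{\lambda x+\lambda^3 t}=\cos(\lambda_1\xi)+\sin(\lambda_1\xi)\,\qi,
\qquad \xi:=x-\lambda_1^2 t.
$$
Replacing $\lambda$ by $-\lambda$ in \eqref{eqn:velocities} shows that $e^{-\lambda x-\lambda^3 t}$ also depends on $(x,t)$ only through $\xi$, so $\phi_{\alpha,\beta,\lambda}(x,t)=g(\xi)$ for a quaternion-valued function $g$ that is $2\pi/\lambda_1$-periodic in $\xi$.

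Next I would observe that every $x$-derivative of $\phi_{\alpha,\beta,\lambda}$ is likewise a function of $\xi$ alone (since $\partial_x\xi=1$), and hence formula \eqref{eqn:uFone} produces
$$
u_{\alpha,\beta,\lambda}(x,t)=2g''(\xi)g(\xi)^{-1}-2g'(\xi)g(\xi)^{-1}g'(\xi)g(\xi)^{-1}
$$
at every point where $g(\xi)\neq 0$. This expression manifestly depends on $(x,t)$ only through $\xi$ and inherits the $2\pi/\lambda_1$-periodicity, which is exactly the claim that the graph is invariant under the horizontal translation $x\mapsto x+2\pi/\lambda_1$ at any fixed $t$.

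Finally, the translation speed assertion follows at once from the $\xi$-dependence: for any $s\in\R$, the identity $\xi(x+\lambda_1^2 s,\,t+s)=\xi(x,t)$ yields $u_{\alpha,\beta,\lambda}(x+\lambda_1^2 s,t+s)=u_{\alpha,\beta,\lambda}(x,t)$, so a feature at position $x_0$ at time $t_0$ reappears at $x_0+\lambda_1^2(t-t_0)$ at later times, i.e.\ the waveform propagates rightward at speed $\lambda_1^2$. The only step requiring any care is tracking the sign of $\vi$ to confirm that motion is to the right rather than the left; with $\lambda_0=0$ one has $\vi=-\lambda_1^2<0$, which gives $x+\vi t=x-\lambda_1^2 t$ and hence rightward motion. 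No serious obstacle is anticipated; the proof is essentially substitution into \eqref{eqn:velocities} and \eqref{eqn:uFone} followed by bookkeeping.
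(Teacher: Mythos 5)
Your proof is correct and follows essentially the same route as the paper's: both use \eqref{eqn:velocities} with $\lambda_0=0$ to write $\phi_{\alpha,\beta,\lambda}$ as a function of $\xi=x-\lambda_1^2t$ alone and then push that dependence through \eqref{eqn:uFone}. The only (inessential) difference is that the paper expands via trig identities to exhibit the solution as a rational function of $\cos(2\lambda_1\xi)$ and $\sin(2\lambda_1\xi)$ — which actually yields the finer period $\pi/\lambda_1$ — whereas you argue more abstractly through the periodicity of $g$, which suffices for the stated $2\pi/\lambda_1$ invariance.
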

\begin{proof}
Using Equation~\ref{eqn:velocities}
and the fact that $\vi=3\lambda_0^2-\lambda_1^2=-\lambda_1^2$ one finds that
$$
\phi_{\alpha,\beta,\lambda}(x,t)=
(\alpha+\beta)\cos(\lambda_1(x-\lambda_1^2
t))+(\alpha-\beta)\sin(\lambda_1(x-\lambda_1^2 t))\qi.
$$

Substituting this into \eqref{eqn:uFone} and using trigonometric
identities, one can determine that the corresponding KdV solution has the form
$$
u_{\alpha,\beta,\lambda}(x,t)=R(\cos(2\lambda_1(x-\lambda_1^2t)),\sin(2\lambda_1(x-\lambda_1^2t)))
$$
where $R(\xi,\eta)$ is a certain rational function.

Since $u_{\alpha,\beta,\lambda}$ can be written as a function in
$x-\lambda_1^2t$, we know that its graph as a function of $x$ will
translate to the right with speed $\lambda_1^2$ in the time parameter
$t$.  And, since the translation $x\mapsto x+\pi/\lambda_1$ shifts the
arguments of the trigonometric functions by $2\pi$, it leaves the
graph unchanged.
\end{proof}

\begin{figure}

\centering

\includegraphics[width=3in,height=2in]{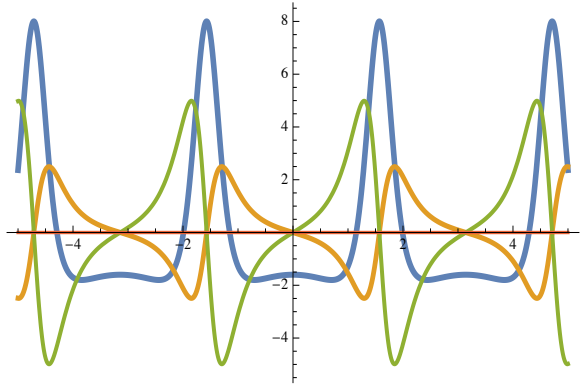}

\caption{The periodic translating solution
  $u_{\alpha,\beta,\lambda}(x,t)$ with
  $\alpha=1+\qk$, $\beta=1$ and $\lambda=\qi$ at time $t=0$.}\label{fig:periodic}

\end{figure}

\begin{example}\label{examp:periodic} If $\alpha=1+\qk$, $\beta=1$ and $\lambda=\qi$ then
$$
\phi_{\alpha,\beta,\lambda}(x,t)=2\cos(x-t)+\sin(x-t)\qj+\cos(x-t)\qk
$$
and $u_{\alpha,\beta,\lambda}(x,t)=w(x-t)$ where
$$
w(\xi)=
-\frac{8 (3 \cos (2 (\xi))+2)}{(2 \cos (2 (\xi))+3)^2}
-\frac{8 \sin (2 (\xi))}{(2 \cos (2 (\xi))+3)^2}\qi
+\frac{16 \sin (2 (\xi))}{(2 \cos (2 (\xi))+3)^2}\qj.
$$
The four components of this solution at time $t=0$ are shown in
Figure~\ref{fig:periodic}.  As expected, an animation shows the
solution translating to the right at constant speed $1$ and a
horizontal spatial
translation by $\pi$ units leaves the graph of $w(\xi)$ invariant.
\end{example}

\subsubsection{Localized Breather Solitons}

\begin{theorem}\label{thm:breathers}
If $\lambda_0\not=0$ then for any fixed $t$ the graph of the solution
$u_{\alpha,\beta,\lambda}$ as a function of $x$ will be localized in a
small neighborhood of 
\begin{equation}
x=c_{\alpha,\beta,\lambda}(t)=\frac{\ln|\alpha^{-1}\beta|}{2\lambda_0}-\vr t\label{eqn:center}
\end{equation}
and hence this disturbance is moving to the left with velocity
$\vr=\lambda_0^2-3\lambda_1^2$.  Moreover, if $\lambda_1\not=0$ then
the solution will also exhibit fluctuations moving to the left at
velocity $\vi=3\lambda_0^2-\lambda_1^2$, giving it the
``throbbing'' appearance of a \textit{breather soliton}.
\end{theorem}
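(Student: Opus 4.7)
My plan is to work from the reduced expression $u_{\alpha,\beta,\lambda}=(2\phi_x\phi^{-1})_x$ that appeared during the proof of Theorem~\ref{thm:kdv} (the unexpanded form of \eqref{eqn:uFone}), show that $\phi_x\phi^{-1}$ tends exponentially to two distinct constants as $x+\vr t\to\pm\infty$, and then identify the transition region as the unique spot where the two summands of $\phi$ are in balance. That spot will turn out to be exactly the line $x=c_{\alpha,\beta,\lambda}(t)$.

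For the exponential approach to constants, set $A=e^{\lambda x+\lambda^3 t}$. Because $\lambda$ is complex it commutes with $A$, so $\phi=\alpha A+\beta A^{-1}$ and $\phi_x=\alpha\lambda A-\beta\lambda A^{-1}$. When $\lambda_0(x+\vr t)\gg 0$ the factor $|A|=e^{\lambda_0(x+\vr t)}$ is large, so factor out the dominant summand: $\phi=\alpha A(1+\epsilon)$ where $\epsilon=A^{-1}\alpha^{-1}\beta A^{-1}$ has modulus $|\alpha^{-1}\beta|\,e^{-2\lambda_0(x+\vr t)}$. Since this modulus is eventually less than one, the quaternionic series $(1+\epsilon)^{-1}=1-\epsilon+\epsilon^2-\cdots$ converges, and substituting into $\phi_x\phi^{-1}$ while using $\lambda A=A\lambda$ to collapse interior $AA^{-1}$ pairs yields
$$
\phi_x\phi^{-1}=\alpha\lambda\alpha^{-1}+O\!\left(e^{-2\lambda_0(x+\vr t)}\right).
$$
A symmetric factorization $\phi=\beta A^{-1}(1+A\beta^{-1}\alpha A)$ gives $\phi_x\phi^{-1}=-\beta\lambda\beta^{-1}+O(e^{2\lambda_0(x+\vr t)})$ in the opposite limit. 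Differentiating in $x$ therefore makes $u_{\alpha,\beta,\lambda}$ decay exponentially as $|x+\vr t|\to\infty$. The one spot where $\phi_x\phi^{-1}$ actually varies appreciably is where neither summand in $\phi$ dominates, i.e.\ where $|\alpha|e^{\lambda_0(x+\vr t)}=|\beta|e^{-\lambda_0(x+\vr t)}$; solving this scalar equation for $x$ yields exactly \eqref{eqn:center}, and because the center depends on $t$ only through $-\vr t$, the entire localized profile rigidly translates at velocity $-\vr$.

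The breather/fluctuation claim when $\lambda_1\neq 0$ follows directly from \eqref{eqn:velocities}: the exponential $A$ splits as the real envelope $e^{\lambda_0(x+\vr t)}$ multiplied by the quaternionic trigonometric factor $\cos(\lambda_1(x+\vi t))+\sin(\lambda_1(x+\vi t))\qi$, whose argument travels with the independent velocity $-\vi$. These oscillatory factors propagate through the algebraic formula for $u_{\alpha,\beta,\lambda}$ with their frequency unchanged, so inside the support of the soliton the components of $u_{\alpha,\beta,\lambda}$ carry internal oscillations at velocity $-\vi$ while the envelope moves at $-\vr$; when the two speeds differ this is precisely the throbbing picture of a breather. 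The main technical hurdle is the bookkeeping of the non-commutative expansion: one must verify that the quaternionic geometric series for $(1+\epsilon)^{-1}$ really produces an $O(|A|^{-2})$ error rather than merely $O(|A|^{-1})$, which requires the cancellations to happen in the correct order, but once that is in hand the exponential decay and the identification of the center are immediate.
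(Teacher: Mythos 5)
Your proof is correct but follows a genuinely different route from the paper's. The paper computes the squared amplitude $|u_{\alpha,\beta,\lambda}|^2$ explicitly, obtaining $C_1\left(|\alpha|^2e^{\Lambda}+|\beta|^2e^{-\Lambda}+C_2\cos\Theta+C_3\sin\Theta\right)^{-2}$ with $\Lambda=2\lambda_0(x+\vr t)$ and $\Theta=2\lambda_1(x+\vi t)$; localization and the location of the center are then read off from the growth of $e^{\pm\Lambda}$ and from the maximum of the ``averaged'' envelope, and the two velocities $\vr$ and $\vi$ are identified directly from $\Lambda$ and $\Theta$. You instead expand $\phi_x\phi^{-1}$ in a non-commutative geometric series and show it tends exponentially to the constants $\alpha\lambda\alpha^{-1}$ and $-\beta\lambda\beta^{-1}$ on the two sides of the balance line; this is essentially the one-element instance of the mechanism behind Lemma~\ref{lem:dropone} and Proposition~\ref{prop:LR}, and it buys you more --- it identifies the limiting values of $2\phi_x\phi^{-1}$, which is exactly the data the paper needs later for the superposition and phase-shift results --- at the cost of the series bookkeeping you flag (which is harmless: the quaternion norm is multiplicative, the term-by-term $x$-derivative of the remainder carries the same exponential decay, and even an $O(|A|^{-1})$ bound would already give localization of $u$). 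The one loose end is your final sentence: the ``throbbing'' conclusion needs $\vr\not=\vi$, which you leave as a hypothesis; the paper closes it by observing that $\lambda_0^2-3\lambda_1^2=3\lambda_0^2-\lambda_1^2$ forces $\lambda_0^2+\lambda_1^2=0$, impossible for $\lambda\not=0$, so the two speeds always differ. Add that line and your argument is complete.
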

\begin{proof}
The squared amplitude of the solution
$u_{\alpha,\beta,\lambda}$ can be written
 in the form
% \begin{widetext}
$$
|u_{\alpha,\beta,\lambda}(x,t)|^2=
\frac{C_1}{\left(|\alpha|^2e^{\Lambda}+|\beta|^2e^{-\Lambda}+C_2
  \cos(\Theta)+C_3\sin(\Theta)\right)^2}
$$
%\end{widetext}
where $\Lambda=2\lambda_0(x+\vr t)$, $\Theta=2 \lambda_1(x+\vi t)$
and $C_1$, $C_2$ and $C_3$ are some constants.

If $\lambda_0>0$ then for sufficiently large values of $|x|$ this amplitude
converges quickly to $0$.  In this sense, we can already see that the
solution is \textit{localized} when $\lambda$ has a non-zero real
part.  Moreover, if we ``average out'' the small variation from the
trigonometric functions by setting them both equal to zero, then this
amplitude function has a unique local maximum located at $x=c_{\alpha,\beta,\lambda}(t)$.

So, in the case $\lambda_0>0$ an animation of the solution
$u_{\alpha,\beta,\lambda}(x,t)$ as a function of $x$ with $t$ playing
the role of time will show a localized disturbance centered at
$x=c_{\alpha,\beta,\lambda}(t)$ and traveling to the left with
velocity $\vr$.  However, if $\lambda_1>0$ as well, then the formula
for the solution will also involve at least one of the functions
$\cos(\theta)$ or $\sin(\theta)$.  Since $\theta$ is a function of
$x+\vi t$, these features will be moving to the left at velocity
$\vi$.  If it were the case that $\vr=\vi$, then the waveform would
simply translate in time.  However, there are no real solutions to
$3\lambda_0^2-\lambda_1^2=\lambda_0^2-3\lambda_1^2$ 
and so whenever $\lambda_0$ and $\lambda_1$ are both non-zero, an
animation of the solution will exhibit the ``breathing'' phenomenon.
\end{proof}

\begin{figure}
\centering

\hbox{\includegraphics[width=2.4in,height=2.4in]{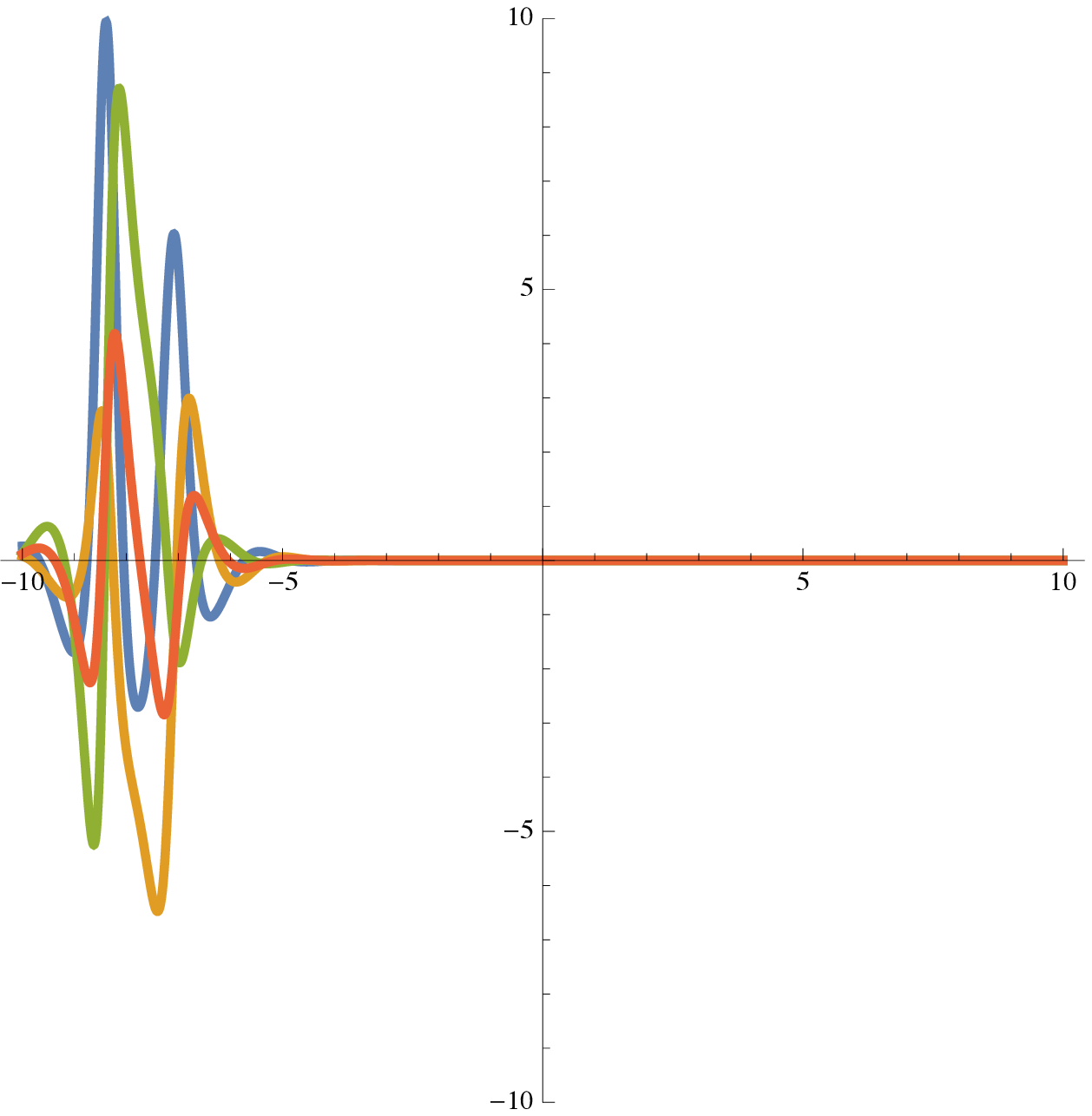}
\qquad
\includegraphics[width=2.4in,height=2.4in]{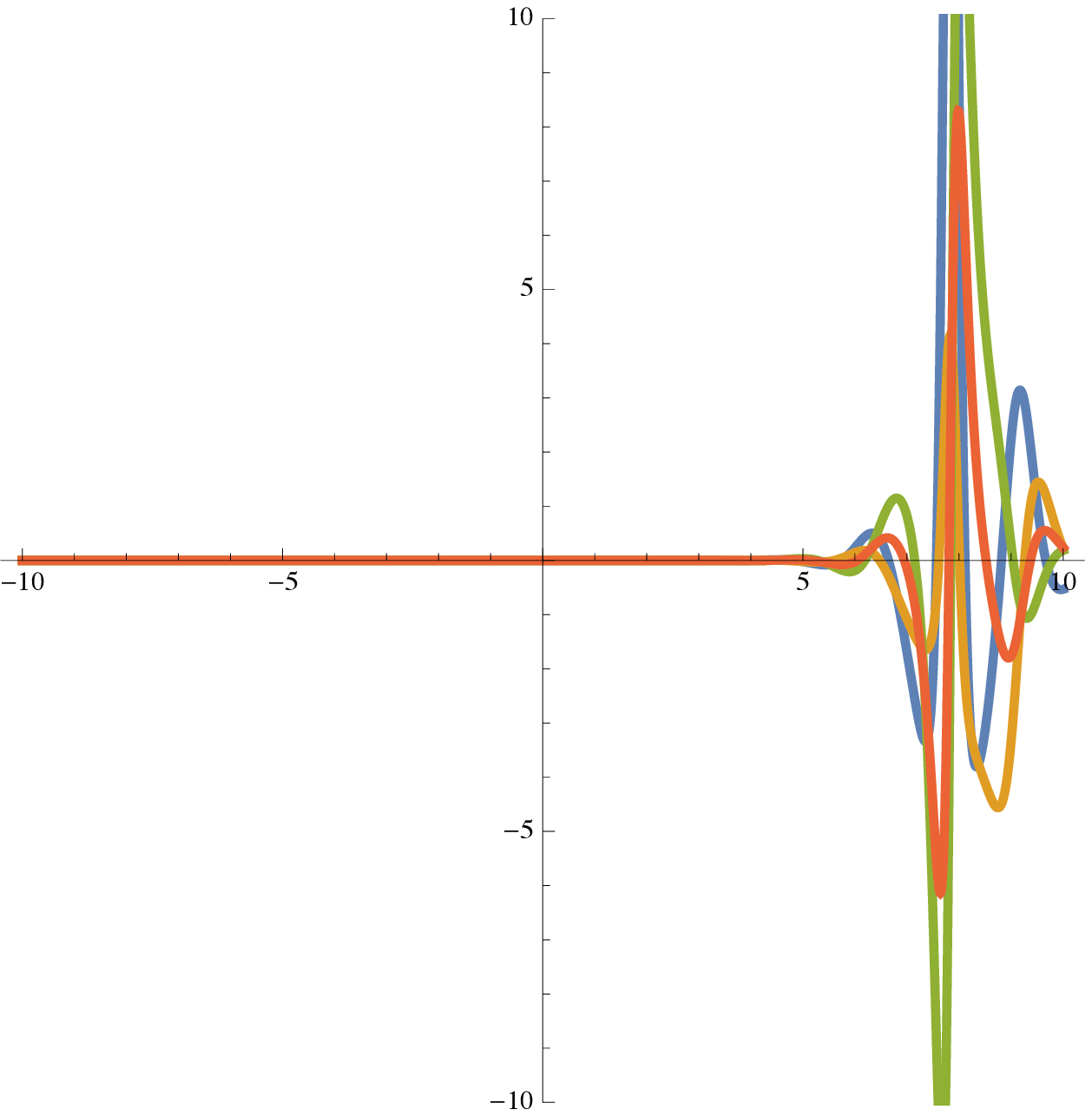}}

\caption{The $1$-soliton solution $u_{\alpha,\beta,\lambda}$ with
  $\alpha=\qj$, $\beta=1+\qk$ and $\lambda=1+\sqrt{3}\qi$ at times
  $t=-1$ (left) and $t=1$ (right).}\label{fig:1sol}
\end{figure}

\begin{example}\label{examp:1sol} Consider the case $\alpha=\qj$, $\beta=1+\qk$ and
$\lambda=1+\sqrt{3}\qi$.  We expect to see a localized disturbance
traveling to the left with velocity $\vr=-8$ (which means it will move
to the right with speed $8$ as $t$ increases).  Moreover, since
$\vi=0$ all of the trigonometric function in the formula for
$u_{\alpha,\beta,\lambda}(x,t)$ have arguments that are independent of
$t$.  At time $t$ we would expect to see the center of the
localized hump being located at $ x=8t+\ln(2)/4 $ (just
slightly to the right of $x=8t$).  And this is what we see in the figures
showing this solution at times $t=-1$ and $t=1$.
\end{example}

\subsubsection{Singularities}

Either the periodic or breather soliton solutions can exhibit
singularities.  The following result completely identifies the
values of the parameters $\alpha$, $\beta$ and $\lambda$ that produce
entirely non-singular solutions.  A surprising corollary is that any
of these solutions exhibiting singularities must be inherently
\textit{commutative} in that it is conjugate to a complex-valued
solution of the standard KdV equation.

\begin{theorem}\label{thm:sing}
The quaternion-valued KdV solution $u_{\alpha,\beta,\lambda}(x,t)$ is
non-singular for all $x$ and $t$ if any of these three conditions involving
the number $q=\alpha^{-1}\beta$ is satisfied:
\begin{itemize}
\item[I.] $q_2^2+q_3^2>0$ (i.e. $q$ is not a complex number)
\item[II.] $\lambda_0=0$ and $|q|\not=1$
\item[] or
\item[III.] $\lambda_1=0$ and $q$ is not a negative
  real number.
\end{itemize}
Moreover, the solution $u_{\alpha,\beta,\lambda}(x,t)$ is undefined
for some $(x,t)\in\R^2$ if none of the conditions is satisfied.
\end{theorem}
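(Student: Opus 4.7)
The plan is to analyze the formula
$$u_{\alpha,\beta,\lambda}=2\phi_{xx}\phi^{-1}-2\phi_x\phi^{-1}\phi_x\phi^{-1}$$
from \eqref{eqn:uFone}. Since nonzero quaternions are invertible, this expression is defined exactly where $\phi=\phi_{\alpha,\beta,\lambda}$ does not vanish, and a short check (computing $\phi_x$ at a zero of $\phi$ and using that $e^{\lambda x+\lambda^3 t}$ commutes with $\lambda$) shows $\phi_x\neq 0$ there, so every zero of $\phi$ is a genuine pole of $u_{\alpha,\beta,\lambda}$. Thus the theorem reduces to deciding when $\phi(x,t)=0$ has a solution $(x,t)\in\R^2$.

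First I would reduce $\phi=0$ to a scalar complex equation. Because $\lambda=\lambda_0+\lambda_1\qi$ is complex by the standing normalization (Lemma~\ref{lem:complexlambda}), $e^{\lambda x+\lambda^3 t}$ is complex and commutes with itself. Right-multiplying
$$\alpha e^{\lambda x+\lambda^3 t}+\beta e^{-\lambda x-\lambda^3 t}=0$$
by $e^{\lambda x+\lambda^3 t}$ and then left-multiplying by $\alpha^{-1}$ produces
$$e^{z}=-q,\qquad z:=2(\lambda x+\lambda^3 t),\quad q:=\alpha^{-1}\beta.$$
The left-hand side is complex, so no solution exists unless $q$ is complex; this already proves the sufficiency of condition I, the ``essentially non-commutative'' case.

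Next, assuming $q\in\C$, I would describe the range of $z$. Writing $z=2\lambda_0(x+\vr t)+2\lambda_1(x+\vi t)\qi$ and noting $\vi-\vr=2(\lambda_0^2+\lambda_1^2)>0$, the change of variables $(x,t)\mapsto(x+\vr t,\,x+\vi t)$ is a bijection of $\R^2$, so the real and imaginary parts of $z$ vary independently over $2\lambda_0\R$ and $2\lambda_1\R$. Three subcases arise. If $\lambda_0,\lambda_1>0$ then $e^z$ covers $\C^{\ast}$ and $e^z=-q$ has a solution for every nonzero complex $q$. If $\lambda_0=0<\lambda_1$ then $e^z$ traces the unit circle, so a solution exists iff $|q|=1$. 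If $\lambda_1=0<\lambda_0$ then $e^z$ is a positive real, so a solution exists iff $q$ is a negative real.

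Assembling the subcases yields the theorem: non-singularity holds iff $q$ is not complex (I), or $\lambda_0=0$ with $|q|\neq 1$ (II), or $\lambda_1=0$ with $q$ not a negative real (III). Conversely, if none of these holds, the subcase determined by $(\lambda_0,\lambda_1)$ furnishes an explicit $(x,t)$ with $\phi(x,t)=0$, producing a pole of $u_{\alpha,\beta,\lambda}$. The delicate step is the right-multiplication used to derive $e^z=-q$: it collapses a noncommutative quaternionic equation to a scalar complex one, and this collapse is available only because $\lambda$ has been normalized to be complex. Once that reduction is in place, the remaining case analysis is elementary.
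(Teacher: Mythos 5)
Your proposal is correct and establishes both directions of the theorem. The core strategy --- locating the singularities of $u_{\alpha,\beta,\lambda}$ exactly at the zeros of $\phi_{\alpha,\beta,\lambda}$ and then splitting into cases according to whether $\lambda_0$ or $\lambda_1$ vanishes --- is the same as the paper's, but your execution of the case analysis is genuinely different and, to my eye, cleaner. The paper first conjugates to $u_{1,q,\lambda}$ via Lemma~\ref{lem:rot}, expands $|\phi_{1,q,\lambda}|^2$ as the sum of four real squares in \eqref{eqn:lensqphi}, and then argues componentwise about when all four can vanish, which forces a fairly laborious enumeration of sub-cases. You instead collapse $\phi=0$ to the single complex equation $e^{z}=-q$ with $z=2(\lambda x+\lambda^3t)$; this is legitimate because $E=e^{\lambda x+\lambda^3t}$ is a complex number commuting with itself, so $\alpha E+\beta E^{-1}=0$ is equivalent to $E^2=-\alpha^{-1}\beta$. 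The theorem then reduces to describing the image of $(x,t)\mapsto e^{z}$, and your observation that $(x,t)\mapsto(x+\vr t,\,x+\vi t)$ is a linear bijection (determinant $\vi-\vr=2|\lambda|^2\neq0$) makes that image transparent: $\C^{*}$ when $\lambda_0\lambda_1>0$, the unit circle when $\lambda_0=0$, the positive reals when $\lambda_1=0$. This packaging buys an immediate proof of condition I (the left side of $e^z=-q$ is complex, so $q$ must be) and a uniform treatment of the converse; it is the multiplicative form of exactly the same dichotomy the paper handles additively in $\R^4$. Your added remark that $\phi_x=-2\beta\lambda E^{-1}\neq0$ at any zero of $\phi$, so the singularity is a genuine pole rather than a removable one, addresses a point the paper leaves implicit (it only claims the formula is undefined there) and is a welcome strengthening.
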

\begin{proof}
Throughout Section~\ref{sec:uabl} it has been assumed that
$\alpha\beta\lambda\not=0$.  Hence, we know that $\alpha$ is
invertible.  Define $q=\alpha^{-1}\beta$ and note that by
Lemma~\ref{lem:rot} we know that $u_{1,q,\lambda}=\alpha^{-1}
u_{\alpha,\beta,\lambda}\alpha$.  One of these solutions is singular
if and only if the other is.  Consequently, it is sufficient to
determine when the solution $u_{1,q,\lambda}$ is singular.

Since $\phi_{1,q,\lambda}$ is infinitely differentiable for
all $x$ and $t$ regardless of the values of the parameters, the only
way that $u_{1,q,\lambda}$ given by \eqref{eqn:uFone} could
fail to be defined and differentiable at $(x_0,t_0)$ is if
$|\phi_{1,q,\lambda}|(x_0,t_0)$ is zero.  Using the previous notation
that $\vi=3\lambda_0^2-\lambda_1^2$, $\vr=\lambda_0^2-3\lambda_1^2$ and now defining
$\theta=2\lambda_1(x+\vi t)$ we find
\begin{equation}
|\phi_{1,q,\lambda}(x,t)|^2=(q_0+e^{2\lambda_0(x+\vr
  t)}\cos(\theta))^2+(q_1+e^{2\lambda_0(x+\vr t)}\sin(\theta))^2+q_2^2+q_3^2.
\label{eqn:lensqphi}
\end{equation}
This sum of the squares of four real numbers can only be zero if all
four of them are equal to zero.

If condition I is met then the last two terms in
\eqref{eqn:lensqphi} are non-zero and the solution must be non-singular.

Now suppose condition II is met.
Since $\lambda_0=0$ the exponential terms are equal to $1$ and \eqref{eqn:lensqphi}
reduces to
$(q_0+\cos(\theta))^2+(q_1+\sin(\theta))^2+q_2^2+q_3^2$. However,
this is only zero if $(-q_0,-q_1)$ are the coordinates of the point at
angle $\theta$ radians on the unit circle.  Since $|q|\not=1$ that
cannot be true.

And, if condition III is met then $\theta=0$ and \eqref{eqn:lensqphi}
becomes
$$
|\phi_{1,q,\lambda}(x,t)|^2=(q_0+e^{2\lambda_0(x+\vr
  t)})^2+q_1^2+q_2^2+q_3^2.
$$
If $q$ is not a real number then $q_1^2+q_2^2+q_3^2>0$ and the
expression is non-zero  And if $q$ is a non-negative real number then
the first term is positive for all values of $x$ and $t$.

This shows that the solution is entirely non-singular if any one of
the three conditions is met.  Now, assume that none of the conditions
is satisfied.  So, we know that $q_2^2+q_3^2=0$, that either
(i) $\lambda_0\not=0$ or that (ii) $|q|=1$, and that either (iii) $\lambda_1\not=0$ or
that (iv) $q$ is negative.

Assuming (i) and (iii) ensures that the function in the exponent and
the trigonometric argument $\theta$ are linearly independent linear
functions of the variables $x$ and $t$ and hence can be simultaneously
solved to take any desired value.  The
point $(-q_0,-q_1)$ is on a circle of radius $|q|$ around the origin
and hence can be written as $(|q|\cos(\theta_0),|q|\sin(\theta_0))$
for some value of $\theta$.  Then one can simultaneously find
values of $x$ and $t$ such that $e^{2\lambda_0(x+\vr t)}=|q|$ 
and $\theta=\theta_0$ thereby making the entire
expression equal to zero.

If (i) and (iv) are assumed to be true then 
$q$ is a negative real number and we want to show that
$$(q+e^{2\lambda_0(x+\vr
  t)}\cos(\theta))^2+(e^{2\lambda_0(x+\vr t)}\sin(\theta))^2
$$
is zero for some choice for $x$ and $t$.  If $\theta=0$ then the
second term is already zero and if not then a value of $t$ can be
found to make it zero.  Either way the expression then reduces to
$(q+e^{2\lambda_0(x+ C)})^2$ for some constant $C$ and 
the expression is equal to zero at
$x=(\ln(-q)-C)/(2\lambda_0)$ (which is a real number as a consequence
of (i) and (iv)).

Now suppose that (ii) and (iii) are true.  We can further assume that
$\lambda_0=0$ because otherwise (i) is true and we already handled that
case.  But then the expression reduces to
$(q_0+\cos(\theta))^2+(q_1+\sin(\theta))^2$.  By assumption (ii), we
know that the point $(-q_0,-q_1)$ lies on the unit circle and hence
there is a number $\theta_0$ such that it is equal to
$(\cos(\theta_0),\sin(\theta_0))$.  Since $\lambda_1\not=0$ it is possible
to choose $x$ and $t$ so that $\theta=\theta_0$ and the expression
then becomes zero.

Finally, consider the case in which (ii) and (iv) are both true.  If
$\lambda_0\not=0$ is also true then that would mean (i) and (iv) are
true, and it has already been demonstrated that the solution is
singular in that case.  On the other hand, if $\lambda_0=0$ then
$\lambda_1\not=0$ (because $\lambda\not=0$ is assumed throughout this
section),  but then (ii) and (iii) are true which has also already
been handled.  
\end{proof}

\begin{remark}\label{rem:non-sing-hard}
One might guess from Figure~\ref{fig:1sol} that the breather soliton
solution in
Example~\ref{examp:1sol} is singular because it appears to have a pole
in the bottom figure.  However, 
$\alpha^{-1}\beta=-\qi-\qj$ is not a complex number and hence
according to Theorem~\ref{thm:sing} it is not.  (In fact, redrawing
the graph at time $t=1$ over a larger vertical range confirms that
there is simply a local maximum that is outside of the viewing window
in Figure~\ref{fig:1sol}.)
\end{remark}

\begin{example}\label{singular} The periodic solution shown in
Example~\ref{examp:ratl} is non-singular because $\alpha=1+\qk$,
$\beta=1$ and $\lambda=\qi$ so $q=\alpha^{-1}\beta=1/2-1/2\qk$ which
satisfies criterion II.  On the other hand, choosing $\alpha=1$,
$\beta=1/\sqrt{5}-2/\sqrt{5}\qi$ and $\lambda=\qi$ results in a singular solution $u_{\alpha,\beta,\lambda}(x,t)$ since none of the conditions are satisfied.
This particular solution may seem uninteresting as it is
complex-valued and therefore not inherently non-commutative, but it
will play an important role in Example~\ref{examp:onesidesing} below.
\end{example}

Surprisingly, it turns out that $u_{\alpha,\beta,\lambda}$ has
singularities \textit{only} when the solution is really commutative:
\begin{corollary}\label{cor:sing}
If the solution $u_{\alpha,\beta,\lambda}(x,t)$ is singular then
$u(x,t)=\alpha^{-1}u_{\alpha,\beta,\lambda}\alpha$ is a
complex-valued function, and $u_{\alpha,\beta,\lambda}$ is a solution of the usual KdV equation \eqref{eqn:stdKdV}.
\end{corollary}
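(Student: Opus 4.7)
The plan is to chain Theorem~\ref{thm:sing} with Lemma~\ref{lem:rot} and the single-element formula \eqref{eqn:uFone}. First, the hypothesis that $u_{\alpha,\beta,\lambda}$ is singular means, by Theorem~\ref{thm:sing}, that condition I fails, so $q := \alpha^{-1}\beta$ has $q_2 = q_3 = 0$ and therefore lies in the real span of $\{1,\qi\}$. Since Section~\ref{sec:uabl} has already reduced to the case where $\lambda = \lambda_0 + \lambda_1\qi$ is also complex, both of the scalars appearing in $\phi_{1,q,\lambda}(x,t) = e^{\lambda x + \lambda^3 t} + q\,e^{-\lambda x - \lambda^3 t}$ belong to $\C \cong \span_{\R}\{1,\qi\}$, which is a field. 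Hence $\phi_{1,q,\lambda}$ is complex-valued.

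Next, I apply Lemma~\ref{lem:rot} with the quaternion $\alpha^{-1}$: because $\alpha^{-1}\phi_{\alpha,\beta,\lambda} = \phi_{1,q,\lambda}$, the lemma gives
\[
u_{1,q,\lambda} = \alpha^{-1} u_{\alpha,\beta,\lambda}\,\alpha.
\]
The single-element formula \eqref{eqn:uFone} expresses $u_{1,q,\lambda}$ purely through multiplications and the inverse of $\phi_{1,q,\lambda}$, and since complex values are closed under these operations, the function $u(x,t) := u_{1,q,\lambda}(x,t) = \alpha^{-1} u_{\alpha,\beta,\lambda}\alpha$ is complex-valued. This establishes the first claim.

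For the second claim, since $u$ is complex-valued it commutes with $u_x$, and the same commutativity is preserved under conjugation by $\alpha$: writing $u_{\alpha,\beta,\lambda} = \alpha u \alpha^{-1}$, one computes
\[
u_{\alpha,\beta,\lambda}\,(u_{\alpha,\beta,\lambda})_x = \alpha\,u u_x\,\alpha^{-1} = \alpha\,u_x u\,\alpha^{-1} = (u_{\alpha,\beta,\lambda})_x\,u_{\alpha,\beta,\lambda}.
\]
Therefore, for $u_{\alpha,\beta,\lambda}$ the two cross terms $\tfrac{3}{4}uu_x$ and $\tfrac{3}{4}u_x u$ in the symmetrized equation \eqref{eqn:KdV} coincide and collapse into the single term $\tfrac{3}{2}uu_x$. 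Since Theorem~\ref{thm:kdv} already guarantees that $u_{\alpha,\beta,\lambda}$ satisfies \eqref{eqn:KdV}, it therefore satisfies the standard KdV equation \eqref{eqn:stdKdV}. There is no real obstacle here; the only subtlety is noticing that even though $u_{\alpha,\beta,\lambda}$ itself is generally not complex-valued, its self-commutativity is inherited from its complex-valued conjugate, which is enough to reduce \eqref{eqn:KdV} to \eqref{eqn:stdKdV}.
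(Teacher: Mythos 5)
Your proposal is correct and follows essentially the same route as the paper's proof: deduce from the failure of Condition I of Theorem~\ref{thm:sing} that $q=\alpha^{-1}\beta$ is complex, use Lemma~\ref{lem:rot} together with $\phi_{1,q,\lambda}=\alpha^{-1}\phi_{\alpha,\beta,\lambda}$ and formula \eqref{eqn:uFone} to see that $u=\alpha^{-1}u_{\alpha,\beta,\lambda}\alpha$ is complex-valued, and then transport the commutativity of $u$ with $u_x$ through conjugation to collapse \eqref{eqn:KdV} to \eqref{eqn:stdKdV}. The only difference is that you spell out the collapse of the two cross terms slightly more explicitly than the paper does, which is fine.
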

\begin{proof}
If $u_{\alpha,\beta,\lambda}(x,t)$ is singular then
$q=\alpha^{-1}\beta\in\C$ must be a complex number (else
Condition I of
Theorem~\ref{thm:sing} is met and the solution would be
non-singular).  Note that $\phi_{1,q,\lambda}=\alpha^{-1}\phi_{\alpha,\beta,\lambda}$.
By Lemma~\ref{lem:rot}
$$
u_{1,q,\lambda}(x,t)=\alpha^{-1}u_{\alpha,\beta,\lambda}(x,t)\alpha=u(x,t)
$$
is another solution to \eqref{eqn:KdV}.
However, since $q$ and $\lambda$ are both complex numbers,
$\phi_{1,q,\lambda}$ and therefore $u_{1,q,\lambda}$ which can be
computed from it using \eqref{eqn:uFone} are complex-valued
functions.  Consequently $u$ and $u_x$ commute.  Commutativity is
preserved by conjugation so $u_{\alpha,\beta,\lambda}$ also commutes
with its derivative.  Then, both of these functions solve
\eqref{eqn:stdKdV}.
\end{proof}

\subsection{Rational Solutions}

\begin{definition}\label{def:Delta}
Let $\psi_0(x,t,z)=e^{xz+tz^3}$ and for $m=0,1,2,\ldots$ define
$\Delta_m(x,t)$ to be 
$$
\Delta_m(x,t)=\frac{\partial^m\psi_0}{\partial z^m}\bigg|_{z=0}.
$$
Note that $\Delta_m(x,t)\in\Z[x,t]$ is a polynomial in $x$ and $t$
with integer coefficients and that it has degree $m$ as a polynomial
in $x$.
\end{definition}
\begin{theorem}\label{thm:rat}
For any $n\in\N$ and $\alpha_j\in\H$,
$\F=\{\phi_0,\ldots,\phi_n\}$ is a KdV-Darboux kernel where
$$
\phi_i(x,t)=\frac{\partial^{2i}}{\partial
  x^{2i}}\left(\Delta_{2n+1}(x,t)+\sum_{j=0}^{n}\Delta_{2j}(x,t)\alpha_j\right).
$$
 Each component function $u_i(x,t)$ of corresponding quaternion-valued solution
 $u_{\F}(x,t)=u_0(x,t)+u_1(x,t)\qi+u_2(x,t)\qj+u_3(x,t)\qk$ 
 to \eqref{eqn:KdV} is a rational function.
\end{theorem}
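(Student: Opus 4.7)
The plan is to verify the three defining properties of a KdV-Darboux kernel (dispersion, closure, independence) and then read rationality directly off formula \eqref{eqn:uF}. Throughout, write $P(x,t) = \Delta_{2n+1}(x,t) + \sum_{j=0}^n \Delta_{2j}(x,t)\alpha_j$ so that $\phi_i = \partial_x^{2i} P$.

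\emph{Dispersion and closure.} Since $\psi_0(x,t,z) = e^{xz+tz^3}$ satisfies $(\psi_0)_{xxx} = z^3\psi_0 = (\psi_0)_t$, applying $\partial_z^m$ and setting $z=0$ gives $(\Delta_m)_{xxx} = (\Delta_m)_t$. Because \eqref{eqn:dispersion} is preserved under real-linear combinations, right multiplication by constant quaternions, and $x$-differentiation, each $\phi_i$ inherits it. For closure, $(\phi_i)_{xx} = \phi_{i+1} \in \F$ when $i < n$, while $(\phi_n)_{xx} = \partial_x^{2n+2} P = 0$ because $P$ is a polynomial of $x$-degree exactly $2n+1$: its $x^{2n+1}$-coefficient is the real number $1$ coming from $\Delta_{2n+1}$, and every $\Delta_{2j}\alpha_j$ has $x$-degree at most $2n$.

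\emph{Independence.} This is the main obstacle, because the Chen determinant specifies a non-commutative ordering of factors. My strategy is to control the top-order behavior in $x$ at $t=0$. Since $\Delta_m(x,0) = x^m$, the leading $x$-term of $\phi_i(x,0)$ is the positive real scalar multiple $\frac{(2n+1)!}{(2n+1-2i)!}\,x^{2n+1-2i}$, and every quaternionic contribution appears at strictly lower $x$-degree. Consequently the top-degree part in $x$ of each entry of $W$, and hence of $W^{\dagger}W$, is real. Since real entries commute, the top-degree part of $\cdet(W^{\dagger}W)(x,0)$ coincides with the ordinary determinant of the leading-order matrix and factors as $\det(W_0)^2$, where $W_0$ is the classical Wronskian of the distinct monomials $x, x^3, \ldots, x^{2n+1}$. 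This Wronskian is a non-zero polynomial in $x$, so $\cdet(W^{\dagger}W) \not\equiv 0$ and independence holds.

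\emph{Rationality.} Every $\phi_i$ is a polynomial in $x,t$ with quaternionic coefficients, so every entry of $W$, $W^{\dagger}$, $W_{\langle i\rangle}$, and $W_{\langle i,n\rangle}$ is a polynomial of this form. Each Chen determinant appearing in \eqref{eqn:uF} is a signed sum of ordered products of such entries and is therefore itself a quaternionic polynomial, while $\cdet(W^{\dagger}W)$ is a real polynomial by Proposition \ref{prop:inv}. Formula \eqref{eqn:uF} thus exhibits $u_{\F}$ as the $x$-derivative of a quotient of a quaternionic polynomial by a non-zero real polynomial, and so each of its four real components $u_k(x,t)$ is a rational function in $x$ and $t$.
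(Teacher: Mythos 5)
Your treatment of dispersion, closure, and rationality is correct and essentially identical to the paper's. The gap is in the independence step. You assert that because the top-degree part (in $x$, at $t=0$) of every entry of $W$, and hence of $W^{\dagger}W$, is real, the top-degree part of $\cdet(W^{\dagger}W)$ coincides with the ordinary determinant of the leading-order matrix and equals $\det(W_0)^2$. But the entrywise leading-order matrix of $W^{\dagger}W$ is rank one: the $(i,k)$ entry of $W^{\dagger}W$ is $\sum_j \omega_{ji}^{*}\omega_{jk}$, the degree of $\omega_{ji}^{*}\omega_{jk}$ is strictly decreasing in $j$, so only the $j=1$ term attains the maximal degree and the matrix of top-degree parts has the form $\left[c_i c_k x^{d_i+d_k}\right]$, whose determinant vanishes identically once $n\geq 1$. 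So ``the determinant of the matrix of leading parts'' is $0$, not $\det(W_0)^2$; the true leading term of $\cdet(W^{\dagger}W)$ lives at a strictly lower degree, where the subleading, genuinely quaternionic parts of the entries (the $\Delta_{2j}\alpha_j$ contributions) do enter the computation. You can check this already for $n=1$ in Example~\ref{examp:ratl}: the naive top degree of $\cdet(W^{\dagger}W)$ at $t=0$ is $x^8$, those terms cancel, and the surviving $x^6$ term is assembled partly from products involving the $\qi$- and $\qk$-components. If instead you mean by ``leading-order matrix'' the matrix $W_0^{\dagger}W_0$, the identification of the leading term of $\cdet(W^{\dagger}W)$ with that of $\det(W_0^{\dagger}W_0)$ is exactly the nontrivial point: it amounts to a multiplicativity property of the double determinant $\cdet(M^{\dagger}M)$ (available, e.g., through the complex adjoint representation), which you have not established and which does not follow from ``real entries commute.''

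The paper takes a different and shorter route that avoids analyzing $\cdet(W^{\dagger}W)$ directly: it shows that a nontrivial right $\H$-linear combination $\sum\phi_i\alpha_i$ is never the zero function, because if $j$ is minimal with $\alpha_j\neq 0$ then $\phi_j\alpha_j$ contributes an uncancellable $x^{2(n-j)+1}$ term (the leading coefficient of each $\phi_i$ being a nonzero real), and then invokes the quaternionic linear-algebra fact that a square matrix over $\H$ whose homogeneous system $Wv=0$ admits only the trivial solution is invertible, combined with Proposition~\ref{prop:inv}. To repair your argument, either prove the multiplicativity statement for $\cdet(M^{\dagger}M)$ you are implicitly using, or replace the determinant estimate by the kernel argument.
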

\begin{proof}
Because
$$
\frac{\partial^3}{\partial x^3}\Delta_m(x,t)
=
\frac{\partial^3}{\partial x^3}\frac{\partial^i}{\partial z^i}\psi_0\bigg|_{z=0}
=
\frac{\partial^i}{\partial z^i}\frac{\partial^3}{\partial x^3}\psi_0\bigg|_{z=0}
=
\frac{\partial^i}{\partial z^i}z^3\psi_0\bigg|_{z=0}
$$
$$
=
\frac{\partial^i}{\partial z^i}\frac{\partial}{\partial t}\psi_0\bigg|_{z=0}
=
\frac{\partial}{\partial t}\frac{\partial^i}{\partial z^i}\psi_0\bigg|_{z=0}
=\frac{\partial}{\partial t}\Delta_m(x,t),$$
by linearity each function $\phi_i$ also satisfies the dispersion
condition of Definition~\ref{def:KdVDarbouxKernel}.

For $i<n$, $(\phi_i)_{xx}=\phi_{i+1}\in\F$.  On the other hand, $\phi_n$ is the
$(2n)^{th}$ derivative of a polynomial of degree $2n+1$, and so it is a
linear function of $x$.  Then $(\phi_n)_{xx}=0\in\span(\F)$.  Thus
$\F$ satisfies the closure property for KdV-Darboux Kernels.

Finally, to demonstrate the invertibility of the Wronskian matrix $W$,
we consider a linear combination
$$
\sum_{i=0}^n\phi_i\alpha_i
$$
of the entries in its first row with quaternionic coefficients on the
right.  Suppose that not all of the coefficients in this linear
combination are zero and let $j$ be the smallest value of $i$ for
which $\alpha_i\not=0$.  Then, since $\phi_i$ is a polynomial in $x$
of degree $2(n-i)+1$ the $x^{2(n-j)+1}$ term which comes from
$\phi_j\alpha_j$ cannot be cancelled by any other terms in the sum.
Then the linear combination is not equal to zero unless all of the
coefficients are zero and the homogenous vector equation $Wv=0$ has no
non-trivial solutions which implies the invertibility of the matrix \cite{QMatInv}.

Since $\F$ is a KdV-Darboux kernel we know that $u_{\F}$  is a KdV
solution and
\eqref{eqn:uF} shows that it can be found through products, and sums
of derivatives of these polynomials followed by division by a
real-valued polynomial and hence each component
function is a rational function of $x$ and $t$.
\end{proof}

\begin{example}\label{examp:ratl2} The first example of a
quaternion-valued KdV solution above in Example~\ref{examp:ratl} was
an instance of this construction in the case $n=1$, $\alpha_0=\qk$ and
$\alpha_1=\qi$.
\end{example}

\begin{remark}\label{rem:Serna}
In fact, one may choose any finite linear
combination of the polynomials $\Delta_m(x,t)$ and create a
KdV-Darboux kernel out of this polynomial and all of its non-zero,
even order derivatives in $x$.  However, due to
Lemmas~\ref{lem:samespan} and \ref{lem:dropone}, no new KdV solutions
would be gained by considering even degree polynomials in $\F$ or including
lower order odd degree terms in the formula for $\phi_i$.  (See \cite{Serna} for details.)
\end{remark}

\section{Unions of KdV-Darboux Kernels}
 
Since the dispersion and closure properties of Definition~\ref{def:KdVDarbouxKernel} are
preserved under the taking of unions, it follows immediately
that:
\begin{theorem}
If $\F_1$ and $\F_2$ are KdV-Darboux kernels then $\F=\F_1\cup\F_2$ is
also a KdV-Darboux kernel as long as its Wronskian matrix is invertible.
\end{theorem}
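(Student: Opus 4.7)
The plan is to verify each of the three defining properties of a KdV-Darboux kernel from Definition~\ref{def:KdVDarbouxKernel} for the union $\F=\F_1\cup\F_2$, using the fact that each already holds for $\F_1$ and for $\F_2$ individually. The independence property is handed to us by hypothesis, so the real content is to check that dispersion and closure are inherited under union. Both are essentially formal, so the main obstacle will be conceptual housekeeping rather than a calculation: I just need to be careful that the span appearing in the closure condition is computed inside $\F$ and not just inside the piece of the union from which a given function comes.

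I would proceed in the order dispersion, then closure, then independence. For \textbf{dispersion}, pick any $\phi\in\F=\F_1\cup\F_2$. Then $\phi$ lies in one of the two kernels, so it already satisfies $\phi_{xxx}=\phi_t$ by the dispersion clause applied to that kernel, and this is exactly the condition needed for $\phi$ as an element of $\F$.

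For \textbf{closure}, again take any $\phi\in\F$. Without loss of generality assume $\phi\in\F_1$. The closure property for $\F_1$ gives $\phi_{xx}\in\span(\F_1)$, meaning $\phi_{xx}$ is a right $\H$-linear combination of elements of $\F_1$. Since $\F_1\subseteq\F_1\cup\F_2=\F$, any such combination is also a right $\H$-linear combination of elements of $\F$, so $\phi_{xx}\in\span(\F)$, as required. The case $\phi\in\F_2$ is identical.

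Finally, \textbf{independence} is precisely the hypothesis that the Wronskian matrix of $\F$ satisfies $\cdet(W^\dagger W)\not\equiv 0$, so by Proposition~\ref{prop:inv} this Wronskian is invertible. All three clauses of Definition~\ref{def:KdVDarbouxKernel} are verified, so $\F$ is a KdV-Darboux kernel. As noted above, no real obstacle arises; the only thing worth flagging is that the union of two KdV-Darboux kernels is \emph{not} automatically a KdV-Darboux kernel in general, because their joint Wronskian can degenerate (e.g.\ if $\F_1$ and $\F_2$ share elements or, more subtly, if one overlaps the span of the other), which is exactly why the invertibility hypothesis must be included in the statement.
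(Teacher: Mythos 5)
Your proof is correct and matches the paper's approach exactly: the paper simply remarks that dispersion and closure are preserved under unions (which is the content of your first two verifications) and takes independence as the stated hypothesis, so the theorem ``follows immediately.'' Your additional observation about why the invertibility hypothesis cannot be dropped is a sensible elaboration but not a departure from the paper's argument.
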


This way of combining KdV-Darboux kernels allows for the creation of
$n$-soliton solutions or hybrids that exhibit features of more than
one of the basic solution types described in the previous section.
For example, there is a rational-periodic hybrid solution coming from
the union of the KdV-Darboux kernels in Examples~\ref{examp:ratl} and
\ref{examp:periodic}.  

The main focus of this section will be
 the case in which one additional function of the form
$\phi_{\alpha,\beta,\lambda}$ with $\lambda_0>0$ is added to a
KdV-Darboux kernel.  It is a consequence of Lemma~\ref{lem:dropone}
that the resulting solution will look like two
different solutions ``glued'' together, one visible to the left of the
localized disturbance that has been added and other other to the right
of it.  This general fact will be
both proved and illustrated in Section~\ref{sec:LR}.  Then in
Section~\ref{sec:2-sol} it will be used to derive a formula for
the phase shift of an arbitrary $2$-soliton solution.

\subsection{Asymptotics to the Left and Right of a Localized Disturbance}\label{sec:LR}
\begin{Proposition}\label{prop:LR}
Let $\F=\{\phi_1,\ldots,\phi_n\}$ be a KdV-Darboux kernel
where $n\geq2$ and $\phi_n$ has the form
$$
\phi_n(x,t)=\phi_{\alpha,\beta,\lambda}(x,t)=\alpha e^{\lambda x+
  \lambda^3 t}+\beta e^{-\lambda x - \lambda^3 t}
  $$
with 
$\lambda=\lambda_0+\lambda_1\qi,\ \lambda_0>0$.
 Then for each fixed $t$ and far enough to the left
of $x=c_{\alpha,\beta,\lambda}(t)$ the graph of $u_{\F}(x,t)$ looks
like\footnote{ We are being intentionally vague about what it means for one
solution to ``look like another'' to the right or left of some value
of $x$ here because the notation and proofs
both become unwieldy otherwise.  A rigorous mathematical definition
might include an arbitrarily small maximum amplitude for the difference of the two
functions when $x$ is is more than a certain distance to the right or
left of $c_{\alpha,\beta,\lambda}(t)$.  It is hoped that
Examples~\ref{examp:LRratl} and \ref{examp:onesidesing} illustrate both the meaning and significance of Proposition~\ref{prop:LR}.}
 the graph of 
$u_{\F^L}(x,t)$ where
$$
\F^L=\{Q^L(\phi_1),\ldots,Q^L(\phi_{n-1})\}
$$ 
with 
$Q^L(f)=f_x+\beta\lambda\beta^{-1}f$.
Similarly, for each fixed $t$ and far enough to the right of
$x=c_{\alpha,\beta,\lambda}(t)$ the solution $u_{\F}(x,t)$ looks like
$u_{\F^R}(x,t)$ where
$$
\F^R=\{Q^R(\phi_1),\ldots,Q^R(\phi_{n-1})\}
$$
with $Q^R(f)=f_x-\alpha\lambda\alpha^{-1}f$.
\end{Proposition}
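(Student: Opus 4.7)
The plan is to exploit the factorization $K = K_1 \circ Q_n$ used in the proof of Lemma~\ref{lem:dropone}, combined with the asymptotic dominance of one exponential term in $\phi_n$ over the other far from $x = c_{\alpha,\beta,\lambda}(t)$. The whole argument reduces the proposition to a limiting statement about the first-order factor of $K$.

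First, following the strategy of Lemma~\ref{lem:dropone}, I would let $Q_n = \partial - \phi_n'\phi_n^{-1}$ be the monic first-order operator annihilating $\phi_n$ and let $K_1$ be the monic operator of order $n-1$ annihilating $\{Q_n(\phi_1), \ldots, Q_n(\phi_{n-1})\}$, so that $K = K_1 \circ Q_n$. Expanding the composition, the coefficient of $\partial^{n-1}$ in $K$ equals $c_{n-1} = c_{n-2}^{(1)} - \phi_n'\phi_n^{-1}$, where $c_{n-2}^{(1)}$ is the coefficient of $\partial^{n-2}$ in $K_1$, and Theorem~\ref{thm:kdv} gives
\[
u_{\F} = (-2c_{n-1})_x = (-2c_{n-2}^{(1)})_x + 2(\phi_n'\phi_n^{-1})_x.
\]

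Second, I would extract the asymptotic behavior of $\phi_n'\phi_n^{-1}$. Writing $p = e^{\lambda x+\lambda^3 t}$ (so that $p$ commutes with $\lambda$ and $p^{-1} = e^{-\lambda x - \lambda^3 t}$), one has $\phi_n = \beta p^{-1} + \alpha p$ and $\phi_n' = -\beta\lambda p^{-1} + \alpha\lambda p$. Note that $|p|^2 = e^{2\lambda_0(x+\vr t)}$ is less than $|\alpha^{-1}\beta|$ precisely when $x < c_{\alpha,\beta,\lambda}(t)$ and decays exponentially as $x$ moves further left. Factoring $\phi_n = \beta p^{-1}(1 + p\beta^{-1}\alpha p)$ and expanding the second factor as a Neumann series (which converges in the region $x < c_{\alpha,\beta,\lambda}(t)$) yields
\[
\phi_n'\phi_n^{-1} = -\beta\lambda\beta^{-1} + O(|p|^2),
\]
so that $(\phi_n'\phi_n^{-1})_x \to 0$ exponentially far to the left. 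Consequently, for each $i<n$,
\[
Q_n(\phi_i) = \phi_i' + \beta\lambda\beta^{-1}\phi_i + O(|p|^2)\phi_i = Q^L(\phi_i) + O(|p|^2)\phi_i,
\]
and the same bound holds for every $x$-derivative of $Q_n(\phi_i)$.

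Third, since the coefficient $c_{n-2}^{(1)}$ is, via the formula of Theorem~\ref{thm:kdv} applied to the order-$(n-1)$ operator annihilating $\{Q_n(\phi_i)\}_{i<n}$, a rational expression in the $Q_n(\phi_i)$ and their $x$-derivatives whose denominator is the Chen determinant $\cdet(W_1^\dagger W_1)$, the convergence $Q_n(\phi_i) \to Q^L(\phi_i)$ (and the analogous convergence of the derivatives) translates into $c_{n-2}^{(1)}(x,t) \to c_{n-2}^L(x,t)$ and $(c_{n-2}^{(1)})_x \to (c_{n-2}^L)_x$, where $c_{n-2}^L$ is the sub-leading coefficient of the operator annihilating $\span(\F^L)$. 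Combining,
\[
u_{\F}(x,t) = (-2c_{n-2}^{(1)})_x + 2(\phi_n'\phi_n^{-1})_x \;\longrightarrow\; (-2c_{n-2}^L)_x + 0 = u_{\F^L}(x,t),
\]
giving the left-hand claim. The right-hand case is completely parallel: in the region $x > c_{\alpha,\beta,\lambda}(t)$ the $\alpha p$ term dominates, $\phi_n'\phi_n^{-1} \to \alpha\lambda\alpha^{-1}$, so $Q_n \to Q^R$ and the identical argument gives $u_{\F} \to u_{\F^R}$.

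The main obstacle is the denominator in the third step: converting the pointwise perturbation $Q_n(\phi_i) = Q^L(\phi_i) + O(|p|^2)\phi_i$ into a pointwise perturbation of $c_{n-2}^{(1)}$ requires that $\cdet(W_1^\dagger W_1)$ remain bounded away from zero (at a rate that dominates $|p|^2$) as $x \to -\infty$. This holds under the tacit hypothesis that $\F^L$ is itself a KdV-Darboux kernel, so that the limit of this Chen determinant is the nonvanishing one built from $\F^L$, but verifying the uniform lower bound in a left half-line is where the careful estimate is needed; the possibly unbounded growth of the other $\phi_i$ must not conspire with the perturbation to overwhelm the smallness of $|p|^2$.
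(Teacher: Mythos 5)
Your proposal is correct and follows essentially the same route as the paper: the paper's own proof simply observes that far to the left the term $\alpha e^{\lambda x+\lambda^3 t}$ is negligible, so $\phi_n$ is effectively $\beta e^{-\lambda x-\lambda^3 t}$, and then invokes Lemma~\ref{lem:dropone} to identify the resulting solution with $u_{\F^L}$ (and symmetrically on the right). Your version makes the same mechanism explicit and quantitative --- tracking $\phi_n'\phi_n^{-1}\to-\beta\lambda\beta^{-1}$ through the factorization $K=K_1\circ Q_n$ and flagging the need for a lower bound on the Wronskian determinant --- which supplies precisely the rigor the paper's footnote admits to omitting.
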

\begin{proof}
The ``center'' $c_{\alpha,\beta,\lambda}(t)$ is the value of $x$ for
which there is a balance between the magnitude of the two exponential
terms in $\phi_n$.  For $x$ much smaller than it,  $\alpha e^{\lambda x + \lambda^3 t}$ is negligibly
small.  For those values of $x$, the solution $u_{\F}(x,t)$ will
not look noticeably different than it would if $\phi_n$ was equal to $\beta e^{-\lambda x - \lambda^3
  t}$, which according to Lemma~\ref{lem:dropone} is precisely
$u_{\F^L}$.  Similarly, when $x$ is much larger than
$c_{\alpha,\beta,\lambda}(t)$ the term $\beta e^{-\lambda x -
  \lambda^3 t}$ is negligibly small and the solution would not look
noticeably different than it would if that term was not there, which
is $u_{\F^R}$ according to Lemma~\ref{lem:dropone}.
\end{proof}

\begin{example}\label{examp:LRratl} Consider the ``hybrid'' rational/soliton solution $u_{\F}$
that comes from the choice
$$
\F=\{x+3\qk,\phi_{\alpha,\beta,\lambda}\}\hbox{ with
}\alpha=1,\ \beta=\qj,\ \lambda=2+\qi.
$$
According to Proposition~\ref{prop:LR} the left side of this solution
should look like 
$$
u_{\F^L}=\frac{-50 x^2-40 x+444}{\left(5 x^2+4 x+46\right)^2}
+\frac{ (5 x+2)}{\left(5 x^2+4 x+46\right)^2}(4\qi+48\qj+36\qk)
$$
and the right side should like look 
$$
u_{\F^R}=\frac{-50 x^2+40 x+444}{\left(5 x^2-4 x+46\right)^2}
+\frac{(5 x-2)}{\left(5 x^2-4 x+46\right)^2}(4\qi-48\qj+36\qk)
$$
Each of these is a stationary ($t$-independent) quaternion-valued KdV solution.  They
are shown in the left-most and right-most images of
Figure~\ref{fig:LRratl} respectively.
The middle two images of that
figure show $u_{\F}$ at times $t=-5$ and $t=5$.  Then we can see that
$u_{\F}$ looks like $u_{\F^L}$ to the left of the incoming soliton and
looks like $u_{\F^R}$ to the right of it.
\end{example}
\begin{figure}
\centering

\hbox to \textwidth{%
\includegraphics[width=1.28in,height=1.28in]{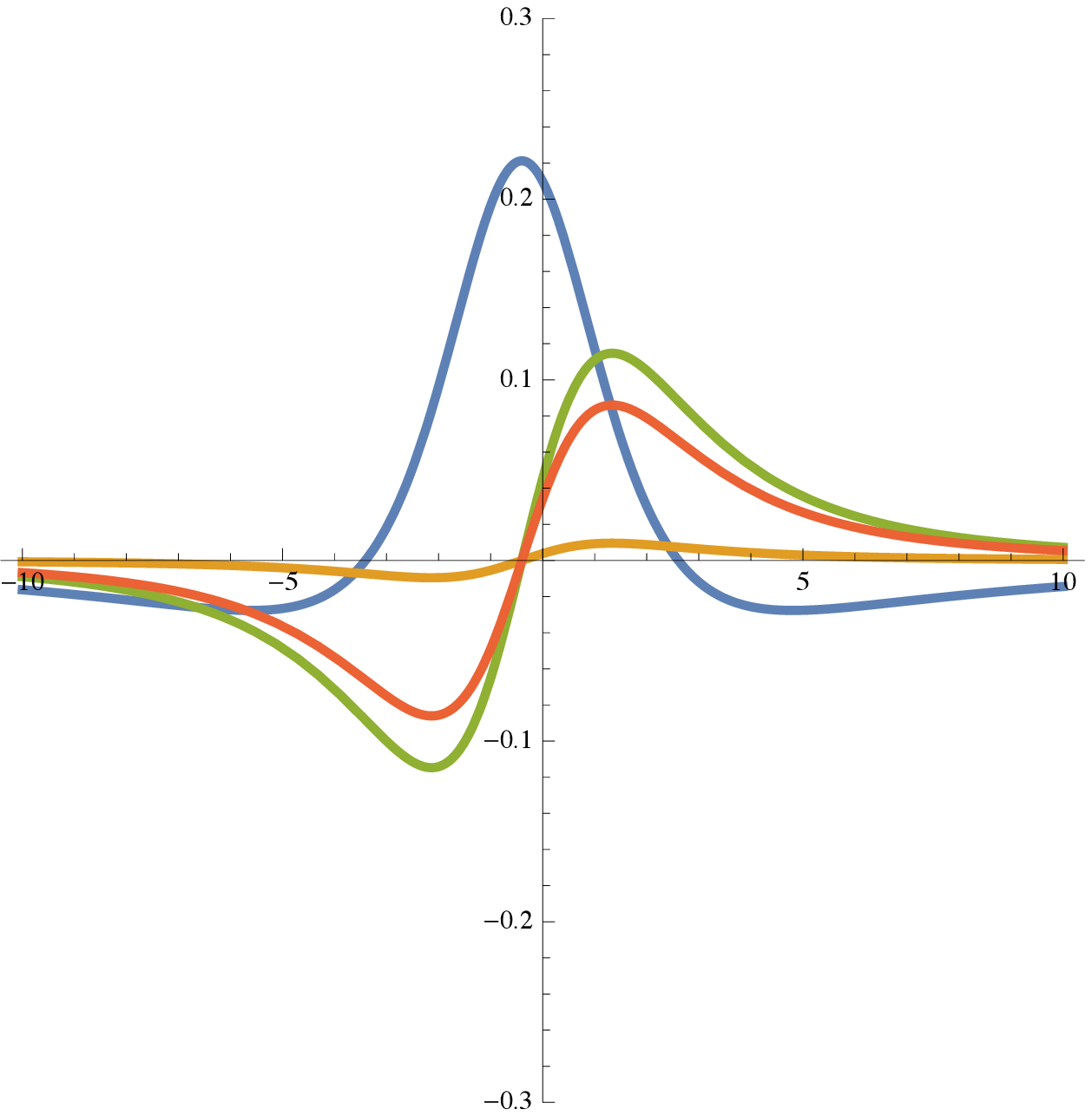}\hfill
\includegraphics[width=1.28in,height=1.28in]{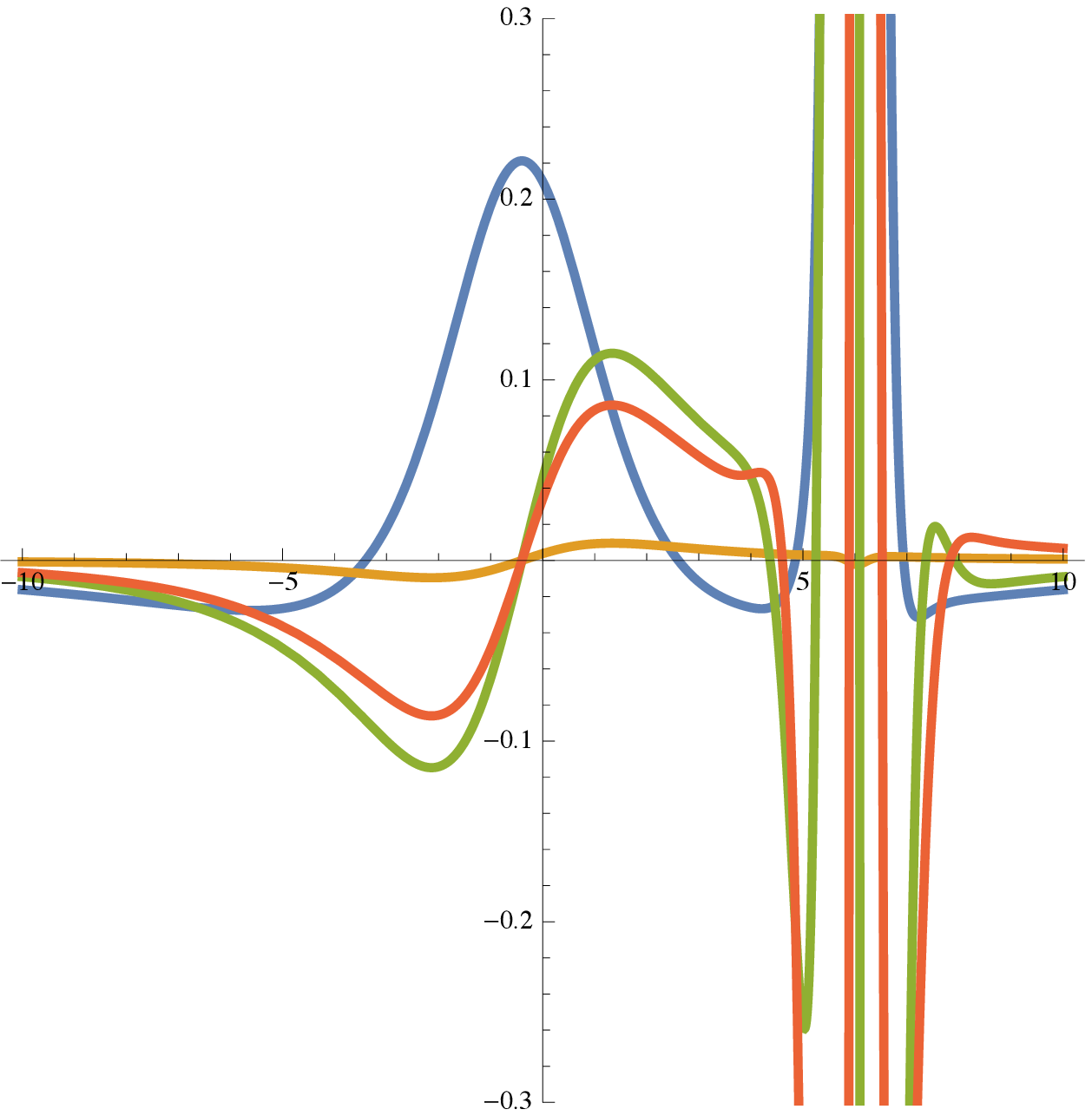}\hfill
\includegraphics[width=1.28in,height=1.28in]{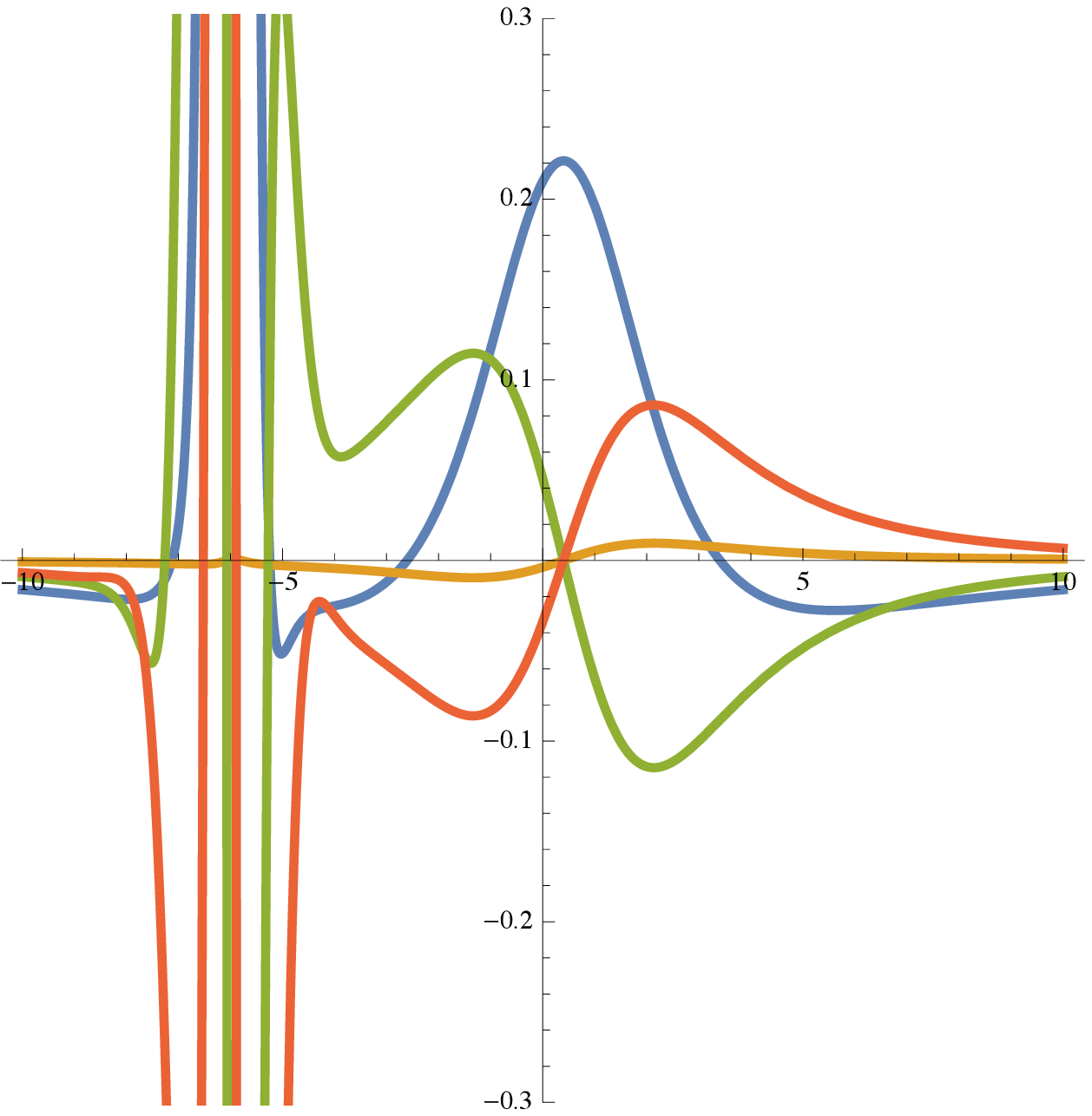}\hfill
\includegraphics[width=1.28in,height=1.28in]{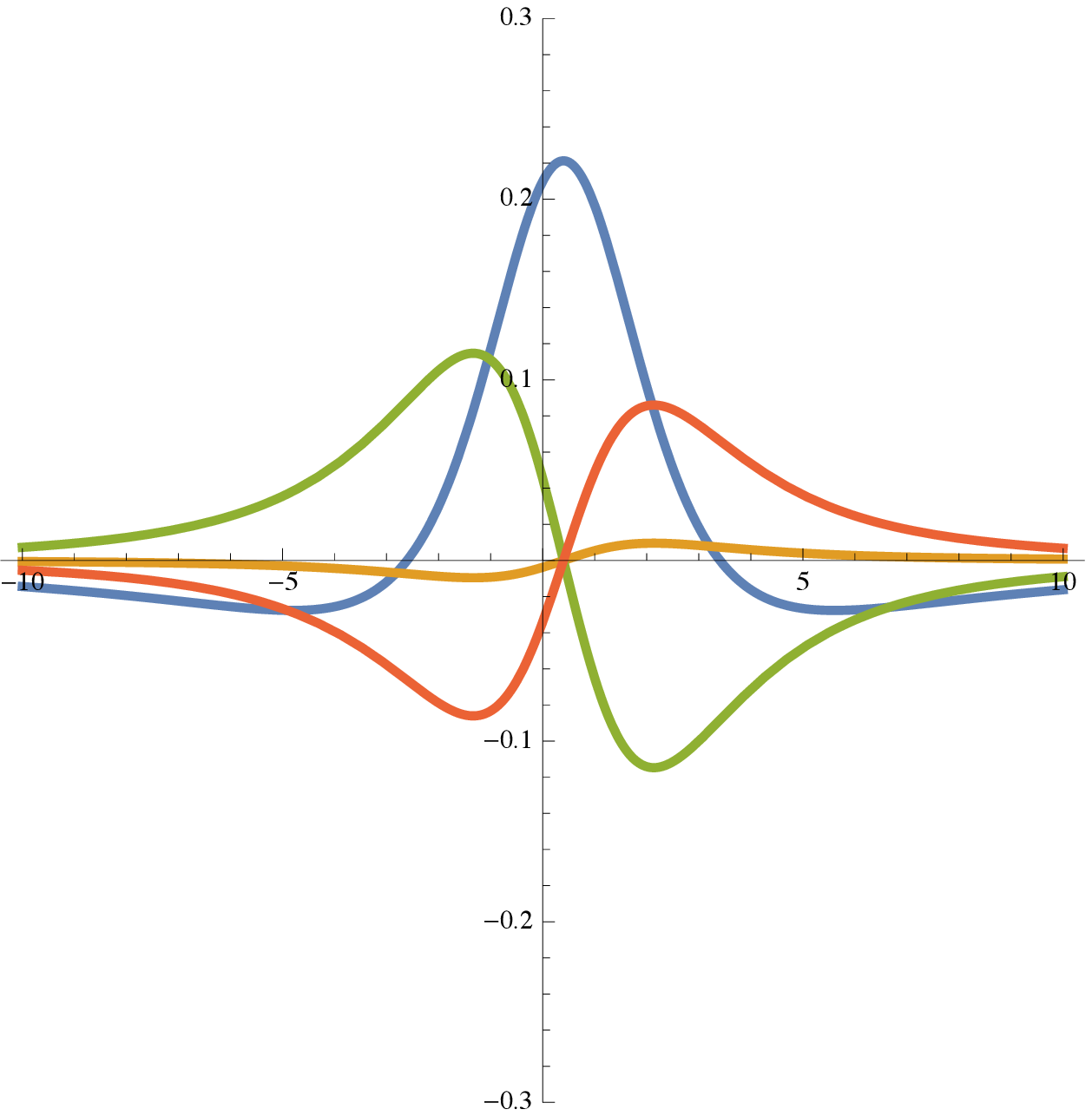}}

\caption{The figure on the left shows the stationary solution
  $u_{\F^L}$ and the figure on the right shows the stationary solution
  $u_{\F^R}$ from Example~\ref{examp:LRratl}.  The other two figures
  show the hybrid rational/soliton solution $u_{\F}$ at times $t=-5$
  and $t=5$ and one can see that it looks like $u_{\F^L}$ to the left
  of the localized disturbance and looks like $u_{\F^R}$ to the right
  of it.  Since a graph of $u_{\F}$ for very negative and positive
  times in the viewing window shown above would look indistinguishable
  from the figures at the far left and right above, one could look at
  them successively as representing an animation of the evolution in
  time of the solution $u_{\F}$: it begins looking like $u_{\F^L}$,
  then a localized disturbance comes in from the right and after it
  passes the solution now looks instead like $u_{\F^R}$.
}\label{fig:LRratl}
\end{figure}

\begin{remark}\label{rem:phitophi}
Since Proposition~\ref{prop:LR} will mostly be
applied in the case that each $\phi_i\in\F$ is a function of the form
\eqref{eqn:phiabl}, it is  worth noting that the
differential operator $Q$ defined by $Q(f)=f_x-\gamma f$ preserves that form.  In particular, it can
easily be computed that for any $\hatalpha,\hatbeta,\hatlambda\in\H$
\begin{equation}
Q(\phi_{\hatalpha,\hatbeta,\hatlambda})=\phi_{\tildealpha,\tildebeta,\hatlambda}\ 
\hbox{where}\ \tildealpha=\hatalpha\hatlambda-\gamma\hatalpha,\ \tildebeta=-\hatbeta\hatlambda-\gamma\hatbeta.
\label{eqn:Qphi}
\end{equation}
\end{remark}

\begin{example}\label{examp:onesidesing} Let $\F_1$ be the KdV-Darboux kernel
$$\F_1=\{\phi_{\alpha_1,\beta_1,\lambda_1}\}
\qquad\hbox{where}\qquad
\alpha_1=1,\ \beta_1=\frac{1}{\sqrt{5}},\ \lambda_1=\qi
$$
then $u_{\F_1}$ is a complex-valued, non-singular, periodic solution to
\eqref{eqn:KdV}.
And let $\F_2$ be the KdV-Darboux kernel
$$
\F_2=\{\phi_{\alpha_2,\beta_2,\lambda_2}\}\qquad
\hbox{where}\qquad
 \alpha_2=\qi+\qj,\ \beta_2=\qj+\qk,\ \lambda_2=1+\qi
$$
so that $u_{\F_2}$ is a breather soliton solution traveling to the
right at speed $2$.
What will the solution coming from $\F=\F_1\cup\F_2$ look like?
According to Proposition~\ref{prop:LR} and Remark~\ref{rem:phitophi}, to the left of $x=c_{\alpha_2,\beta_2,\lambda_2}=2t$ should
look like $u_{\alpha_L,\beta_L,\lambda_1}$ with
$$
\alpha_L=1,\ 
\beta_L=\frac{1}{\sqrt{5}}-\frac{2}{\sqrt{5}}\qi
$$
 while
on the right it should look like $u_{\alpha_R,\beta_R,\lambda_1}$
where
$$
\alpha_R=-1+\qi-\qj,\ 
\beta_R=-\frac{1}{\sqrt{5}}-\frac{1}{\sqrt{5}}\qi-\frac{1}{\sqrt{5}}\qj.
$$
  The interesting thing about
this is that since $|\beta_L|=1\not=|\beta_R|$ the solution it looks like on the
left is
\textit{singular} while the one on the right is not.  Figure~\ref{fig:LR} shows
that the solutions appear as predicted.  Moreover, this solution is
very interesting to watch as an animation because the localized
disturbance traveling to the right seems to transform a non-singular
periodic solution into a singular one as it passes.
\end{example}

\begin{figure}
\centering

\hbox to \textwidth{
\relax{\includegraphics[width=1.54in,height=1.54in]{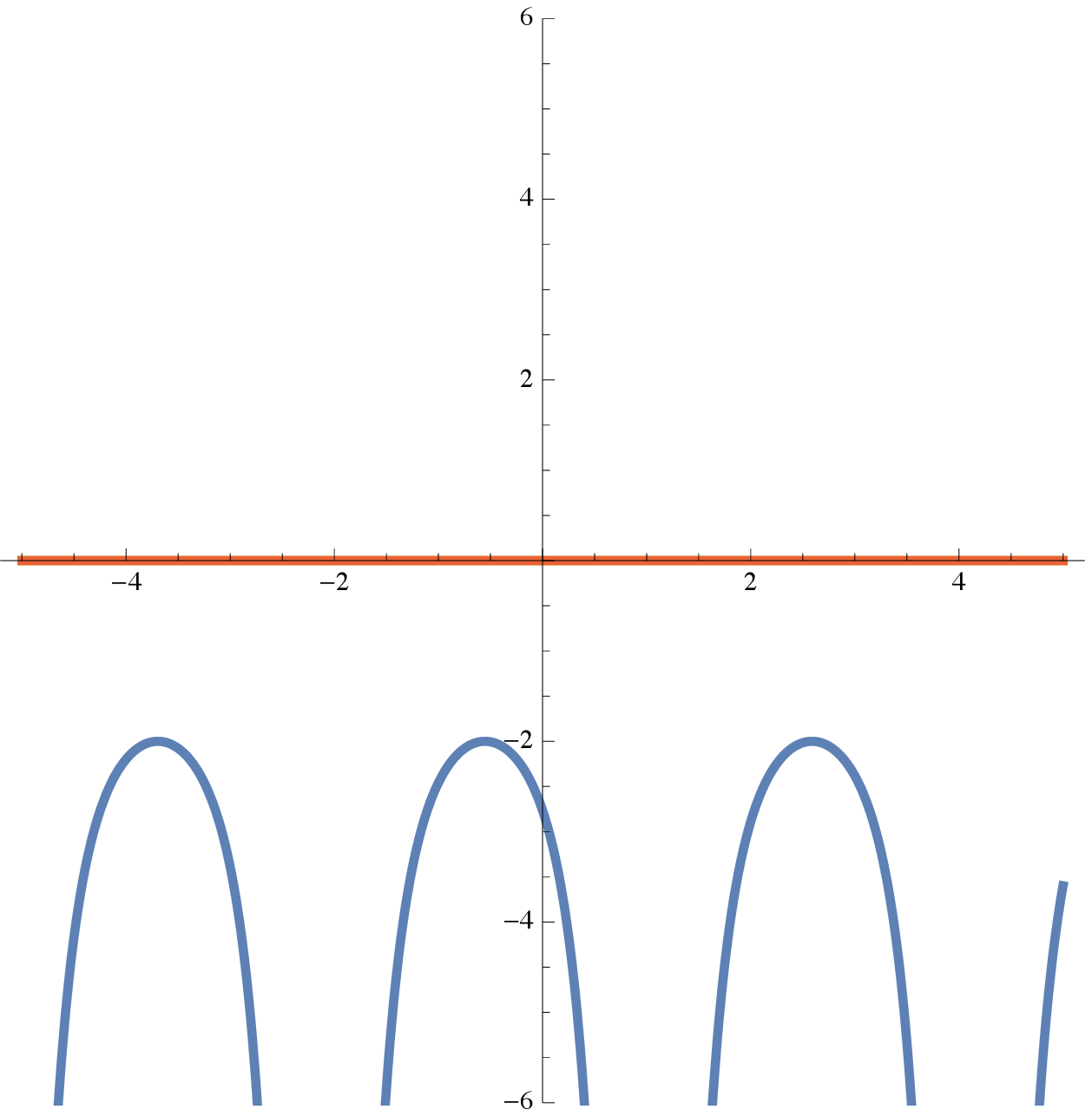}}\hfill
\relax{\includegraphics[width=1.54in,height=1.54in]{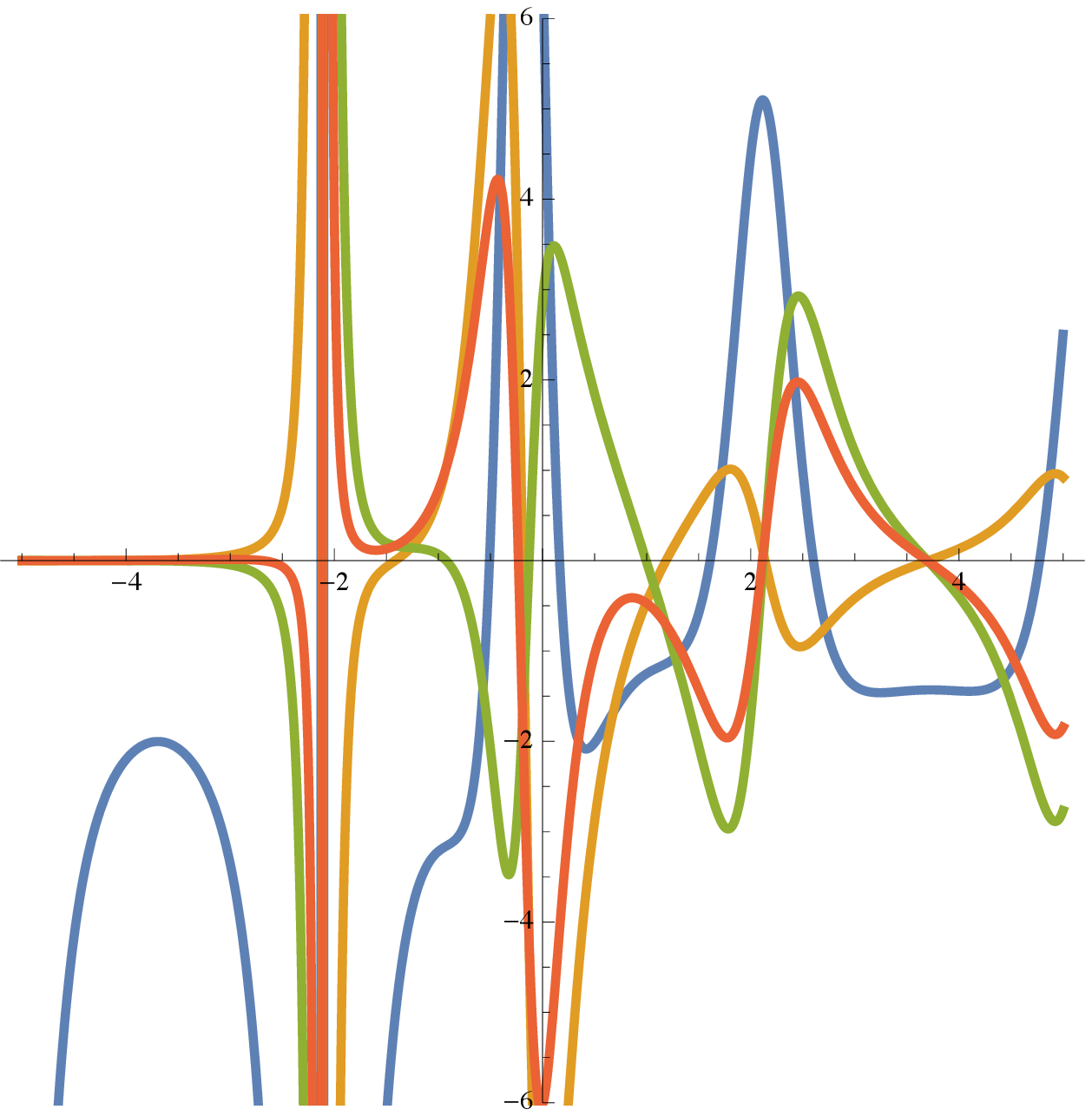}}\hfill
\relax{\includegraphics[width=1.54in,height=1.54in]{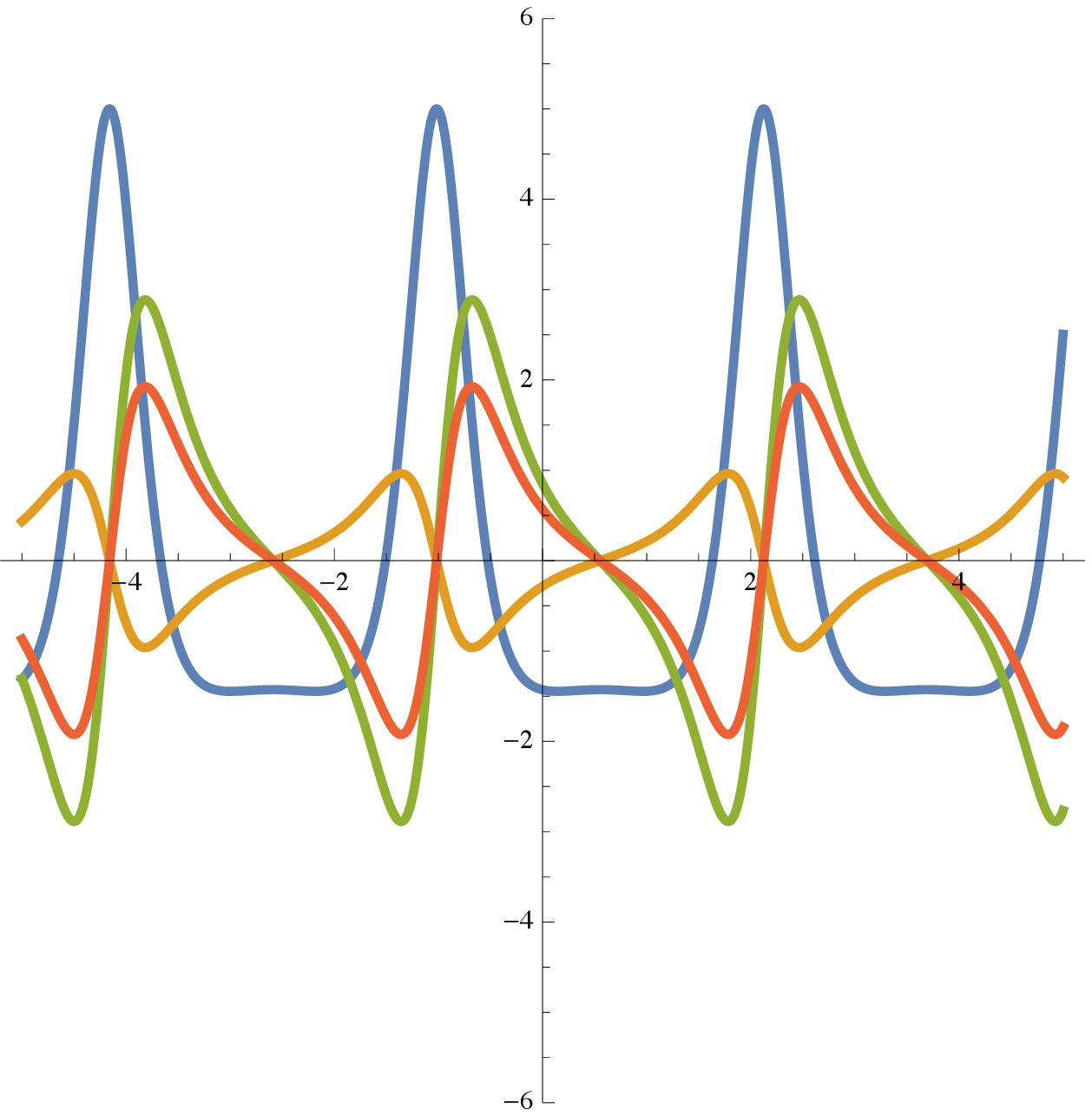}}
}

\caption{The quaternion-valued KdV solution shown in the middle is
  $u_{\F}$ from Example~\ref{examp:onesidesing}.  According to
  Proposition~\ref{prop:LR} it should look like the solution
  $u_{\alpha_L,\beta_L,\lambda_1}$ shown in the figure on the left for
 sufficiently negative values of $x$ and
 it should look like $u_{\alpha_R,\beta_R,\lambda_1}$,
  whose graph appears at the right for sufficiently positive values of
  $x$.  In fact, this convergence occurs quickly enough that one
  cannot visually tell the difference for $|x|>3$.
(All three solutions are shown at time $t=0$.)}\label{fig:LR}
\end{figure}

\subsection{Phase Shift of the General 2-soliton}\label{sec:2-sol}

Suppose $\alpha$, $\beta$, $\hat\alpha$ and $\hat\beta$ are non-zero
quaternions and that $\lambda$ and $\hat\lambda$ are complex numbers
such that:
$$
\lambda_0>0,\qquad\hat\lambda_0>0,\qquad\hbox{and}\qquad
\hat\lambda_0^2-3\hatlambda_1^2<\lambda_0^2-3\lambda_1^2.
$$
Then $u_{\alpha,\beta,\lambda}$ and
$u_{\hatalpha,\hatbeta,\hatlambda}$ are each quaternion-valued
1-soliton solutions to \eqref{eqn:KdV}.  Moreover, because of the last
inequality they have different velocities.  In particular, the center
$c_{\alpha,\beta,\lambda}(t)$ moves to the left more quickly than
$c_{\hatalpha,\hatbeta,\hatlambda}(t)$ (or, equivalently, moves to the
right more slowly).

This section will use Proposition~\ref{prop:LR} to analyze the
quaternion-valued KdV solution $u_{\F}$ where 
$$
\F=\{\phi_{\alpha,\beta,\lambda},\phi_{\hatalpha,\hatbeta,\hatlambda}\}.
$$
As will be seen below, 
there are constants  $\alpha_-,\beta_-,\hatalpha_-,\hatbeta_-\in\H$ so that
an observer watching an animation of $u_{\F}$ for sufficiently
negative values of $t$
 would see two localized disturbances traveling
with the same velocities as $c_{\alpha,\beta,\lambda}$ and
$c_{\hatalpha,\hatbeta\hatlambda}$.  In particular, there are quaternions
$\alpha_-,\beta_-,\hatalpha_-,\hatbeta_-\in\H$ so that the observer
would see what appeared to be the localized disturbances from the
solution $u_{\alpha_-,\beta_-,\lambda}$ and
$u_{\hatalpha_-,\hatbeta_-,\hatlambda}$ together in this $2$-soliton
solution.

There are also numbers $\alpha_+,\beta_+,\hatalpha_+,\hatbeta_+\in\H$
so that an animation of $u_{\F}$ for sufficiently positive values of
$t$ will look like a sum of the localized solutions
$u_{\alpha_+,\beta_+,\lambda}$ and
$u_{\hatalpha_+,\hatbeta_+,\hatlambda}$.  So, the observer would still
see two localized disturbances traveling with the same velocities.
However, they might not be in the locations that the observer would
have predicted.  Suppose the observer identifies the localized
disturbances of the same velocities at different times.  Then, since
at negative times the disturbance with greater velocity appears to be following the
linear trajectory $c_{\alpha_-,\beta_-,\lambda}(t)$, the observer may
expect it to be found on that same line at positive times as well.
Indeed, since $c_{\alpha_-,\beta_-,\lambda}$ and
$c_{\alpha_+,\beta_+,\lambda}(t)$ are moving to the left at the same
velocity, their difference is a constant.  If it is zero, then the
observer will find that the disturbance at later times is right where
it would have been expected to be.  However, if this difference is
non-zero, then it will be \textit{shifted} from its expected position.
Hence, the number
$c_{\alpha_+,\beta_+,\lambda}-c_{\alpha_-,\beta_-,\lambda}$ is called
the \textit{phase shift} of that localized disturbance and it is
generally interpreted as being a lasting consequence of its collision
with the other localized disturbance (though there are other
interpretations as well)\cite{ZK,BKY}. 

The following result provides formulas for the parameters for the
corresponding $1$-soliton solutions and the phase shifts of each of
the localized disturbances.   As the examples after it
demonstrate, unlike the real case, the phase shift of the solitary
wave with greater velocity  in the
quaternion-valued KdV $2$-soliton can be positive, negative, \textit{or} zero.  Additionally,
the solitary waves traveling at the same velocities at positive and
negative may differ in \textit{shape} as well as being horizontally
shifted relative to each other.

\begin{Proposition}\label{prop:PhaseShift}
Choose constant $\alpha$, $\beta$, $\lambda$, $\hatalpha$, $\hatbeta$,
and $\hatlambda$ as described above and let $\F$ be the KdV-Darboux kernel
$$\F=\{\phi_{\alpha,\beta,\lambda},\phi_{\hat\alpha,\hat\beta,\hatlambda}\}.$$
\begin{itemize}
\item For sufficiently negative values of $t$, the solution $u_{\F}$ will
look like the $1$-soliton $u_{\alpha_-,\beta_-,\lambda}$ near its
center $x=c_{\alpha_-,\beta_-,\lambda}(t)$ and will also look like the
$1$-soliton  $u_{\hatalpha_-,\hatbeta_-,\hatlambda}$ near its
center $x=c_{\hatalpha_-,\hatbeta_-,\hatlambda}(t)$
where 
$$
\alpha_-= \alpha\lambda-\hatalpha\hatlambda\hatalpha^{-1}\alpha,
\qquad
\beta_-=-\beta\lambda-\hatalpha\hatlambda\hatalpha^{-1}\beta,
$$
$$
\hatalpha_-= \hatalpha\hatlambda+\beta\lambda\beta^{-1}\hatalpha,
\qquad \hbox{and}\qquad 
\hatbeta_-=-\hatbeta\hatlambda+\beta\lambda\beta^{-1}\hatbeta.
$$

\item For sufficiently positive values of $t$, the solution $u_{\F}$ will
look like the $1$-soliton $u_{\alpha_+,\beta_+,\lambda}$ near its
center $x=c_{\alpha_+,\beta_+,\lambda}(t)$ and will also look like the
$1$-soliton  $u_{\hatalpha_+,\hatbeta_+,\hatlambda}$ near its
center $x=c_{\hatalpha_+,\hatbeta_+,\hatlambda}(t)$
where 
$$
\alpha_+=\alpha\lambda+\hatbeta\hatlambda\hatbeta^{-1}\alpha
\qquad
\beta_+=-\beta\lambda+\hatbeta\hatlambda\hatbeta^{-1}\beta
$$
$$
\hatalpha_+= \hatalpha\hatlambda-\alpha\lambda\alpha^{-1}\hatalpha
\qquad \hbox{and}\qquad 
\hatbeta_+=-\hatbeta\hatlambda-\alpha\lambda\alpha^{-1}\hatbeta
$$
\item  An observer looking at the solution $u_{\F}$ at very negative
   values of time would see two localized disturbances that appear to be moving
   at constant speeds.  Looking again at a very positive time the
   observer would see two localized disturbances with the same
   velocities, but they might not be in the locations that would have
   been expected if they had continued to follow a linear trajectory.
 The phase shifts experienced by the interacting solitary waves are
%\begin{eqnarray*}
$$
c_{\alpha_+,\beta_+,\lambda}(t)-c_{\alpha_-,\beta_-,\lambda}(t)=\frac{\ln(\gamma)}{2\lambda_0}%\\
\qquad\hbox{ and }\qquad
c_{\hatalpha_+,\hatbeta_+,\hatlambda}(t)-c_{\hatalpha_-,\hatbeta_-,\hatlambda}(t)=-\frac{\ln(\gamma)}{2\hatlambda_0}
$$
%\end{eqnarray*}
where
$$
\gamma=\frac{|\alpha\lambda\alpha^{-1}-\hatalpha\hatlambda\hatalpha^{-1}||
\beta\lambda\beta^{-1}-\hatbeta\hatlambda\hatbeta^{-1}|}{|\alpha\lambda\alpha^{-1}+\hatbeta\hatlambda\hatbeta^{-1}||\beta\lambda\beta^{-1}+\hatalpha\hatlambda\hatalpha^{-1}|}.$$
\end{itemize}
\end{Proposition}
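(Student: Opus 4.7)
The plan is to read off the formulas by applying Proposition~\ref{prop:LR} twice --- once with each of the two exponential functions in $\F$ playing the role of ``$\phi_n$'' --- then using Remark~\ref{rem:phitophi} to rewrite the resulting one-element kernels in the standard form $\{\phi_{\alpha',\beta',\lambda'}\}$, and finally substituting into the center formula \eqref{eqn:center} to extract the phase shifts.

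First I would fix the time geometry using the hypothesis $\hatlambda_0^2-3\hatlambda_1^2<\lambda_0^2-3\lambda_1^2$: the $\lambda$-soliton has the larger velocity $\vr$, so for $t\ll0$ its center lies far to the right of $c_{\hatalpha,\hatbeta,\hatlambda}(t)$ and for $t\gg0$ it lies far to the left. Applying Proposition~\ref{prop:LR} with $\phi_n=\phi_{\hatalpha,\hatbeta,\hatlambda}$ then says that for sufficiently negative $t$ the solution $u_\F$ looks near $c_{\alpha,\beta,\lambda}(t)$ like $u_{\{Q^R(\phi_{\alpha,\beta,\lambda})\}}$, and for sufficiently positive $t$ like $u_{\{Q^L(\phi_{\alpha,\beta,\lambda})\}}$. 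Plugging $\gamma=\hatalpha\hatlambda\hatalpha^{-1}$ and $\gamma=-\hatbeta\hatlambda\hatbeta^{-1}$ respectively into Remark~\ref{rem:phitophi} rewrites these as $\{\phi_{\alpha_-,\beta_-,\lambda}\}$ and $\{\phi_{\alpha_+,\beta_+,\lambda}\}$ with exactly the coefficients in the statement. Running the same argument with the two elements of $\F$ swapped (so $\phi_n=\phi_{\alpha,\beta,\lambda}$, and $\gamma=\alpha\lambda\alpha^{-1}$ or $-\beta\lambda\beta^{-1}$) yields the formulas for $\hatalpha_\pm$ and $\hatbeta_\pm$.

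For the phase shifts I would note that each of $\alpha_\pm$ and $\beta_\pm$ admits a clean right factorization, e.g.\ $\alpha_+=(\alpha\lambda\alpha^{-1}+\hatbeta\hatlambda\hatbeta^{-1})\alpha$ and $\beta_-=-(\beta\lambda\beta^{-1}+\hatalpha\hatlambda\hatalpha^{-1})\beta$. Taking magnitudes therefore makes $|\alpha|$ and $|\beta|$ cancel from the ratio $|\alpha_+^{-1}\beta_+|/|\alpha_-^{-1}\beta_-|$, which collapses to exactly the number $\gamma$ in the statement. Because both centers have the same $\vr t$ contribution, those pieces cancel in the difference, so \eqref{eqn:center} gives the first phase shift as $\ln(\gamma)/(2\lambda_0)$. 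The analogous right-factorizations of $\hatalpha_\pm$ and $\hatbeta_\pm$ (pulling out $|\hatalpha|$ and $|\hatbeta|$ instead) produce the reciprocal ratio $1/\gamma$, and the denominator $2\hatlambda_0$ yields the second phase shift $-\ln(\gamma)/(2\hatlambda_0)$.

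The principal obstacle is not algebraic but rather the bookkeeping in the first step: Proposition~\ref{prop:LR} only erases the last entry of $\F$, so it must be invoked with both orderings, and in each invocation the assignment of $Q^L$ versus $Q^R$ to the $t\to-\infty$ versus $t\to+\infty$ regime hinges on whether the surviving soliton is on the left or right of the eliminated one at that time --- a correspondence driven entirely by the velocity inequality. Once this is straight, the rest of the argument is a direct application of Remark~\ref{rem:phitophi} and the magnitude factorizations described above.
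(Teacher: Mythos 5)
Your proposal is correct and takes essentially the same route as the paper's own proof: two applications of Proposition~\ref{prop:LR} (one with each element of $\F$ in the role of $\phi_n$, with the $Q^L$/$Q^R$ assignment dictated by the velocity inequality), the rewriting of the resulting one-element kernels via Remark~\ref{rem:phitophi}, and the computation of the phase shifts from \eqref{eqn:center} after the right factors $|\alpha|,|\beta|$ (respectively $|\hatalpha|,|\hatbeta|$) cancel from the ratio $|\alpha_-||\beta_+|/(|\alpha_+||\beta_-|)$. The paper's argument is organized the same way, merely spelling out more explicitly why the faster center eventually sits in the correct asymptotic region.
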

\begin{proof}
First, we will apply 
Proposition~\ref{prop:LR} to $\F$ using
$\phi_{\hatalpha,\hatbeta,\hatlambda}$ in the role of $\phi_n$.
The proposition says that for all $t$, far enough to the right of $c_{\hatalpha,\hatbeta,\hatlambda}(t)$, the
graph of the solution $u_{\F}$ will look like
$u_{\alpha_-,\beta_-,\lambda}$ using the values from above.  Because
the center $c_{\alpha_-,\beta_-,\lambda}(t)$ is moving to the left more
quickly, for any sufficiently negative value of $t$, it will be far
enough to the right of $c_{\hatalpha,\hatbeta,\hatlambda}$ so that
it is in the region where $u_{\F}$ looks like
$u_{\alpha_-,\beta_-,\lambda}$.  Therefore for very negative values of
$t$, $u_{\F}$ has a localized disturbance that looks like the one in
the one soliton $u_{\alpha_-,\beta_-,\lambda}$.

On the other hand, when $t$ is very positive then
$c_{\alpha_+,\beta_+,\lambda}(t)$ will be far to the right of
$c_{\hatalpha,\hatbeta,\hatlambda}(t)$.  In particular
because it is moving to the
left at a faster constant speed when $t$ is sufficiently large it will
be located in the region where according to Proposition~\ref{prop:LR}
the solution $u_{\F}$ will look like $u_{\alpha_+,\beta_+,\lambda}$.
Since that is the location around which the soliton is localized in
that $1$-soliton, the solution $u_{\F}$ will look like that 1-soliton
near that point.

The other parameters come from repeating this same process but now
using $\phi_{\alpha,\beta,\lambda}$ in the role of $\phi_n$ when
applying Proposition~\ref{prop:LR}.

Now, $c_{\alpha_+,\beta_+,\lambda}(t)$ and
$c_{\alpha_-,\beta_-,\lambda}(t)$ are two linear trajectories with the
same velocity. The difference between them is the phase shift, how
much further to the right the disturbance is after the collision than
it would have been if it had continued to look like
$u_{\alpha_-,\beta_-,\lambda}$.  Using the formula for the center,
properties of logarithms and properties of the length operator on
quaternions, we can see that
\begin{eqnarray*}
c_{\alpha_+,\beta_+,\lambda}(t)-c_{\alpha_-,\beta_-,\lambda}(t)
&=&\frac{\ln|\alpha_+^{-1}\beta_+|}{2\lambda_0}-\frac{\ln|\alpha_-^{-1}\beta_-|}{2\lambda_0}
=\frac{1}{2\lambda_0}\ln\left(\frac{|\alpha_+^{-1}\beta_+|}{|\alpha_-^{-1}\beta_-|}\right)\\
&=&\frac{1}{2\lambda_0}\ln\left(\frac{|\alpha_-||\beta_+|}{|\alpha_+||\beta_-|}\right)\\
 &=& \frac{1}{2\lambda_0}\ln\left(\frac{|\alpha\lambda-\hatalpha\hatlambda\hatalpha^{-1}\alpha||
-\beta\lambda+\hatbeta\hatlambda\hatbeta^{-1}\beta|}{|\alpha\lambda+\hatbeta\hatlambda\hatbeta^{-1}\alpha||-\beta\lambda-\hatalpha\hatlambda\hatalpha^{-1}\beta|}\right)\\
&=&\frac{1}{2\lambda_0}\ln\left(\frac{|\alpha\lambda\alpha^{-1}-\hatalpha\hatlambda\hatalpha^{-1}||
\beta\lambda\beta^{-1}-\hatbeta\hatlambda\hatbeta^{-1}|}{|\alpha\lambda\alpha^{-1}+\hatbeta\hatlambda\hatbeta^{-1}||\beta\lambda\beta^{-1}+\hatalpha\hatlambda\hatalpha^{-1}|}\right)=\frac{\ln(\gamma)}{2\lambda_0}.
\end{eqnarray*}
A similar calculation for
$c_{\hatalpha_+,\hatbeta_+,\hatlambda}-c_{\hatalpha_-,\hatbeta_-,\hatlambda}$
results in the same formula but with the numerator and denominator
switched in the argument of the logarithm with the effect of changing
the sign.
\end{proof}
\begin{figure*}
\hbox to \textwidth{%
\includegraphics[width=1.32in,height=1.32in]{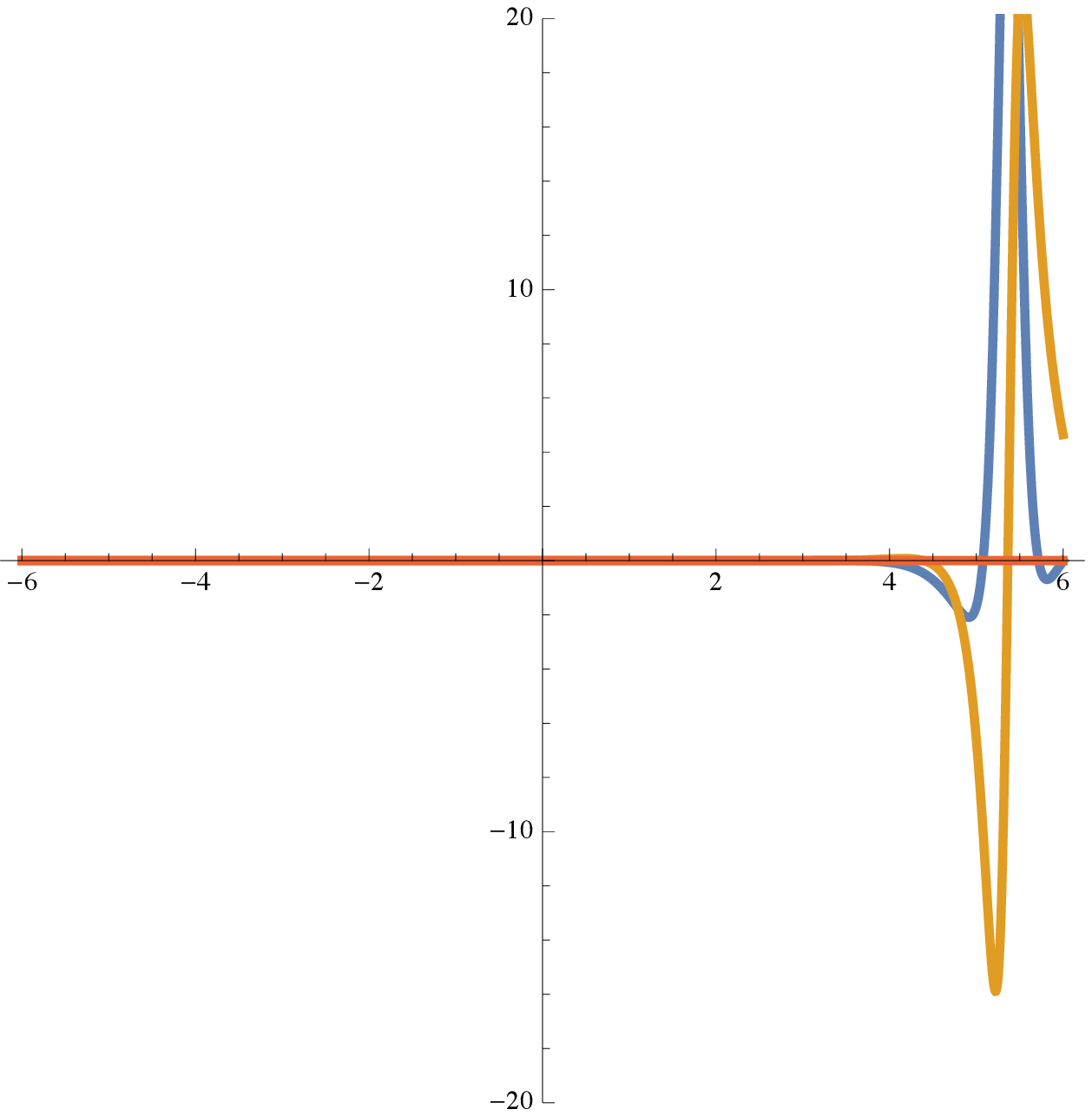}\hfill
\includegraphics[width=1.32in,height=1.32in]{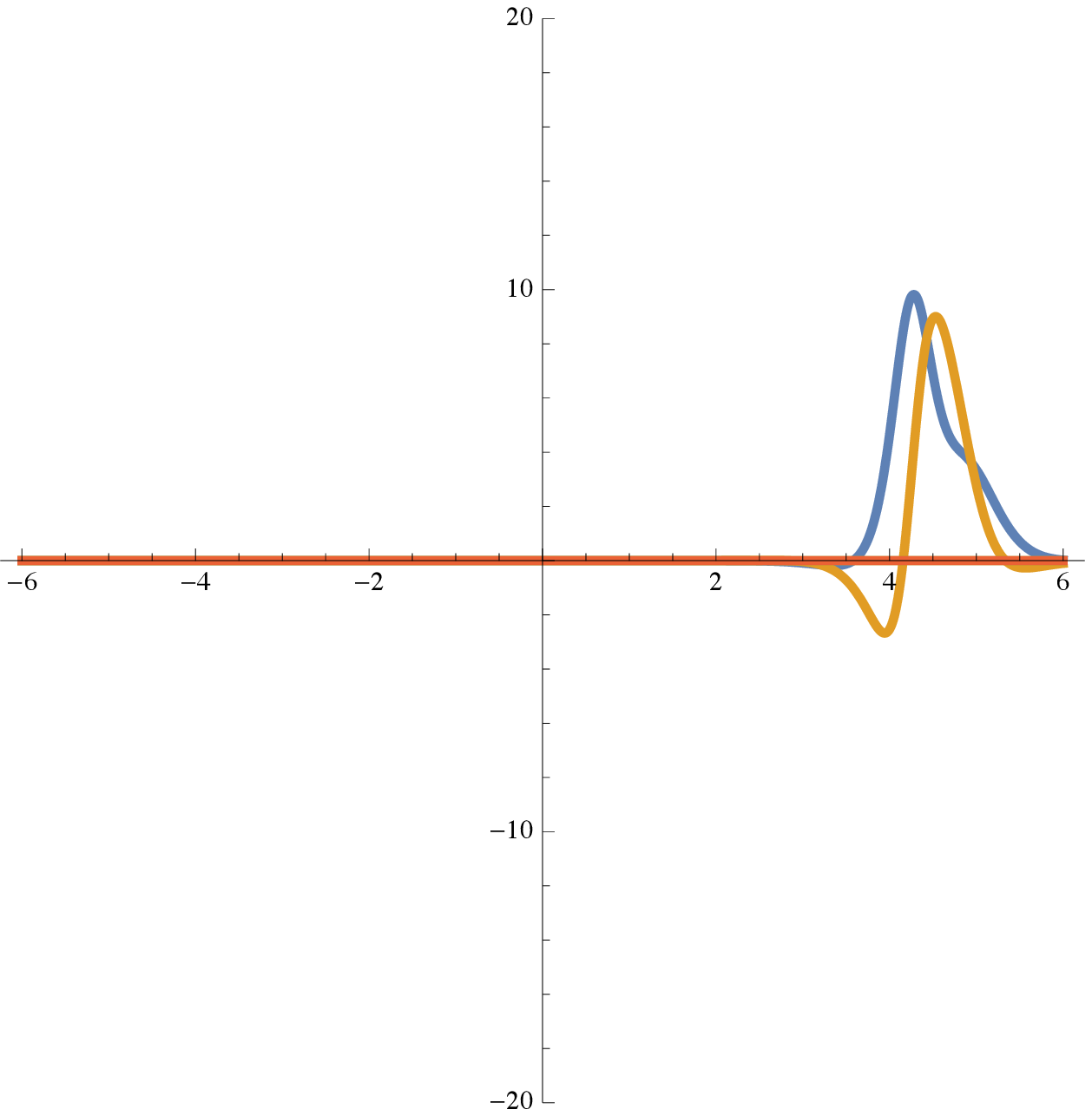}\hfill
\includegraphics[width=1.32in,height=1.32in]{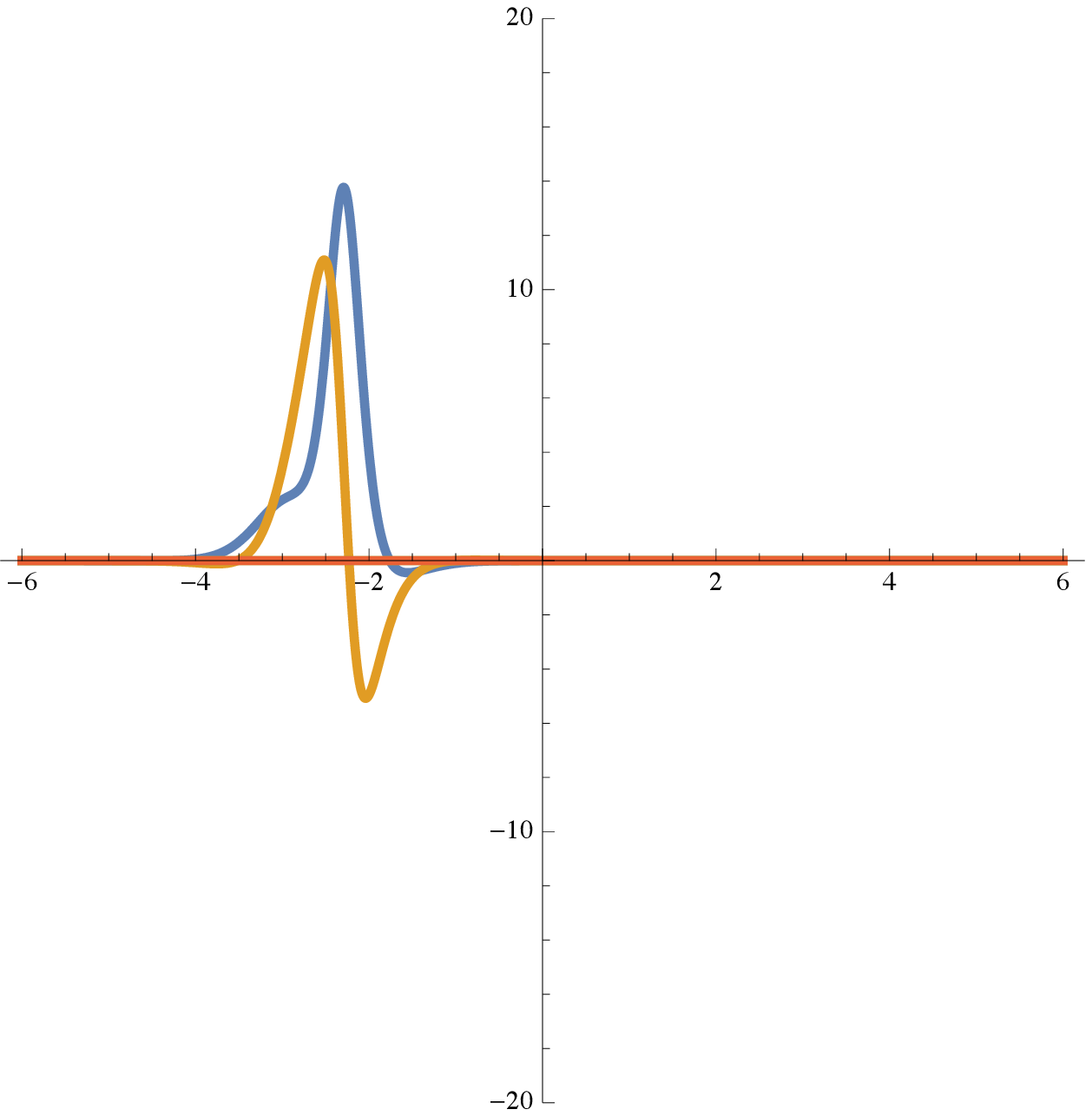}\hfill
\includegraphics[width=1.32in,height=1.32in]{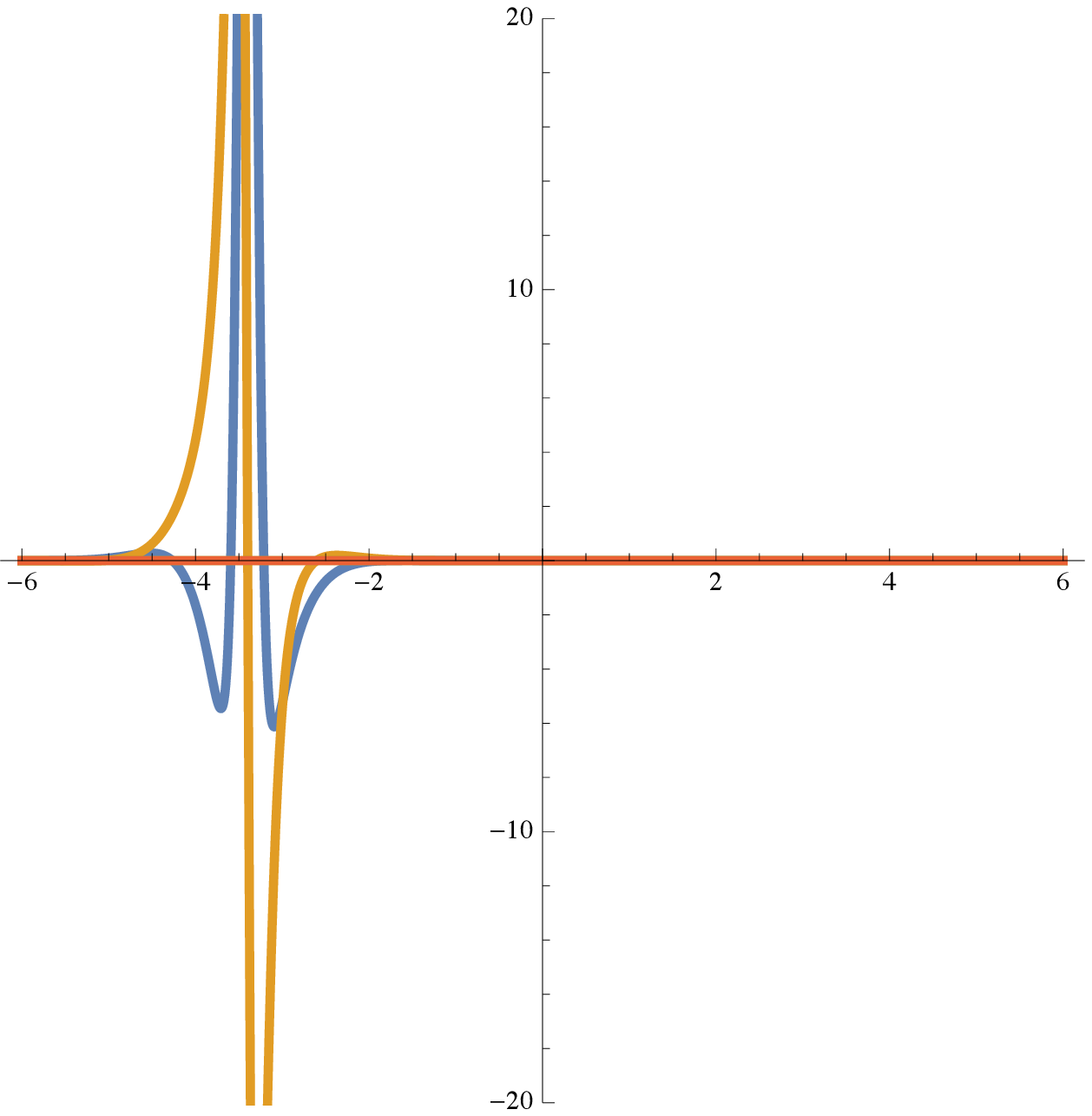}}

\hbox to \textwidth{%
\includegraphics[width=1.32in,height=1.32in]{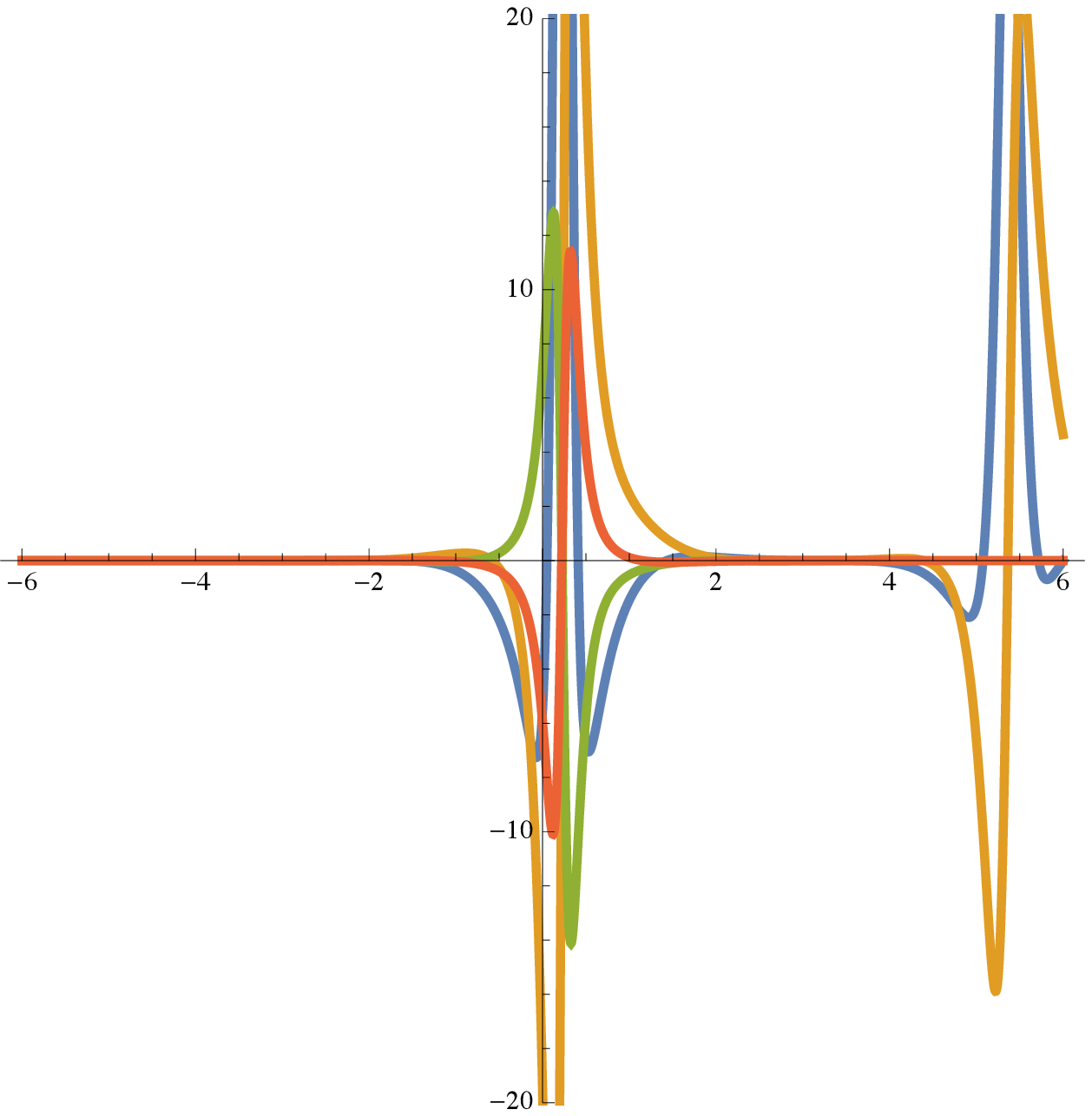}\hfill
\includegraphics[width=1.32in,height=1.32in]{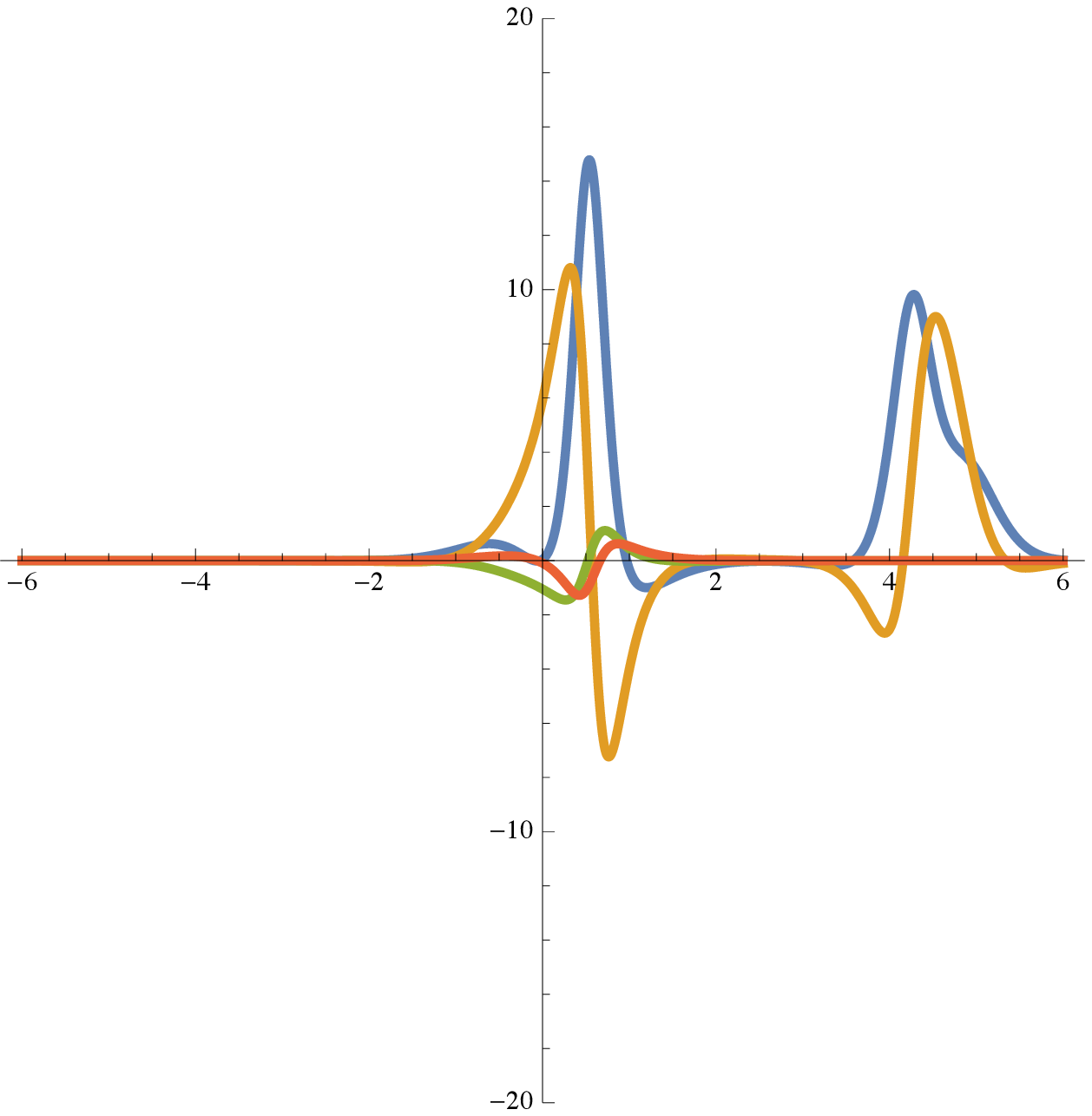}\hfill
\includegraphics[width=1.32in,height=1.32in]{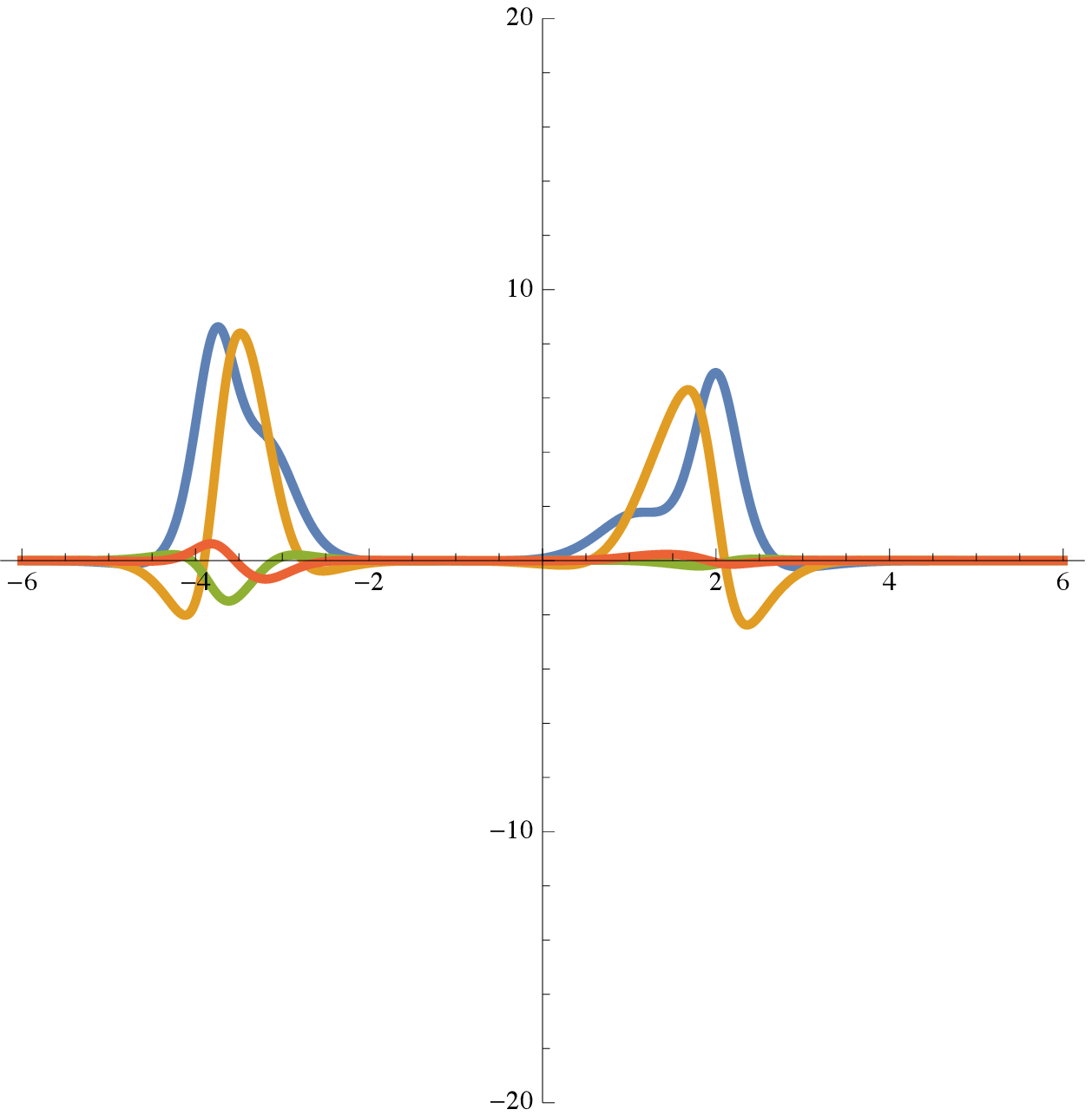}\hfill
\includegraphics[width=1.32in,height=1.32in]{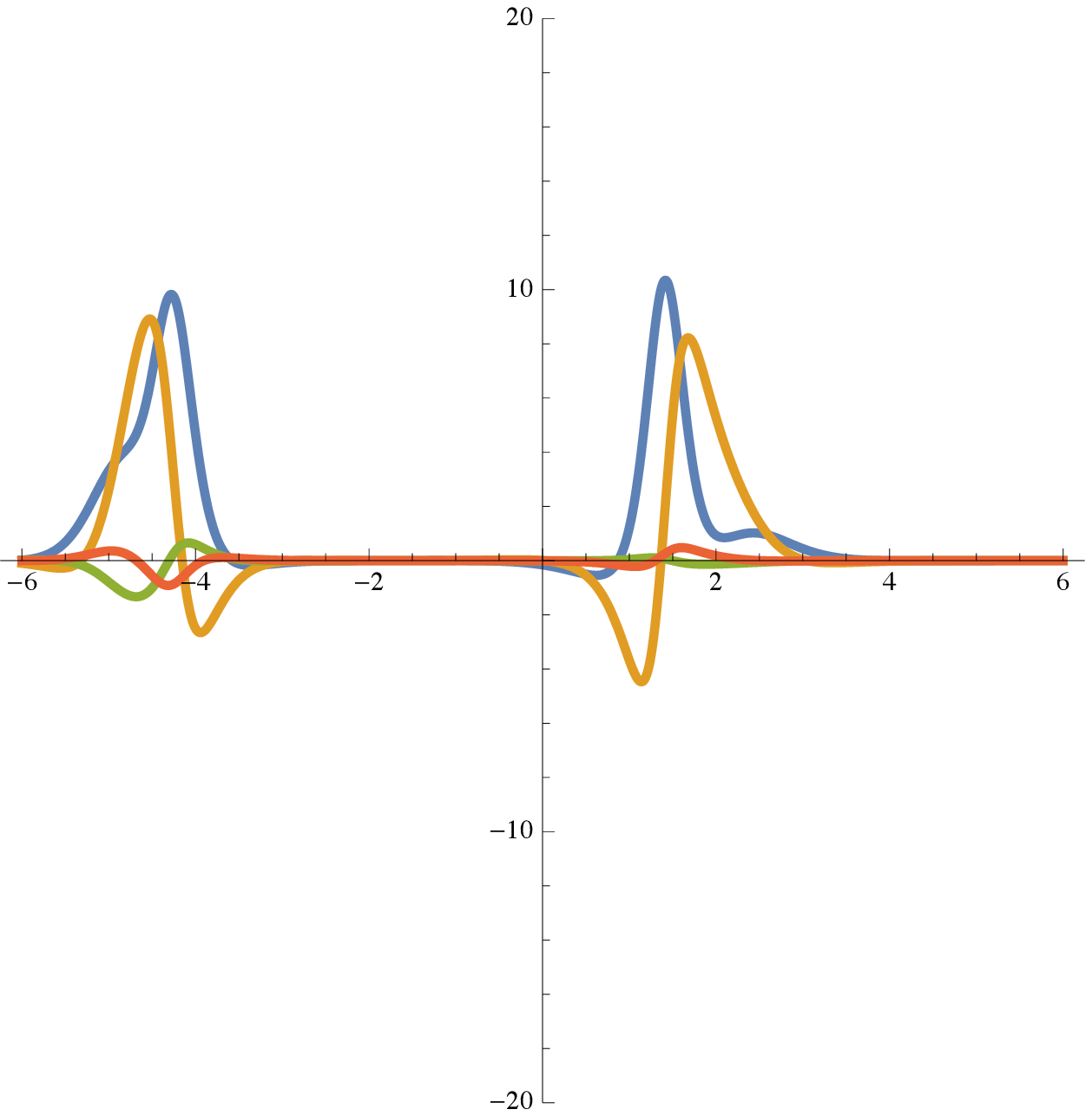}}

\hbox to \textwidth{%
\includegraphics[width=1.32in,height=1.32in]{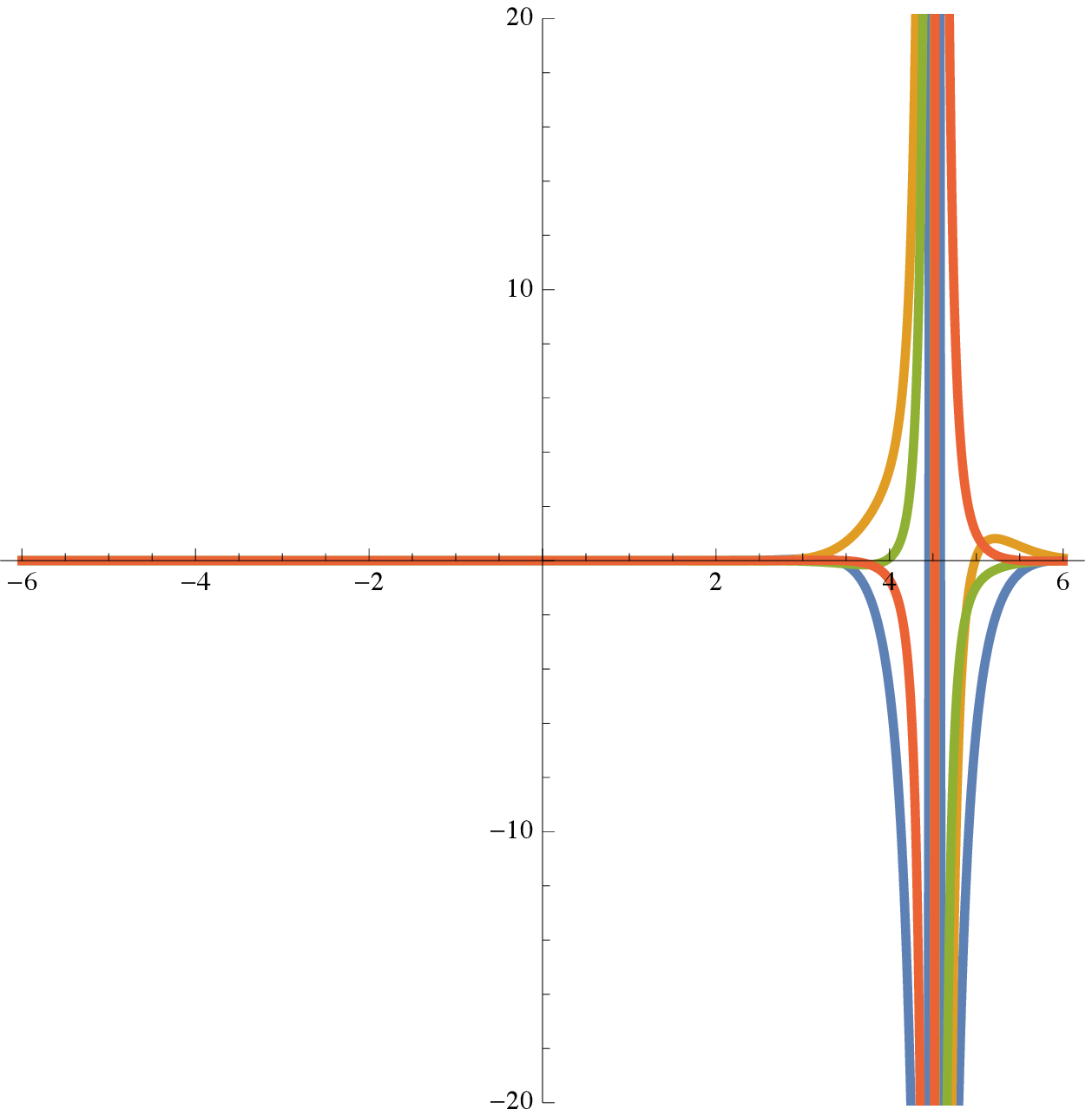}\hfill
\includegraphics[width=1.32in,height=1.32in]{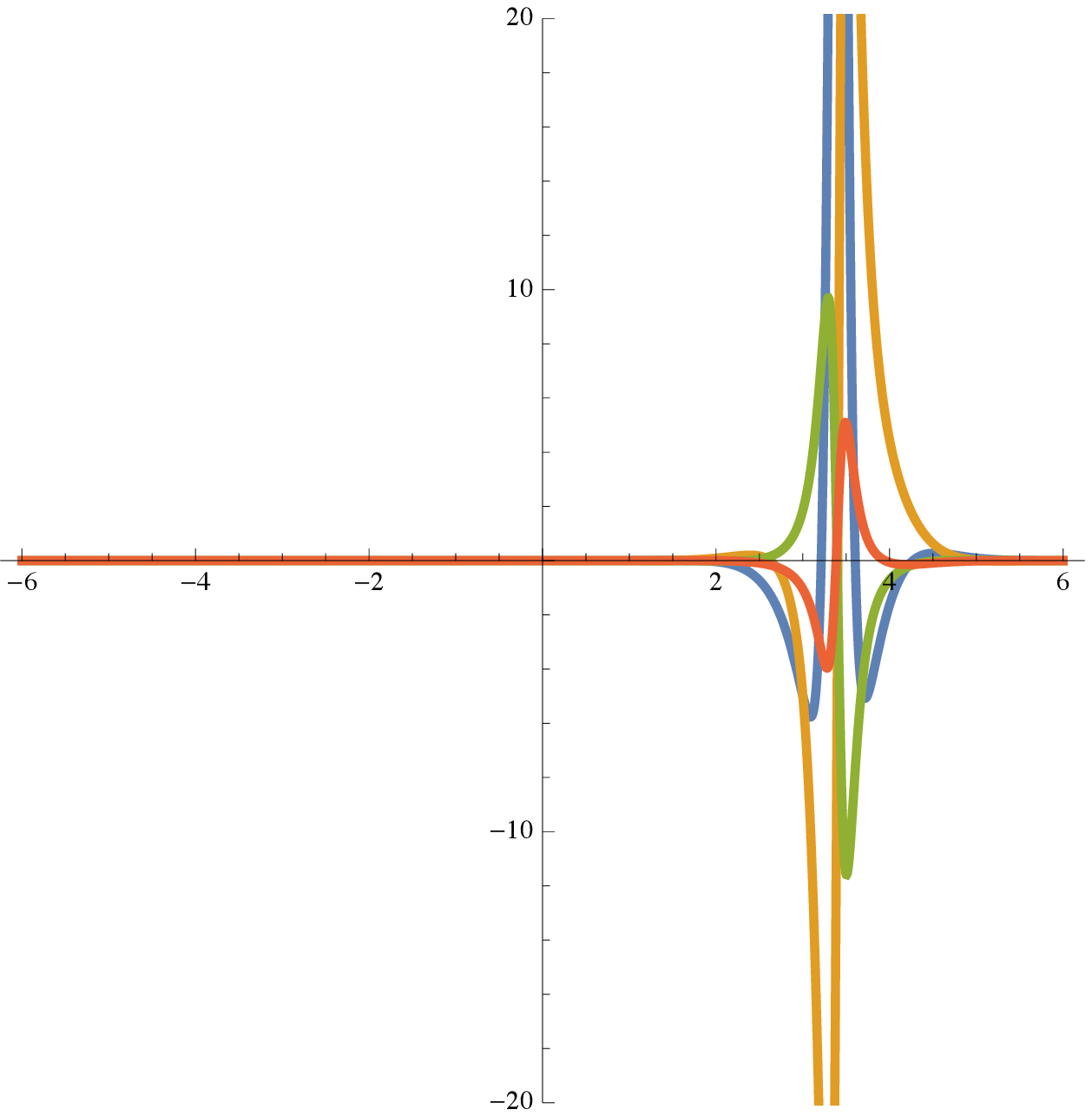}\hfill
\includegraphics[width=1.32in,height=1.32in]{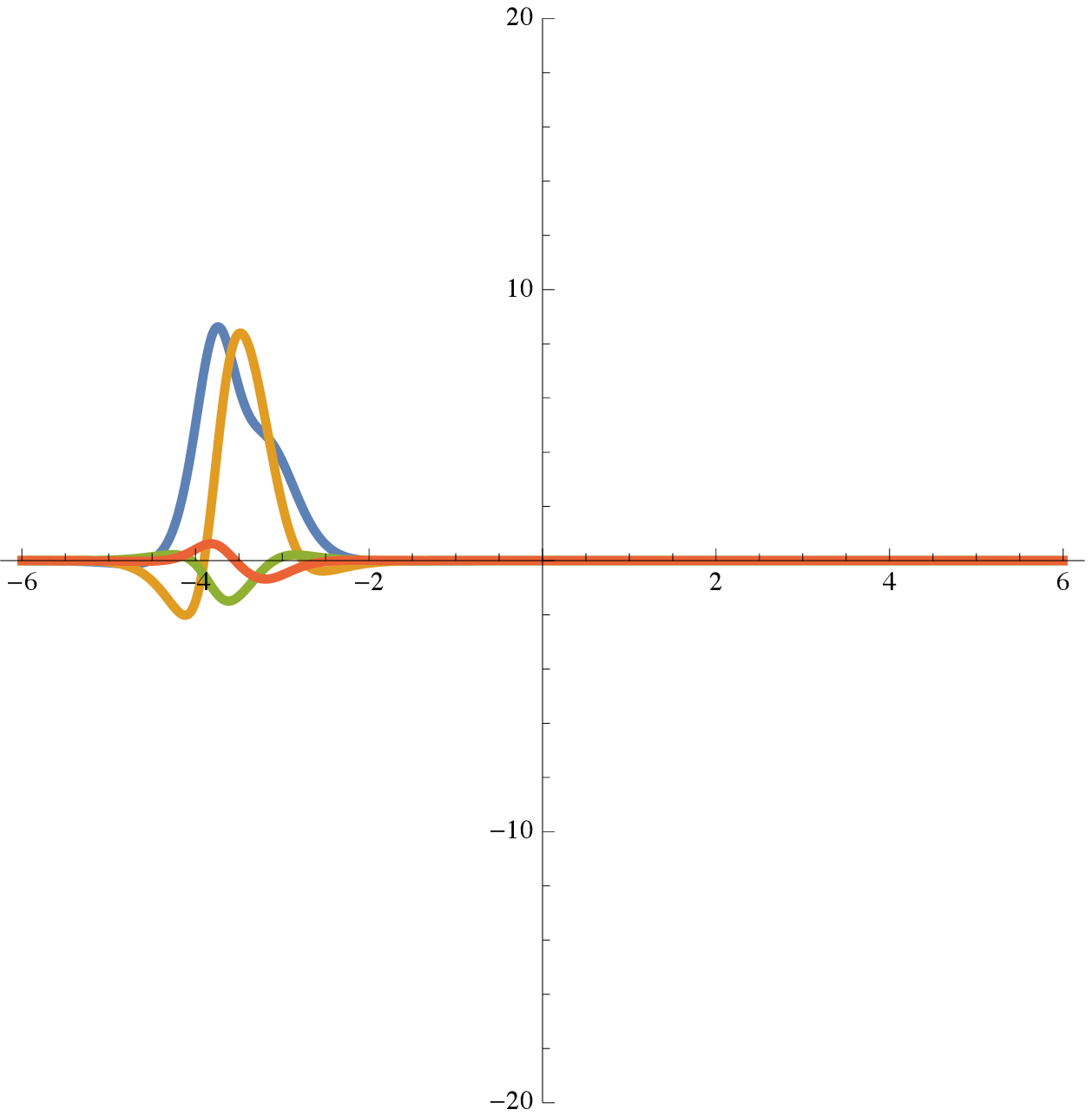}\hfill
\includegraphics[width=1.32in,height=1.32in]{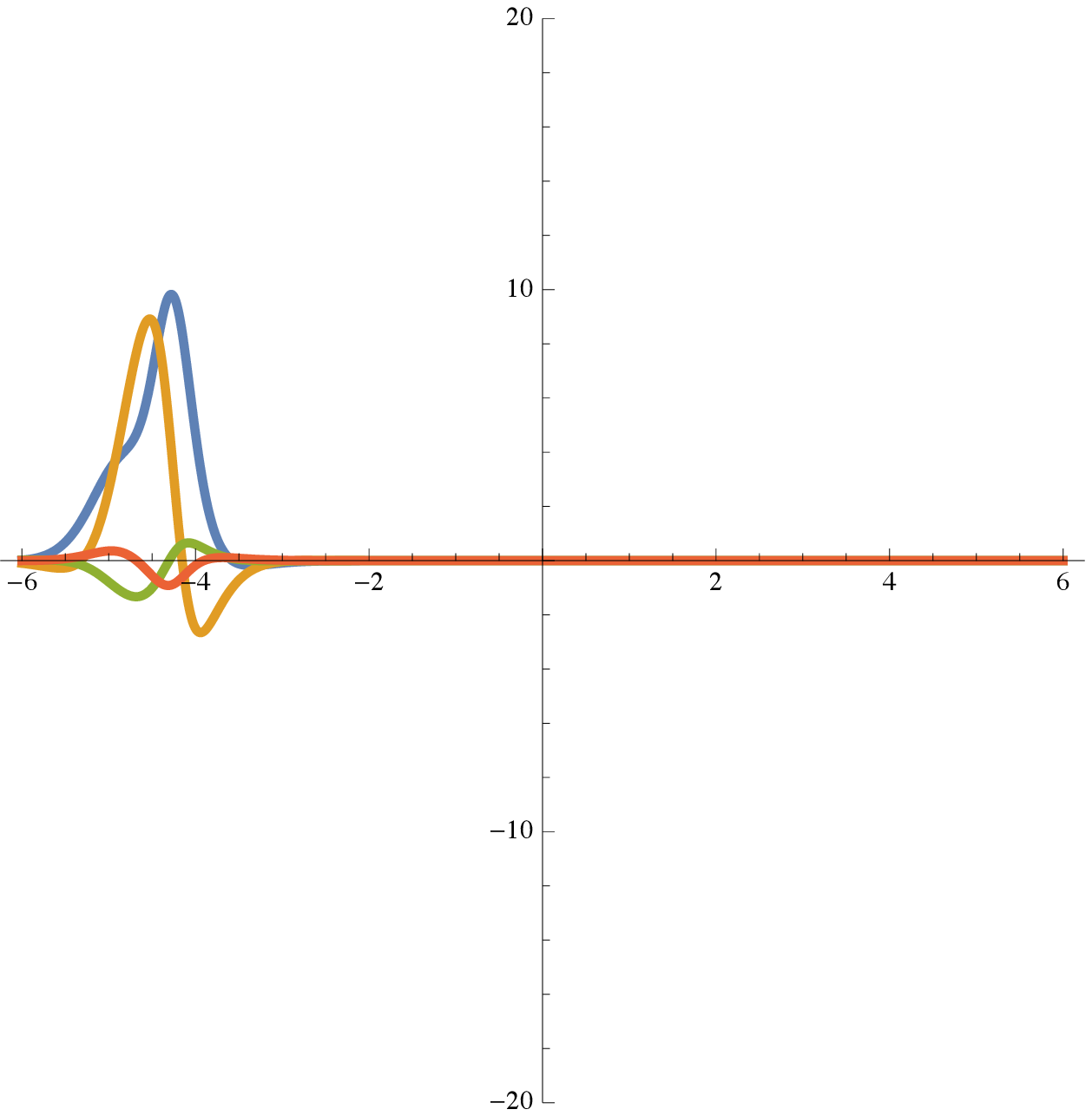}}

\caption{
The middle row of pictures shows the $2$-soliton $u_{\F}(x,t)$ from
  Example~\ref{examp:phaseshift} at times $t=-5$, $t=-4$, $t=3$ and
  $t=4$.  The first row shows the 1-soliton $u_{\alpha_-,\beta_-,\lambda}(x,t)$ and
  the third row shows the 1-soliton $u_{\alpha_+,\beta_+,\lambda}$ at
  the same times.  Notice that at negative times the moving solitary
  wave in the middle row is visually indistinguishable from the one
  above it.  However, at later times, it looks instead like the one
  below it.
Each of the 1-soliton solutions is moving to the left with the same
velocity and so the same horizontal shift relates their centers
at any time.  This is the ``phase shift'', which is
traditionally thought of as being an effect that the interaction of
$u_{\alpha_-,\beta_-,\lambda}$ had upon collision with the stationary
soliton.
  (Notice also that the stationary soliton has
  experienced a phase shift in the opposite direction.  It is further
  to the right after the collision than it was before.)}\label{fig:phaseshift}
\end{figure*}

\begin{example}\label{examp:phaseshift}
Imagine a naive observer watching an animation of the 2-soliton solution $u_{\F}$
where
$\F=\{\phi_{\alpha,\beta,\lambda},\phi_{\hatalpha,\hatbeta,\hatlambda}\}$
with
$$\alpha=\beta=\hatalpha=1,
\ 
\hatbeta=20\qi+\qk,
\ 
\lambda=2+\qi,
\ 
\hatlambda=3/2+\sqrt{3}/2\qi.
$$
Watching for some negative values of $t$, the observer would see a stationary
breather soliton sitting just a bit to the right of $x=0$ and then
another breather soliton approaching from the right at speed $1$.  In
fact, as predicted by Proposition~\ref{prop:PhaseShift} the moving
soliton looks like $u_{\alpha_-,\beta_-,\lambda}$ as shown 
at the top-left of Figure~\ref{fig:phaseshift}. In particular, for
these negative times, the moving localized disturbance in $u_{\F}$ is
located at
$$
x=c_{\alpha_-,\beta_-,\lambda}(t)=\frac{\log \left(31+16 \sqrt{3}\right)}{8}-t.
$$
So, the naive observer might assume that this will
continue to be true in the future.  However, as the images on the
right in Figure~\ref{fig:phaseshift} show, it does not.  Although
there is still a disturbance moving left at speed $1$ at times $t=3$
and $t=4$, it no longer looks like $u_{\alpha_-,\beta_-,\lambda}$.
Instead, it looks like $u_{\alpha_+,\beta_+,\lambda}$, a 1-soliton
whose center is:
$$
x=c_{\alpha_+,\beta_+,\lambda}(t)=\frac{1}{8} \log
\left(\frac{711433-365712 \sqrt{3}}{4434199}\right)-t.
$$
Since they are both traveling to the left at the same speed, their
difference is a constant
\begin{eqnarray*}
c_{\alpha_+,\beta_+,\lambda}-c_{\alpha_-,\beta_-,\lambda}&=&\frac{1}{8}
\log \left(\frac{39608599-22720000
  \sqrt{3}}{855800407}\right)\approx-1.01413,
\end{eqnarray*}
which means that the observer would find the localized disturbance
after the collision is about $1.014$ units farther to the left than
expected.  (Similarly, the phase shift for the stationary soliton
would be positive, which is why it has moved to the right after the interaction.)\end{example}

\begin{remark}\label{rem:phaseshift}
The phase shift experienced by each of the two
solitary waves experience have opposite signs.  Like the classical
commutative KdV solitons, if one is shifted forwards then the other
shifts backwards.  However, unlike the original KdV solitons studied
by Zabusky and Kruskal \cite{ZK}, the phase shift  is not completely
determined by the velocities, as the next example illustrates.
\end{remark}

\begin{example}\label{examp:phaseshiftrange} A one parameter family of interesting
examples is the case in which
$$
\alpha=1 \qquad
\beta=1\qquad
\lambda=1+3\qi 
$$
$$
\hatalpha=\qj\qquad
\hatbeta=-\qi+c\qk\qquad\hbox{and}\qquad
\hatlambda=1+4\qi.
$$
For any value of $c\in\R$ this represents a 2-soliton
solution with one traveling to the right at speed $26$ and another at
speed $47$.  However, the phase shift that they 
experience depends on $c$: $\ln(\gamma)$ is positive for
$|c|>1/\sqrt{11}$, zero if $|c|=1/\sqrt{11}$ and negative if
$|c|<1/\sqrt{11}$.   This dramatically demonstrates the fact that in
the non-commutative case the phase shift depends on the coefficients
$\alpha$, $\beta$, $\hatalpha$ and $\hatbeta$.  That does not happen
in the commutative case, as one can observe by noting that all dependence on these coefficients
cancel from the formula for $\gamma$ in
Proposition~\ref{prop:PhaseShift} if the parameters commute.
\end{example}

\section{Concluding Remarks}

Although it is surely no more than a coincidence that the quaternions and 
the existence of solitary waves
 were both famously discovered beside British canals in the 19th
century, this paper has found interesting results by studying
quaternion-valued solutions to the KdV-equation that can be produced
using the Chen determinant.  This is both a generalization of and a
special case of other published research, as will be explained further below.

When the functions in the KdV-Darboux kernel $\F$ are all either
complex-valued or real-valued, then the solution $u_{\F}$ satisfies
\eqref{eqn:stdKdV} and many of the new results in this paper reduce to
well-known results.  For instance, singularities of soliton and
periodic complex-valued solutions to KdV have been studied in
Reference~\cite{complex} much as we considered the singularities of the
quaternion-valued solutions in Theorem~\ref{thm:sing}.  However, the
generalization to the non-commutative case handled here is
non-trivial.  Without Corollary~\ref{cor:sing} it was not at all
obvious that the singularities that can be found in complex-valued KdV
solutions would necessarily fail to exist in their non-commutative quaternionic
counterparts.  Moreover, as noted in
Example~\ref{examp:phaseshiftrange}, the phase shift in the
quaternion-valued 2-soliton depends on the coefficients as well as on
the exponents, something that is not true in the commutative case.

On the other hand, there are also many published papers which address
non-commutative solutions to integrable PDEs in more general
settings.  The KdV equation is merely one equation in the KdV
hierarchy, which is a reduction of the KP hierarchy.  Quaternions can
be viewed as being a special four-dimensional subspace of larger
matrix groups, which are then special cases of abstract
non-commutative rings.  With all of that in mind, the methods utilized
herein can be seen as simply being a special case of the more general
approaches found in papers such as References \cite{NC3,EGR,NC1,NC2}.  However,
limiting ourselves to this manageable situation allows us to study
details that would be difficult to notice and demonstrate in those
more general settings.  For instance, we were able to show that it was
sufficient to consider exponential functions of the form $e^{\lambda x
  + \lambda^3 t}$ where $\lambda=\lambda_0+\lambda_1\qi$ is a complex
number with non-negative components $\lambda_i$ (cf.\ Lemma~\ref{lem:complexlambda}).  Doing so was
essential in being able to state Theorem~\ref{thm:sing} (the result
about singularities) in an easily understandable way.  And, since $\H$
is a four-dimensional vector space, we were able to graph the
corresponding solutions as a super-position of graphs of four
real-valued functions.  Moreover, it is interesting to know that
quaternion-valued solutions to KdV can be written in terms of the Chen
determinant, a result that presumably would not generalize to
solutions with values in arbitrary non-commutative rings.

There is one relatively recent paper by Huang \cite{QKdV} which, like this one,
specifically addresses quaternion-valued soliton solutions to KdV.  However, Huang's paper only considers a small subset of
the solution types that were addressed above.  In particular, it only
looks at solutions that would come from KdV-Darboux kernels made up of
functions of the form $\phi_{\alpha,\beta,\lambda}$ where
$\alpha,\beta\in\H$ and $\lambda\in\R$.  Thus, it does not include
breather, rational, or periodic solutions.  Finally, although
Reference \cite{QKdV}
 does discuss ``interactions'' of the solutions, that term has a
very different meaning in that paper.  Here,
Proposition~\ref{prop:LR} is viewed as a means to understand the
interaction of different solutions, with special emphasis on the
$2$-soliton solutions as representing the interaction between two
separate $1$-solitons (cf. Proposition~\ref{prop:PhaseShift}).  But in
Reference \cite{QKdV}, ``interaction'' refers to an algebraic structure that
Huang studies whereby two $n$-solitons can be combined to produce
another $n$-soliton (for the same fixed value of $n$).

There are many interesting examples which can be made using the
methods described above that we did not have the time or space to
present here. For instance, there are $2$-soliton solutions that
look like a combination of non-singular $1$-solitons at negative times
 and then like a pair of singular $1$-solitons for positive times (as
if the collision produced the singularities).   

There are also open problems that we have not been able to fully
address.  Theorem~\ref{thm:sing} completely determines when a solution
of the form
$u_{\alpha,\beta,\lambda}$ is singular.  However, it is not entirely
clear when combinations of such solutions are singular.  Although
Propositions~\ref{prop:LR} and \ref{prop:PhaseShift} may tell us when
they \textit{look} singular, as Remark~\ref{rem:non-sing-hard} shows,
that is not quite the same as actually being singular.  We do not yet
have any prediction for what $u_{\F}$ will look like if
$\F=\{\phi_{\alpha_1,\beta_1,\lambda_1},\ldots,\phi_{\alpha_n,\beta_n,\lambda_n}\}$
with $n>1$ and $\lambda_i=\lambda_{i1}\qi$ purely imaginary.  Most
intriguingly, since the solutions above were written in terms of the
Chen determinant of a Wronskian matrix, it would be interesting to
know whether there is a quaternionic analogue of the $\tau$-function
and Hirota's bilinear approach to soliton equations.

\section*{Acknowledgments} This paper grew out 
of a student research project at the College of Charleston
conducted in Summer 2018.  The first author's research was supported by a grant
from the School of Sciences and Mathematics, the third author's
research was supported by the University of Charleston, South Carolina
(the Graduate School at the College of Charleston), and the fourth
author's research was supported by a Summer Undergraduate Research
with Faculty (SURF) grant from the Office of Undergraduate Research
and Creative Activities.  Assistance from all of these offices and
from the Department of Mathematics at the College of Charleston is
greatly appreciated.


\begin{thebibliography}{10}\frenchspacing
\bibitem{QMP1}{S. L. Adler, \textit{Quaternionic Quantum Mechanics and Quantum
Fields} (Oxford University Press, New York, 1995). }

\bibitem{BKY}{N. Benes, A. Kasman, and K. Young, \textit{On Decompositions of
  the KdV 2-Soliton}, Journal of Nonlinear Science
16 (2006), no. 2, 179--200}

\bibitem{NC2}{S. Carillo, M. Lo Schiavo, E. Porten and C. Schiebold,
  \textit{A novel noncommutative KdV-type equation, its recursion
    operator, and solitons}, J. Math. Phys. 59, 043501 (2018).}

\bibitem{Chen}{Chen L., \textit{Definition of Determinant and
    Cramer Solutions over the Quaternion Field}, Acta Mathematica
  Sinica, New Series (1991) Vol. 7 No. 2 pp. 171-180}

\bibitem{quaternion1}{J.H. Conway and D.A. Smith, \textit{On Quaternions and
    Octonions}, AK Peters/ CRC Press (2003)}
    

  
\bibitem{EGR}{P.\ Etingof, I.\ Gelfand, and
V.\ Retakh, \textit{Factorization of differential operators,
  quasideterminants, and nonabelian Toda field equations},
Mathematical Research Letters, Volume 4 (1997)
Number 3 pp. 413--425 
}



\bibitem{QDiff1}{J. D. Gibbon, \textit{A quaternionic structure in the
  three-dimensional Euler and ideal magnetohydrodynamics equation}
  Physica D 166, 17-28 (2002).}
  
  \bibitem{quaternion2}{P.R. Girard, 
\textit{Quaternions, Clifford algebras and relativistic physics}. 
Translated from the 2004 French original. Birkh\"auser Verlag, Basel, 2007.}

\bibitem{NC3}{M. Hamanaka and H. Okabe, \textit{Soliton Scattering in
    Noncommutative Spaces}, Theor. Math. Phys. (2018) 197:
  1451. }

\bibitem{Hamilton}{W.R. Hamilton, \textit{On Quaternions; or on a new
    System of Imaginaries in Algebra}, Philosophical Magazine,
  vol. xxv (1844), pp. 10--13}

\bibitem{QMP3}{D.D. Holm, \textit{Geometric Mechanics: Part 2,
    Rotating, Translating and Rolling (Pt. II)}, Imperial College
  Press (2008)}
  
\bibitem{QKdV}{S.\ Huang, \textit{An operator method for finding exact
  solutions to vector Korteweg-de Vries equations} Journal of
  Mathematical Physics, Vol. 44 No. 3 (2003) pp. 1357--1388}

\bibitem{GOST}{A. Kasman, \textit{Glimpses of Soliton Theory}, Vol. 54,
  American Mathematical Society, 2010.}

\bibitem{KasOp}{A.\ Kasman, \textit{On factoring an operator using elements of
  its kernel}, Communications in Algebra 45 (2017), no. 4, 1443-1451}
  
\bibitem{QMP4}{K. Kenatani, \textit{Understanding Geometric Algebra: Hamilton, Grassmann,
  and Clifford for Computer Vision and Graphics}, AK Peters (2015)}
  
\bibitem{KdV}{D. J. Korteweg and G. de Vries, \textit{On the change of
  form of long waves advancing in a rectangular canal, and on a new
  type of long stationary waves}, The London, Edinburgh, and Dublin
  Philosophical Magazine and Journal of Science 39 (1895), no. 240,
  422-443}
  
\bibitem{QDiff4}{K.I. Kou, W-K. Liu, and Y-H. Xia, \textit{Solve the
    linear quaternion-valued differential equations having multiple
    eigenvalues}, Journal of Mathematical Physics 60, 023510 (2019)}
    
\bibitem{Kyrchei}{I.I. Kyrchei, \textit{Cramer's rule
for quaternionic systems of linear equations} Journal of Mathematical
  Sciences, Vol. 155, No. 6, pp. 839--858 (2008).}

\bibitem{QMP2}{S. Leo, G. Ducati, and C. Nishi, \textit{Quaternionic
  potentials in non-relativistic quantum mechanics}, J. Phys. A:
  Math. Gen. 35(26), 5411 (2001)}
  
\bibitem{QDiff2}{S. Leo and G. Ducati, \textit{Solving simple quaternionic
  differential equations}, J. Math. Phys. 44(5), 2224 (2003). }

\bibitem{complex}{Y.C. Li, \textit{Simple explicit formulae for finite
  time blow up solutions to the complex KdV equation}, Chaos Solitons
  and Fractals 39 (2009) 369--372.}
  
\bibitem{KdV-hist}{R. Palais, \textit{Symmetries of Solitons},
  Bulletin (New Series) of the
American Mathematical Society,
Volume 34, Number 4, October 1997, Pages 339--403}

\bibitem{NC1}{L.D. Paniak, \textit{Exact Noncommutative KP and KdV
    Multi-solitons}, https://arxiv.org/abs/hep-th/0105185}
    
\bibitem{Serna}{A.\ Serna, \textit{Quaternion-valued Rational Solutions to the
  KdV Equation}, Master's Thesis, College of Charleston, 2019}
  
\bibitem{QDiff3}{P. Wilczynski, \textit{Quaternionic-valued ordinary
  differential equations. II. Coinciding sectors},
  J. Differ. Equations 252, 4503-4528 (2012)}
  
\bibitem{ZK}{N. Zabusky and M. Kruskal, \textit{Interaction of
    ``Solitons'' in a Collisionless Plasma and the Recurrence of
    Initial States} Phys. Rev. Lett. 15, 240 (1965)
}


\bibitem{QMatInv}{F. Zhang, \textit{Quaternions and matrices of
  quaternions}. Linear Algebra and Its Applications, 251, pp.\ 21-57. (1997)}














\end{thebibliography}
\end{document}